\newtheorem{definition}{Definition}
\newtheorem{theorem}{Theorem}
\newtheorem{lemma}{Lemma}
\newtheorem{fact}{Fact}
\newtheorem{assumption}{Assumption}
\newtheorem{proof}{Proof}
\begin{document}
\title{The Characterization of \\ Abstract Truth and its Factorization}
\author{Robert E. Kent}
\maketitle
\tableofcontents

\begin{abstract}
Human knowledge is made up of the conceptual structures of many communities of interest.
In order to establish coherence in human knowledge representation,
it is important to enable communcation between the conceptual structures of different communities
The conceptual structures of any particular community is representable in an ontology.
Such a ontology provides a formal linguistic standard for that community.
However,
a standard community ontology is established for various purposes,
and makes choices that force a given interpretation, 
while excluding others that may be equally valid for other purposes.
Hence, a given representation is relative to the purpose for that representation.
Due to this relativity of represntation,
in the larger scope of all human knowledge
it is more important to standardize methods and frameworks for relating ontologies
than to standardize any particular choice of ontology.
The standardization of methods and frameworks is called the semantic integration of ontologies.
\end{abstract}

This paper,
whose orientation is given in the previous paper ``Conceptual: Institutions in a Topos'',
makes two advances:
it develops the notion of the lattice of theories (LOT) facterization,
and
it offers an order-enriched axiomatization that characterizes conceptual structures and the lattice of theories.

\section{Preliminaries\label{sec:preliminaries}}

\subsection{Elements of Order\label{subsec:elements:of:order}}

A \emph{preorder} 
$\mathbf{A} = \langle A, \leq_{\mathbf{A}} \rangle$
consists of a set $A$ and a binary relation $\leq_{\mathbf{A}} \subseteq A \times A$
that is reflexive and transitive:
$a \leq_{\mathbf{A}} a$ for all elements $a \in A$,
and
$a_1 \leq_{\mathbf{A}} a_2$ and $a_2 \leq_{\mathbf{A}} a_3$ implies $a_1 \leq_{\mathbf{A}} a_3$
for all triples of elements $a_1, a_2, a_3 \in A$.
Every preorder $\mathbf{A}$ has an associated equivalence relation 
$\equiv_{\mathbf{A}}$ on $A$ defined by
$a_1 \equiv_{\mathbf{A}} a_2$ 
when $a_1 \leq_{\mathbf{A}} a_2$ and $a_2 \leq_{\mathbf{A}} a_1$
for all pairs of elements $a_1, a_2 \in A$.
A partially ordered set (\emph{poset}) is a preorder that is antisymmetric:
$a_1 \equiv_{\mathbf{A}} a_2$ implies $a_1 = a_2$
for all pairs of elements $a_1, a_2 \in A$.
Any preorder 
$\mathbf{A}$
has an associated \emph{quotient} poset
$\mathsf{quo}(\mathbf{A}) = [\mathbf{A}] 
= \langle A/{\equiv_{\mathbf{A}}}, \leq_{[\mathbf{A}]} \rangle$,
where
$[a_1] \leq_{[\mathbf{A}]} [a_1]$ when
$a_1 \equiv_{\mathbf{A}} a_2$.

A \emph{monotonic (isotonic) function} 
$\mathbf{f} : \mathbf{A} \rightarrow \mathbf{B}$
from preorder $\mathbf{A}$ to preorder $\mathbf{B}$
is a function $f : A \rightarrow  B$ that preserves (preserves and respects) order:
$a_1 \leq_{\mathbf{A}} a_2$ implies (iff) $f(a_1) \leq_{\mathbf{A}} f(a_2)$
for all pairs of source elements $a_1, a_2 \in A$.
The composition and identities of monotonic functions
can be defined in terms of the underlying sets and functions.
Let $\mathsf{Ord}$ 
denote the category of preorders and monotonic functions\footnote{Also denoted $\mathsf{Pre}$}.
There is an underlying functor
$|\mbox{-}| : \mathsf{Ord} \rightarrow \mathsf{Set}$
that gives the underlying set of a preorder and the underlying function of a monotonic function.
For every preorder $\mathbf{A}$,
there is a surjective \emph{canon}(ical) isotonic function
$[\mbox{-}]_{\mathbf{A}} : \mathbf{A} \rightarrow \mathsf{quo}(\mathbf{A})$
where $[a]_{\mathbf{A}}$ is the equivalence class of $a \in A$.
For every monotonic function $\mathbf{f} : \mathbf{A} \rightarrow \mathbf{B}$,
there is a monotonic function 
$\mathsf{quo}(\mathbf{f}) = [\mathbf{f}] 
: \mathsf{quo}(\mathbf{A}) \rightarrow \mathsf{quo}(\mathbf{B})$
that satisfies the naturality condition
$\mathbf{f} \cdot [\mbox{-}]_{\mathbf{B}} = [\mbox{-}]_{\mathbf{A}} \cdot [\mathbf{f}]$.

Let $\mathsf{Ord}_{=} \subset \mathsf{Ord}$ 
denote the full subcategory of posets and monotonic functions\footnote{Also denoted $\mathsf{Pos}$}.
There is an inclusion functor
$\mathsf{incl} : \mathsf{Ord}_{=} \rightarrow \mathsf{Ord}$
and a quotient functor
$\mathsf{quo} : \mathsf{Ord} \rightarrow \mathsf{Ord}_{=}$.
There is a canon(ical) natural transformation
$\eta 
: \mathsf{id}_{\mathsf{Ord}} \Rightarrow \mathsf{quo} \circ \mathsf{incl}
: \mathsf{Ord} \rightarrow \mathsf{Ord}$
whose $\mathbf{A}^{\mathrm{th}}$-component is the surjective canonical isometry
$[\mbox{-}]_{\mathbf{A}} : \mathsf{A} \rightarrow \mathsf{quo}(\mathbf{A})$.
The quotient functor is left adjoint to the inclusion functor 
$\mathsf{quo} \dashv \mathsf{incl}$
with counit being an isomorphism and unit being the canon.
This adjunction is a  reflection: 
$\mathsf{Ord}_{=}$ is a reflective subcategory of $\mathsf{Ord}$ with the quotient functor being the reflector (Figure~\ref{order-fibration}).

\begin{figure}
\begin{center}
\setlength{\unitlength}{0.55pt}
\begin{picture}(120,120)(-25,10)
\put(-30,75){\makebox(60,30){$\mathsf{Ord}$}}
\put(90,75){\makebox(60,30){$\mathsf{Ord}_{=}$}}
\put(30,-15){\makebox(60,30){$\mathsf{Set}$}}
\put(45,100){\makebox(30,20){\footnotesize{$\mathsf{quo}$}}}
\put(28,80){\makebox(30,20){\scriptsize{$[\mbox{-}]$}}}
\put(45,80){\makebox(30,20){\footnotesize{$\dashv$}}}
\put(67,80){\makebox(30,20){\scriptsize{$\cong$}}}
\put(45,58){\makebox(30,20){\footnotesize{$\mathsf{incl}$}}}
\put(-3,35){\makebox(30,20){\footnotesize{$|\mbox{-}|$}}}
\put(93,35){\makebox(30,20){\footnotesize{$|\mbox{-}|$}}}
\put(30,102){\vector(1,0){60}}
\put(90,78){\vector(-1,0){60}}
\put(10,75){\vector(2,-3){40}}
\put(110,75){\vector(-2,-3){40}}
\end{picture}
\end{center}
\caption{Order Fibration}
\label{order-fibration}
\end{figure}
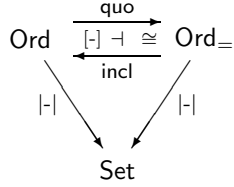

Given two preorders $\mathbf{A}_1$ and $\mathbf{A}_2$,
the \emph{binary product} is the preorder 
$\mathbf{A}_1 {\times} \mathbf{A}_2
= \langle A_1 {\times} A_2, \leq \rangle$
where $(a_1,a_2) \leq (a_1^\prime,a_2^\prime)$
when $a_1 \leq_{\mathbf{A}_1} a_1^\prime$ and $a_2 \leq_{\mathbf{A}_2} a_2^\prime$.
There are two component projection monotonic functions
$\pi_1 : \mathbf{A}_1 {\times} \mathbf{A}_2 \rightarrow \mathbf{A}_1$
and $\pi_2 : \mathbf{A}_1 {\times} \mathbf{A}_2 \rightarrow \mathbf{A}_2$.
This is a categorical product,
since given any pair of monotonic functions
$\mathbf{f}_1 : \mathbf{C} \rightarrow \mathbf{A}_1$,
and $\mathbf{f}_2 : \mathbf{C} \rightarrow \mathbf{A}_2$ 
with common source,
there is a unique monotonic function
$\mathbf{f} = (\mathbf{f}_1,\mathbf{f}_2) : \mathbf{C} \rightarrow \mathbf{A}_1 {\times} \mathbf{A}_2$
such that
$\mathbf{f} \cdot \pi_1 = \mathbf{f}_1$
and
$\mathbf{f} \cdot \pi_2 = \mathbf{f}_2$.
This definition can be extended to any finite number of preorders.
The finite product of posets is a poset.
Any one element set forms a poset $\mathbf{1}$ that is the nullary product,
since for any preorder $\mathbf{A}$ there is a unique (constant) monotonic function
$\mathbf{!} : \mathbf{A} \rightarrow \mathbf{1}$.
Given any parallel pair of monotonic functions
$\mathbf{f},\mathbf{g} : \mathbf{A} \rightarrow \mathbf{B}$,
the \emph{equalizer} is the subpreorder
$\mathbf{E} = \langle E, \leq \rangle$
with the $\mathsf{Set}$-equalizer $E = \{ a \in A \mid f(a) = g(a) \} \subseteq A$ 
and the induced subset order.
The inclusion monotonic function 
$\mathrm{incl} : \mathbf{E} \rightarrow \mathbf{A}$
and the composite
$\mathrm{incl} \cdot \mathbf{f} = \mathrm{incl} \cdot \mathbf{g} 
: \mathbf{E} \rightarrow \mathbf{B}$
form the limiting cone.
Hence,
the categories $\mathsf{Ord}$ and $\mathsf{Ord}_{=}$ are finite complete,
and the underlying functors preserve these limits. 

\subsection{Fibrations\label{subsec:fibrations}}

In this section consider a functor
\[\mathsf{P} : \mathsf{E} \rightarrow \mathsf{B}.\]
A \emph{fiber pair} $(b, E_1)$
consists of a $\mathsf{B}$-morphism $b : B_2 \rightarrow B_1$ 
and an $\mathsf{E}$-object $E_1 \in \mathsf{P}^{-1}B_1$ over $B_1$.
An $\mathsf{E}$-morphism $e : E_2 \rightarrow E_1$ is \emph{cartesian} for $(b, E_1)$
when
(1) $\mathsf{P}(e) = b$ and 
(2) for any $\mathsf{E}$-morphism $v : E \rightarrow E_1$
and any $\mathsf{B}$-morphism $x : \mathsf{P}(E) \rightarrow B_2$
for which $x \cdot_{\mathsf{B}} b = \mathsf{P}(v)$,
there is a unique $\mathsf{E}$-morphism $u : E \rightarrow E_2$
such that $u \cdot_{\mathsf{E}} e = v$ and $\mathsf{P}(u) = x$. 
Any two cartesian $\mathsf{E}$-morphisms for $(b, E_1)$
are isomorphic;
that is,
if $e : E_2 \rightarrow E_1$ and $e^\prime : E_2^\prime \rightarrow E_1$
are cartesian for $(b, E_1)$,
then $E_2 \cong E_2^\prime$
and there are inverse $\mathsf{E}$-morphisms in the $B_2$-fiber,
$u^\prime : E_2 \rightarrow E_2^\prime$ and $u : E_2^\prime \rightarrow E_2$,
with
$u,u^\prime \in \mathsf{P}^{-1}B_2$,
$u^\prime \cdot_{\mathsf{E}} u = 1_{E_2}$,
$u \cdot_{\mathsf{E}} u^\prime = 1_{E_2^\prime}$,
$u^\prime \cdot_{\mathsf{E}} e^\prime = e$
and $u \cdot_{\mathsf{E}} e = e^\prime$.
Identities are cartesian;
that is,
any $\mathsf{E}$-identity $1_E : E \rightarrow E$ is cartesian for $(1_{\mathsf{P}(E)}, E)$.
The composition of cartesian morphisms is cartesian;
that is,
if $e_2 : E_3 \rightarrow E_2$ is cartesian for $(b_2, E_2)$ 
and $e_1 : E_2 \rightarrow E_1$ is cartesian for $(b_1, E_1)$,
then $e_2 \cdot_{\mathsf{E}} e_1 : E_3 \rightarrow E_1$ is cartesian for $(b_2 \cdot_{\mathsf{B}} b_1, E_1)$.
The functor
$\mathsf{P} : \mathsf{E} \rightarrow \mathsf{B}$
is a \emph{fibration} when there is a cartesian $\mathsf{E}$-morphism
for any fiber pair $(b, E_1)$.
Then we say that 
$\mathsf{E}$ is fibered over $\mathsf{B}$,
with $\mathsf{B}$ the base category and $\mathsf{E}$ the total category of the fibration $\mathsf{P}$.

When the cartesian morphisms for a fibration 
$\mathsf{P} : \mathsf{E} \rightarrow \mathsf{B}$
are chosen,
$\mathsf{P}$ is said to have a \emph{cleavage}.
Hence,
a cleavage for $\mathsf{P}$ maps each fiber pair $(b, E_1)$
to a cartesian $\mathsf{E}$-morphism 
$\gamma(b, E_1) : E_2 \rightarrow E_1$\footnote{A more compact notation for the cleavage of a fiber pair $(b, E_1)$ is $b^\ast_{E_1} : b^\ast(E_1) \rightarrow E_1$.}.
The cleavage $\gamma$ is a \emph{splitting} of the fibration $\mathsf{P}$
when
it satisfies the following two conditions:
(1) for any $\mathsf{B}$-object $B$
with $E \in \mathsf{P}^{-1}B$,
the cleavage of $E$ along the $\mathsf{B}$-identity $1_B$ is the $\mathsf{E}$-identity 
$\gamma(1_B, E) = 1_E \in \mathsf{P}^{-1}B$;
and
(2) for any $\mathsf{B}$-composable pair  
$b_2 : B_3 \rightarrow B_2$ and $b_1 : B_2 \rightarrow B_1$
with $\mathsf{E}$-object $E_1 \in \mathsf{P}^{-1}B_1$,
if $E_2 \in \mathsf{P}^{-1}B_2$ is the source of $\gamma(b_1, E_1)$
then the cleavage of $E_1$ along the $\mathsf{B}$-composite $b_2 \cdot_{\mathsf{B}} b_1$ 
is the $\mathsf{E}$-composite
$\gamma(b_2 \cdot_{\mathsf{B}} b_1, E_1) 
= \gamma(b_2, E_2) \cdot_{\mathsf{E}} \gamma(b_1, E_1)$.
A fibration $\mathsf{P}$ is said to be \emph{split} when it has a splitting.

Some special concepts and notation are useful (Figure~\ref{special-notation-concepts}).
For any $\mathsf{E}$-morphism $e : E_2 \rightarrow E_1$,
the \emph{lift} of $e$ is the cleavage
$\sharp_{e} \doteq \gamma(\mathsf{P}(e),E_1) : \Delta(e) \rightarrow E_1$
for the fiber pair $(\mathsf{P}(e),E_1)$,
the \emph{apex} of $e$ is the cleavage source $\Delta(e)$,
and the \emph{gap} of $e$ is the unique vertical $\mathsf{E}$-morphism
$\flat_{e} : E_2 \rightarrow \Delta(e)$
such that $e = \flat_{e} \cdot_{\mathsf{E}} \sharp_{e}$.
We can think of the gap as the embedding of the source $E_2$ into the apex $\Delta(e)$.
Thus,
any $\mathsf{E}$-morphism $e : E_2 \rightarrow E_1$
factors as 
$e = \flat_{e} \cdot_{\mathsf{E}} \sharp_{e} 
: E_2 \rightarrow \Delta(e) \rightarrow E_1$
where $\flat_{e}$ is a vertical $\mathsf{E}$-morphism 
and $\sharp_{e}$ is a cartesian $\mathsf{E}$-morphism.
This is called the \emph{cleavage factorization} of $e$.
The cleavage factorization of the gap $\flat_{e}$ has
apex $\Delta(\flat_{e}) = \Delta(e)$,
gap $\flat_{\flat_{e}} = \flat_{e} : E_2 \rightarrow \Delta(e)$ and
lift $\sharp_{\flat_{e}} = 1_{\Delta(e)} : \Delta(e) \rightarrow \Delta(e)$.
The cleavage factorization of the lift $\sharp_{e}$ has
apex $\Delta(\sharp_{e}) = \Delta(e)$,
gap $\flat_{\sharp_{e}} = 1_{\Delta(e)} : \Delta(e) \rightarrow \Delta(e)$ and
lift $\sharp_{\sharp_{e}} = \sharp_{e} : \Delta(e) \rightarrow E_1$.
For any $\mathsf{E}$-object $E$,
the cleavage factorization of the identity $1_E$ has
apex $\Delta(1_E) = E$, and
gap and lift $\flat_{1_E} = \sharp_{1_E} = 1_E : E \rightarrow E$.
For any composable pair of $\mathsf{E}$-morphisms
$e_2 : E_3 \rightarrow E_2$ and
$e_1 : E_2 \rightarrow E_1$,
the \emph{diagonal} of $(e_1,e_2)$ is the unique $\mathsf{E}$-morphism
$\delta_{e_1,e_2} : E_3 \rightarrow \Delta(e_1)$
such that
$\delta_{e_1,e_2} \cdot_{\mathsf{E}} \sharp_{e_1} = e_2 \cdot_{\mathsf{E}} e_1$
and $\mathsf{P}(\delta_{e_1,e_2}) = \mathsf{P}(e_2)$,
Clearly,
$\delta_{e_1,e_2} = e_2 \cdot_{\mathsf{E}} \flat_{e_1}$.
When $\mathsf{P}$ is split,
$\delta_{e_1,e_2} 
= \flat_{e_2 {\cdot} \flat_{e_1}} \cdot_{\mathsf{E}} \sharp_{e_2 {\cdot} \flat_{e_1}}
= \flat_{e_2 {\cdot} e_1} \cdot_{\mathsf{E}} \sharp_{e_2 {\cdot} \flat_{e_1}}$,
$\flat_{e_2 {\cdot}  e_1} = \flat_{e_2 {\cdot} \flat_{e_1}}$ and
$\sharp_{e_2 {\cdot} e_1} = \sharp_{e_2 {\cdot} \flat_{e_1}} \cdot_{\mathsf{E}} \sharp_{e_1}$.
If an $\mathsf{E}$-morphism $e : E_3 \rightarrow E_1$
factors as $e = e_2 \cdot_{\mathsf{E}} e_1 : E_3 \rightarrow E_2 \rightarrow E_1$
where $e_1$ is cartesian,
then $\Delta(e) \cong \Delta(e_2)$.

\begin{figure}
\begin{center}
\begin{tabular}{c@{\hspace{30pt}}c@{\hspace{50pt}}c}
\setlength{\unitlength}{1.0pt}
\begin{picture}(60,95)(0,0)
\put(0,30){\begin{picture}(120,60)(0,0)
\put(-30,-15){\makebox(60,30){$E_2$}}
\put(30,-15){\makebox(60,30){$E_1$}}
\put(-30,30){\makebox(60,30){$\Delta(e)$}}
\put(15,-10){\makebox(30,30){\footnotesize{$e$}}}
\put(-22,7.5){\makebox(30,30){\footnotesize{$\flat_{e}$}}}
\put(22,13){\makebox(30,30){\footnotesize{$\sharp_{e}$}}}
\put(15,0){\vector(1,0){30}}
\put(0,10){\vector(0,1){25}}
\put(12,36){\vector(4,-3){36}}
\end{picture}}
\qbezier[50](-15,15)(30,15)(75,15)
\put(-30,-15){\makebox(60,30){$B_2$}}
\put(30,-15){\makebox(60,30){$B_1$}}
\put(15,-10){\makebox(30,30){\footnotesize{$b$}}}
\put(15,0){\vector(1,0){30}}
\end{picture}
&
\setlength{\unitlength}{1.0pt}
\begin{picture}(60,95)(0,0)
\put(0,30){\begin{picture}(120,60)(0,0)
\put(-30,-15){\makebox(60,30){$E$}}
\put(30,-15){\makebox(60,30){$E$}}
\put(-30,30){\makebox(60,30){$E = \Delta(1_E)$}}
\put(15,-10){\makebox(30,30){\footnotesize{$1_E$}}}
\put(-13,7.5){\makebox(30,30){\footnotesize{$1_E = \flat_{1_E}$}}}
\put(10,13){\makebox(30,30){\footnotesize{$1_E = \sharp_{1_E}$}}}
\put(15,0){\vector(1,0){30}}
\put(0,10){\vector(0,1){25}}
\put(12,36){\vector(4,-3){36}}
\end{picture}}
\qbezier[50](-15,15)(30,15)(75,15)
\put(-30,-15){\makebox(60,30){$B$}}
\put(30,-15){\makebox(60,30){$B$}}
\put(15,-10){\makebox(30,30){\footnotesize{$1_B$}}}
\put(15,0){\vector(1,0){30}}
\end{picture}
&
\setlength{\unitlength}{1.0pt}
\begin{picture}(120,85)(0,0)
\put(0,30){\begin{picture}(120,60)(0,0)
\put(-30,-15){\makebox(60,30){$E_3$}}
\put(30,-15){\makebox(60,30){$E_2$}}
\put(90,-15){\makebox(60,30){$E_1$}}
\put(-30,30){\makebox(60,30){$\Delta(e_2 {\cdot} e_1)$}}
\put(-34,20){\makebox(60,30){$= \Delta(e_2 {\cdot} \flat_{e_1})$}}
\put(30,30){\makebox(60,30){$\Delta(e_1)$}}
\put(15,-10){\makebox(30,30){\footnotesize{$e_2$}}}
\put(75,-10){\makebox(30,30){\footnotesize{$e_1$}}}
\put(21,7.5){\makebox(30,30){\footnotesize{$\delta_{e_1,e_2}$}}}
\put(-28,7.5){\makebox(30,30){\footnotesize{$\flat_{e_2 {\cdot} e_1}$}}}
\put(-31,-1){\makebox(30,30){\footnotesize{$= \flat_{e_2 {\cdot} \flat_{e_1}}$}}}
\put(55,7.5){\makebox(30,30){\footnotesize{$\flat_{e_1}$}}}
\put(20,37){\makebox(30,30){\footnotesize{$\sharp_{e_2 {\cdot} \flat_{e_1}}$}}}
\put(82,13){\makebox(30,30){\footnotesize{$\sharp_{e_1}$}}}
\put(55,55){\makebox(30,30){\footnotesize{$\sharp_{e_2{\cdot}e_1}$}}}
\put(24,45){\vector(1,0){20}}
\put(15,0){\vector(1,0){30}}
\put(12,9){\vector(4,3){36}}
\put(75,0){\vector(1,0){30}}
\put(0,10){\vector(0,1){18}}
\put(60,10){\vector(0,1){25}}
\put(72,36){\vector(4,-3){36}}
\qbezier(8,55)(70,90)(116,12)
\put(116,12){\vector(1,-2){0}}
\end{picture}}
\qbezier[80](-33,15)(50,15)(135,15)
\put(-30,-15){\makebox(60,30){$B_3$}}
\put(30,-15){\makebox(60,30){$B_2$}}
\put(90,-15){\makebox(60,30){$B_1$}}
\put(15,-10){\makebox(30,30){\footnotesize{$b_2$}}}
\put(75,-10){\makebox(30,30){\footnotesize{$b_1$}}}
\put(15,0){\vector(1,0){30}}
\put(75,0){\vector(1,0){30}}
\end{picture}
\end{tabular}
\end{center}
\caption{Special Notation and Concepts}
\label{special-notation-concepts}
\end{figure}
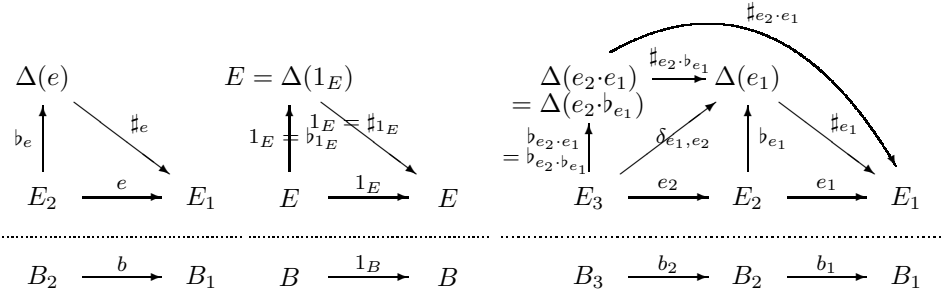

\subsection{Factorization Systems\label{subsec:factorization:systems}}

Let $\mathsf{C}$ be an arbitrary category.
An ordinary \emph{factorization system} in $\mathsf{C}$ is a pair 
$\langle \mathsf{E}, \mathsf{M} \rangle$ of classes of $\mathsf{C}$-morphisms satisfying the following conditions. 
{\bfseries Subcategories:}
All $\mathsf{C}$-isomorphisms are in $\mathsf{E} \,\cap\, \mathsf{M}$. 
Both $\mathsf{E}$ and $\mathsf{M}$ are closed under $\mathsf{C}$-composition.
Hence,
$\mathsf{E}$ and $\mathsf{M}$ are $\mathsf{C}$-subcategories with the same objects as $\mathsf{C}$.
{\bfseries Existence:} 
Every $\mathsf{C}$-morphism $f : A \rightarrow B$ has an $\langle \mathsf{E}, \mathsf{M} \rangle$-factorization\footnote{An $\langle \mathsf{E}, \mathsf{M} \rangle$-factorization is a quadruple $(A, e, C, m, B)$ where 
$e : A \rightarrow C$ and $m : C \rightarrow B$ 
is a composable pair of $\mathsf{C}$-morphisms 
with $e \in \mathsf{E}$ and $m \in \mathsf{M}$.};
that is,
there is an $\langle \mathsf{E}, \mathsf{M} \rangle$-factorization $(A, e, C, m, B)$ 
and $f$ is its composition\footnote{In this paper, all compositions are written in diagrammatic form.} $f = e \cdot m$.
{\bfseries Diagonalization:}
For every commutative square $e \cdot s = r \cdot m$ of $\mathsf{C}$-morphisms, with $e \in \mathsf{E}$ and $m \in \mathsf{M}$, 
there is a unique $\mathsf{C}$-morphism $d$ with $e \cdot d = r$ and $d \cdot m = s$.
This diagonalization condition implies the following condition.
{\bfseries Uniqueness:}
Any two $\langle \mathsf{E}, \mathsf{M} \rangle$-factorizations of a $\mathsf{C}$-morphism are isomorphic; that is, if $(A, e, C, m, B)$ and $(A, e^\prime, C^\prime, m^\prime, B)$ are two $\langle \mathsf{E}, \mathsf{M} \rangle$-factorizations of $f : A \rightarrow B$, then there is a unique $\mathsf{C}$-isomorphism $h : C \cong C^\prime$ with $e \cdot h = e^\prime$ and $h \cdot m^\prime = m$.
An \emph{epi-mono factorization system} is one 
where $\mathsf{E}$ is contained in the class of $\mathsf{C}$-epimorphisms
and $\mathsf{M}$ is contained in the class of $\mathsf{C}$-monomorphisms.
For an epi-mono factorization system,
the diagonalization condition is equivalent to the uniqueness condition.

Let $\mathsf{C}^{\mathsf{2}}$ denote the arrow category\footnote{Recall that $\mathsf{2}$ is the two-object category, pictured as $\bullet \rightarrow \bullet$, with one non-trivial morphism. The arrow category $\mathsf{C}^{\mathsf{2}}$ is (isomorphic to) the functor category $[\mathsf{2}, \mathsf{C}]$.} of $\mathsf{C}$.
An object of $\mathsf{C}^{\mathsf{2}}$ is a triple $(A, f, B)$,
where $f : A \rightarrow B$ is a $\mathsf{C}$-morphism.
A morphism of $\mathsf{C}^{\mathsf{2}}$,
$(a, b) : (A_1, f_1, B_1) \rightarrow (A_2, f_2, B_2)$,
is a pair of $\mathsf{C}$-morphisms $a : A_1 \rightarrow A_2$ and $b : B_1 \rightarrow B_2$ that form a commuting square $a \cdot f_2 = f_1 \cdot b$.
There are source and target projection functors
$\partial_0^{\mathsf{C}}, \partial_1^{\mathsf{C}} : \mathsf{C}^{\mathsf{2}} \rightarrow \mathsf{C}$
and an arrow natural transformation
$\alpha_{\mathsf{C}} : \partial_0^{\mathsf{C}} \Rightarrow \partial_1^{\mathsf{C}} : \mathsf{C}^{\mathsf{2}} \rightarrow \mathsf{C}$
with component
$\alpha_{\mathsf{C}}(A, f, B) = f : A \rightarrow B$
(background of Fig.~\ref{factorization-equivalence}).
Let $\mathsf{E}^{\mathsf{2}}$ denote the full subcategory of $\mathsf{C}^{\mathsf{2}}$
whose objects are the morphisms in $\mathsf{E}$.
Make the same definitions for $\mathsf{M}^{\mathsf{2}}$.
Just as for $\mathsf{C}^{\mathsf{2}}$,
the category $\mathsf{E}^{\mathsf{2}}$ has source and target projection functors
$\partial_0^\mathsf{E}, \partial_1^\mathsf{E} : \mathsf{E}^{\mathsf{2}} \rightarrow \mathsf{C}$
and an arrow natural transformation
$\alpha_{\mathsf{E}} : \partial_0^\mathsf{E} \Rightarrow \partial_1^\mathsf{E} : \mathsf{E}^{\mathsf{2}} \rightarrow \mathsf{C}$
(foreground of Fig.~\ref{factorization-equivalence}).
The same is true for $\mathsf{M}^{\mathsf{2}}$.
Let $\mathsf{E} \odot \mathsf{M}$ denote the category
of $\langle \mathsf{E}, \mathsf{M} \rangle$-factorizations
(top foreground of Fig.~\ref{factorization-equivalence}),
whose objects are $\langle \mathsf{E}, \mathsf{M} \rangle$-factorizations $(A, e, C, m, B)$,
and whose morphisms $(a, c, b) : (A_1, e_1, C_1, m_1, B_1) \rightarrow (A_2, e_2, C_2, m_2, B_2)$
are $\mathsf{C}$-morphism triples
where $(a, c) : (A_1, e_1, C_1) \rightarrow (A_2, e_2, C_2)$ 
is an $\mathsf{E}^{\mathsf{2}}$-morphism
and $(c, b) : (C_1, m_1, B_1) \rightarrow (C_2, m_2, B_2)$ 
is an $\mathsf{M}^{\mathsf{2}}$-morphism.
$\mathsf{E} \odot \mathsf{M}
= \mathsf{E}^{\mathsf{2}} \times_{\mathsf{C}} \mathsf{M}^{\mathsf{2}}$
is the pullback (in the category of categories)
of the $1^{\mathrm{st}}$-projection of $\mathsf{E}^{\mathsf{2}}$
and the $0^{\mathrm{th}}$-projection of $\mathsf{M}^{\mathsf{2}}$.
There is a composition functor
$\circ_{\mathsf{C}} : \mathsf{E} {\odot} \mathsf{M} \rightarrow \mathsf{C}^{\mathsf{2}}$
that commutes with projections:
on objects $\circ_{\mathsf{C}}(A, e, C, m, B) = (A, e \circ_{\mathsf{C}} m, B)$,
and on morphisms $\circ_{\mathsf{C}}(a, c, b) = (a, b)$.

An $\langle \mathsf{E}, \mathsf{M} \rangle$-factorization system with choice has a specified factorization for each $\mathsf{C}$-morphism;
that is, there is a choice function from the class of $\mathsf{C}$-morphisms to the class of $\langle \mathsf{E}, \mathsf{M} \rangle$-factorizations mapping each $\mathsf{C}$-morphism to one of its factorizations.
With this choice,
diagonalization is uniquely determined.
When choice is specified,
there is a factorization functor
$\div_{\mathsf{C}} : \mathsf{C}^{\mathsf{2}} \rightarrow \mathsf{E} {\odot} \mathsf{M}$,
which is defined on objects as the chosen $\langle \mathsf{E}, \mathsf{M} \rangle$-factorization $\div_{\mathsf{C}}(A, f, B) = (A, e, C, m, B)$
and on morphisms as
$\div_{\mathsf{C}}(a, b) = (a, c, b)$
where $c$ is defined by diagonalization
($\div_{\mathsf{C}}$ is functorial by uniqueness of diagonalization).
Clearly,
factorization followed by composition is the identity
$\div_{\mathsf{C}} \circ\, \circ_{\mathsf{C}} 
= \mathsf{id}_{\mathsf{C}}$.
By uniqueness of factorization (up to isomorphism)
composition followed by factorization is an isomorphism
$\circ_{\mathsf{C}} \,\circ \div_{\mathsf{C}} 
\cong \mathsf{id}_{\mathsf{E} {\odot} \mathsf{M}}$.

\begin{theorem} [General Equivalence] \label{general-equivalence}
When a category $\mathsf{C}$ has an $\langle \mathsf{E}, \mathsf{M} \rangle$-factorization system with choice,
the $\mathsf{C}$-arrow category
is equivalent (Fig.~\ref{factorization-equivalence}) to
the $\langle \mathsf{E}, \mathsf{M} \rangle$-factorization category
\[\mathsf{C}^{\mathsf{2}} \equiv \mathsf{E} {\odot} \mathsf{M}.\]
\end{theorem}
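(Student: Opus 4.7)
The plan is to take the two functors $\circ_{\mathsf{C}}$ and $\div_{\mathsf{C}}$ already constructed in the preceding paragraphs as the equivalence pair and verify that they are quasi-inverse. Nothing further has to be built: the choice of factorization gives $\div_{\mathsf{C}}$ on objects, unique diagonalization gives it on morphisms and enforces its functoriality, and $\circ_{\mathsf{C}}$ is evidently a functor. Hence the whole argument reduces to checking the two round-trip composites, one of which is already flagged as a strict identity and the other as an isomorphism up to the chosen factorization.

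For the round-trip starting in $\mathsf{C}^{\mathsf{2}}$, I would simply observe that it is strictly the identity. An object $(A, f, B)$ is sent by $\div_{\mathsf{C}}$ to its chosen factorization $(A, e, C, m, B)$ and then by $\circ_{\mathsf{C}}$ back to $(A, e \cdot m, B) = (A, f, B)$; a morphism $(a, b)$ is sent to the diagonalization triple $(a, c, b)$ and then projected back to $(a, b)$. Hence $\div_{\mathsf{C}} \,\circ\, \circ_{\mathsf{C}} = \mathsf{id}_{\mathsf{C}^{\mathsf{2}}}$ holds on the nose, and no natural isomorphism is needed on this side.

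For the other round-trip, an object $(A, e, C, m, B)$ goes first to $(A, e \cdot m, B)$ and then to the chosen factorization $(A, e', C', m', B)$ of the composite. The uniqueness clause of the factorization system produces a unique $\mathsf{C}$-isomorphism $h : C \to C'$ with $e \cdot h = e'$ and $h \cdot m' = m$, and this yields a morphism $(1_A, h, 1_B)$ in $\mathsf{E} \odot \mathsf{M}$; I would assemble these components into a candidate natural isomorphism $\eta : \mathsf{id}_{\mathsf{E} \odot \mathsf{M}} \Rightarrow \circ_{\mathsf{C}} \,\circ \div_{\mathsf{C}}$. The main obstacle is naturality. Given $(a, c, b) : (A_1, e_1, C_1, m_1, B_1) \to (A_2, e_2, C_2, m_2, B_2)$ with $h_i$ the isomorphisms supplied at the endpoints and $d$ the middle component produced by $\div_{\mathsf{C}}$, naturality reduces to an equation $c \cdot h_2 = h_1 \cdot d$ in $\mathsf{C}$. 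The strategy is to exhibit both composites as diagonal fillers for the commuting square with left edge $e_1 \in \mathsf{E}$, right edge $m'_2 \in \mathsf{M}$, top $a \cdot e'_2$, and bottom $m_1 \cdot b$; commutativity of the square follows from $e_i \cdot m_i = e'_i \cdot m'_i = f_i$ together with the $\mathsf{C}^{\mathsf{2}}$-identity $a \cdot f_2 = f_1 \cdot b$, and the filler equations are immediate rewrites using the defining properties of $h_1$, $h_2$, and $d$ and the fact that $(a, c, b)$ is a morphism of $\mathsf{E} \odot \mathsf{M}$. Uniqueness of diagonalization then forces the two composites to agree, delivering the natural isomorphism and, together with the strict identity of the previous paragraph, the equivalence $\mathsf{C}^{\mathsf{2}} \equiv \mathsf{E} \odot \mathsf{M}$.
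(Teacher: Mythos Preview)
Your proposal is correct and follows exactly the approach the paper takes: the paper establishes $\div_{\mathsf{C}} \circ \circ_{\mathsf{C}} = \mathsf{id}$ and $\circ_{\mathsf{C}} \circ \div_{\mathsf{C}} \cong \mathsf{id}$ in the paragraph immediately preceding the theorem, invoking uniqueness of factorization for the second round-trip, and then simply declares the equivalence. In fact you supply more detail than the paper, which never spells out the naturality check you give via the unique-diagonal argument.
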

This equivalence is mediated by factorization and composition.

\begin{figure}
\begin{center}
\setlength{\unitlength}{0.78pt}
\begin{tabular}{c}
\begin{picture}(200,110)(-83,0)
\put(0,0){\begin{picture}(100,100)(0,0)
\put(5,75){\makebox(100,50){$\mathsf{E} {\odot} \mathsf{M}$}}
\put(-50,25){\makebox(100,50){$\mathsf{E}^{\mathsf{2}}$}}
\put(50,25){\makebox(100,50){$\mathsf{M}^{\mathsf{2}}$}}
\put(-100,-25){\makebox(100,50){$\mathsf{C}$}}
\put(0,-25){\makebox(100,50){$\mathsf{C}$}}
\put(100,-25){\makebox(100,50){$\mathsf{C}$}}
\put(-32,56){\makebox(100,50){\footnotesize{$\pi_\mathsf{E}$}}}
\put(33,56){\makebox(100,50){\footnotesize{$\pi_\mathsf{M}$}}}
\put(-80,8){\makebox(100,50){\footnotesize{$\partial_0^\mathsf{E}$}}}
\put(-23,12){\makebox(100,50){\footnotesize{$\partial_1^\mathsf{E}$}}}
\put(23,12){\makebox(100,50){\footnotesize{$\partial_0^\mathsf{M}$}}}
\put(82,8){\makebox(100,50){\footnotesize{$\partial_1^\mathsf{M}$}}}
\put(-50,-35){\makebox(100,50){\footnotesize{$\mathsf{id}$}}}
\put(50,-35){\makebox(100,50){\footnotesize{$\mathsf{id}$}}}
\put(-48,-4){\makebox(100,50){\footnotesize{$\alpha_{\mathsf{E}}$}}}
\put(-48,-14){\makebox(100,50){\Large{$\Rightarrow$}}}
\put(52,-4){\makebox(100,50){\footnotesize{$\alpha_{\mathsf{M}}$}}}
\put(52,-14){\makebox(100,50){\Large{$\Rightarrow$}}}
\put(50,25){\begin{picture}(30,15)(0,-15)
\put(0,0){\line(-1,-1){15}}
\put(0,0){\line(1,-1){15}}
\end{picture}}
\thicklines
\put(40,90){\vector(-1,-1){30}}
\put(60,90){\vector(1,-1){30}}
\put(-10,40){\vector(-1,-1){30}}
\put(10,40){\vector(1,-1){30}}
\put(-30,0){\vector(1,0){60}}
\put(90,40){\vector(-1,-1){30}}
\put(110,40){\vector(1,-1){30}}
\put(70,0){\vector(1,0){60}}
\end{picture}}
\thicklines
\put(-21,108){\vector(1,0){40}}
\put(-49,90){\makebox(100,50){\footnotesize{$\div_{\mathsf{C}}$}}}
\put(-52,75.5){\makebox(100,50){\footnotesize{$\equiv$}}}
\put(-49,62){\makebox(100,50){\footnotesize{$\circ_{\mathsf{C}}$}}}
\put(19,94){\vector(-1,0){40}}
\put(-35,90){\line(2,-1){44}}
\put(21,62){\line(2,-1){43}}
\put(79,33){\line(2,-1){15}}
\put(109,18){\vector(2,-1){28}}
\put(-50,85){\vector(0,-1){70}}
\put(-98,75){\makebox(100,50){$\mathsf{C}^{\mathsf{2}}$}}
\put(-109,24){\makebox(100,50){\footnotesize{$\partial_0^{\mathsf{C}}$}}}
\put(-7,36){\makebox(100,50){\footnotesize{$\partial_1^{\mathsf{C}}$}}}
\put(-75,45){\makebox(100,50){\footnotesize{$\alpha_{\mathsf{C}}$}}}
\put(-75,35){\makebox(100,50){\Large{$\Rightarrow$}}}
\end{picture}
\\ \\ \\
$\begin{array}{rcl}
\div_{\mathsf{C}} \circ\, \circ_{\mathsf{C}} & = & \mathsf{id}_{\mathsf{C}} 
\\
\circ_{\mathsf{C}} \,\circ \div_{\mathsf{C}} & \cong & \mathsf{id}_{\mathsf{E} {\odot} \mathsf{M}} 
\\
\div_{\mathsf{C}} 
\left(
\pi_\mathsf{E} \, \alpha_{\mathsf{E}}
\bullet
\pi_\mathsf{M} \, \alpha_{\mathsf{M}}
\right)
& = &
\alpha_{\mathsf{C}}
\end{array}$
\end{tabular}
\end{center}
\caption{Factorization Equivalence}
\label{factorization-equivalence}
\end{figure}
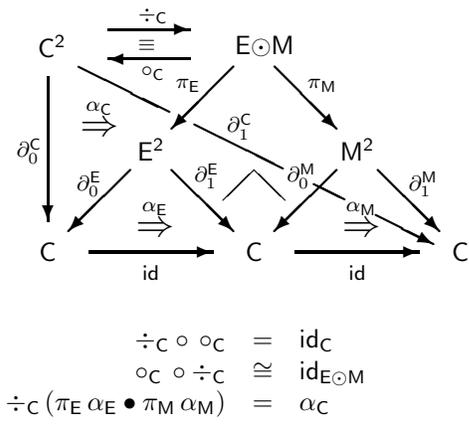


\section{Order-Enriched Categories\label{sec:order:enriched:categories}}

\subsection{Order-Enriched Categories\label{subsec:order:enriched:categories}}

An order-enriched category is a category whose ``hom-sets'' are ordered;
that is, whose hom-objects are in $\mathsf{Ord}$.
More precisely,
an \emph{order-enriched} category $\mathsf{C}$ 
consists of:
a set of objects $|\mathsf{C}| = \mathsf{obj}(\mathsf{C})$; 
for each pair of objects $A,B \in |\mathsf{C}|$,
a hom-preorder $\mathsf{C}(A,B) = \langle \mathsf{C}(A,B), \leq_{A,B} \rangle$;
for each triple of objects $A,B,C \in |\mathsf{C}|$,
a monotonic composition function
$\cdot_{A,B,C} : \mathsf{C}(A,B) \times \mathsf{C}(B,C) \rightarrow \mathsf{C}(A,C)$;
and
for each object $A \in |\mathsf{C}|$,
an monotonic identity element $1_A \in |\mathsf{C}(A,A)|$.
A $\mathsf{C}$-morphism $f : A \rightarrow B$ 
from $\mathsf{C}$-object $A$ to $\mathsf{C}$-object $B$
is an element in the underlying set $f \in |\mathsf{C}(A,B)|$.
This data is subject to the associative law (using infix notation for composition)
$f \cdot_{A,B,D} (g \cdot_{B,C,D} h) = (f \cdot_{A,B,C} g) \cdot_{A,C,D} h$
for all composable pairs $f : A \rightarrow B$, $g : B \rightarrow C$ and $h : C \rightarrow D$, 
and the identity laws on composition
$1_A \cdot_{A,A,B} f = f$ and $f \cdot_{A,B,B} 1_B = f$
for all morphisms $f : A \rightarrow B$. 
For each pair of objects $A,B \in |\mathsf{C}|$,
the order $\leq_{A,B}$ defines an equivalence relation $\equiv_{A,B}$
on the hom-set $|\mathsf{C}(A,B)|$.
Composition preserves equivalence:
if $f_1 \equiv f_2 : A \rightarrow B$ and $g_1 \equiv g_2 : B \rightarrow C$,
then $f_1 \cdot g_1 \equiv f_2 \cdot g_2 : A \rightarrow C$. 
The preorder of $\mathsf{C}$-morphisms is the disjoint union
$\mathsf{mor}(\mathsf{C}) = \coprod_{A,B \in |\mathsf{C}|} |\mathsf{C}(A,B)|$
with the induced order.
The \emph{opposite} order-enriched category of $\mathsf{C}$ 
is another order-enriched category $\mathsf{C}^{\mathrm{op}}$
consisting of:
the same class of objects $|\mathsf{C}^{\mathrm{op}}| = |\mathsf{C}|$; 
for each pair of objects $B,A \in |\mathsf{C}|$,
the hom-preorder 
$\mathsf{C}^{\mathrm{op}}(B,A) 
= \mathsf{C}(A,B)^{\mathrm{op}}
= \langle \mathsf{C}(A,B), \geq_{A,B} \rangle$
so that
a $\mathsf{C}^{\mathrm{op}}$-morphism $f : B \rightarrow A$
is a $\mathsf{C}$-morphism $f : A \rightarrow B$;
for each triple of objects $A,B,C \in |\mathsf{C}|$,
the monotonic composition function
$g \cdot^{\mathrm{op}}_{C,B,A} f = f \cdot_{A,B,C} g$
for any $\mathsf{C}$-morphisms $f : A \rightarrow B$ and $g : B \rightarrow C$; 
and for each object $A \in |\mathsf{C}|$,
the monotonic identity element $1^{\mathrm{op}}_A = 1_A$.
A $\mathsf{C}$-morphism $f : A \rightarrow B$ is an \emph{isomorphism} $f : A \cong B$
when there is an oppositely-directed $\mathsf{C}$-morphism $f^{-1} : B \rightarrow A$
called its inverse
such that $f \cdot f^{-1} = 1_A$ and $f^{-1} \cdot f = 1_B$.
A $\mathsf{C}$-morphism $f : A \rightarrow B$ is an \emph{equivalence} $f : A \equiv B$
when there is an oppositely-directed $\mathsf{C}$-morphism $f^\prime : B \rightarrow A$
called its pseudo-inverse
such that $f \cdot f^\prime \equiv 1_A$ and $f^\prime \cdot f \equiv 1_B$.
A $\mathsf{C}$-morphism $e : A \rightarrow B$ is a \emph{pseudo-epimorphism}
when for any parallel pair of $\mathsf{C}$-morphisms $f, g : B \rightarrow C$,
if $e \cdot_{\mathsf{C}} f \equiv e \cdot_{\mathsf{C}} g$ then $f \equiv g$.
There is a dual definition for a \emph{pseudo-monomorphism}.

An \emph{order-enriched functor}
$\mathsf{F} : \mathsf{A} \rightarrow \mathsf{B}$ 
between two order-enriched categories
$\mathsf{A}$ and $\mathsf{B}$,
consists of:
an object function $|\mathsf{F}| : |\mathsf{A}| \rightarrow |\mathsf{B}|$; 
and for each pair of objects $A_1,A_2 \in |\mathsf{A}|$,
a monotonic function between hom-preorders 
$\mathsf{F}(A_1,A_2) 
: \mathsf{A}(A_1,A_2) \rightarrow \mathsf{B}(|\mathsf{F}|(A_1),|\mathsf{F}|(A_2))$.
This data is subject to 
the compatibility law for composition
$\mathsf{F}(A_1,A_3)(f \cdot_{A_1,A_2,A_3} g) 
= \mathsf{F}(A_1,A_2)(f) \cdot_{|\mathsf{F}|(A_1),|\mathsf{F}|(A_2),|\mathsf{F}|(A_3)}
\mathsf{F}(A_2,A_3)(g)$
for each composable pair $f : A_1 \rightarrow A_2$ and $g : A_2 \rightarrow A_3$,
and
the compatibility law for identity
$\mathsf{F}(A,A)(1_A) = 1_{|\mathsf{F}|(A)}$
for each object $A \in |\mathsf{A}|$.
An \emph{involution} $\propto$ on an order-enriched category $\mathsf{A}$ is an order-enriched functor that is an isomorphism between categories.
An involution consists of
an object bijection $(\mbox{-})^\propto : |\mathsf{A}| \rightarrow |\mathsf{A}|$; 
and 
for each pair of objects $A_1,A_2 \in |\mathsf{A}|$,
an order isomorphism 
$(\mbox{-})^\propto_{A_1,A_2} 
: \mathsf{A}^{\mathrm{op}}(A_2,A_1) \rightarrow \mathsf{A}(A_1^\propto,A_2^\propto)$
that respects composition and identities:
such that
$(g \cdot^{\mathrm{op}}_{A_3,A_2,A_1} f)^\propto_{A_1,A_3}
= f^\propto_{A_1,A_2} \cdot_{A_1,A_2,A_3} g^\propto_{A_2,A_3})$
and 
$(1_{A^\propto})^\propto_{A,A} = 1_A$.
An object $A$ of an order-enriched category $\mathsf{C}$ is a \emph{posetal object} 
when the hom-order $\mathsf{C}(B,A)$ is a poset for any objects $B \in |\mathsf{C}|$.
Let $\mathsf{C}_{=}$ denote the full replete subcategory of all posetal objects of $\mathsf{C}$,
and let $\mathsf{incl} : \mathsf{C}_{=} \rightarrow \mathsf{C}$ denote the (order-enriched) inclusion functor.

An \emph{order-enriched pseudo-natural transformation}
$\alpha : \mathsf{F} \Rightarrow \mathsf{G} : \mathsf{A} \rightarrow \mathsf{B}$ 
between two order-enriched functors $\mathsf{F}$ and $\mathsf{G}$,
is an $|\mathsf{A}|$-indexed collection of $\mathsf{B}$-morphisms
$\{ \alpha_A \in \mathsf{B}(|\mathsf{F}|(A),|\mathsf{G}|(A)) \mid A \in |\mathsf{A}| \}$
such that
$\alpha_{A_1} \cdot_{\mathsf{B}} \mathsf{G}(A_1,A_2)(h)
\equiv \mathsf{F}(A_1,A_2)(h) \cdot_{\mathsf{B}} \alpha_{A_2}$
for every $\mathsf{A}$-morphism $h : A_1 \rightarrow A_2$.
This equivalence expresses the pseudo-naturality of $\alpha$.
The vertical composite 
$\alpha \bullet \beta : \mathsf{F} \Rightarrow \mathsf{H} : \mathsf{A} \rightarrow \mathsf{B}$ 
of two pseudo-natural transformations
$\alpha : \mathsf{F} \Rightarrow \mathsf{G} : \mathsf{A} \rightarrow \mathsf{B}$ 
and
$\beta : \mathsf{G} \Rightarrow \mathsf{H} : \mathsf{A} \rightarrow \mathsf{B}$ 
has the component
$(\alpha \bullet \beta)_A \doteq \alpha_A \cdot_{\mathsf{B}} \beta_A$
for each $A \in |\mathsf{A}|$. 
The composite 
$\mathsf{K}\alpha : \mathsf{K} \circ \mathsf{F} \Rightarrow \mathsf{K} \circ \mathsf{G} : \mathsf{D} \rightarrow \mathsf{B}$ 
of $\mathsf{K} : \mathsf{D} \rightarrow \mathsf{A}$
with $\alpha : \mathsf{F} \Rightarrow \mathsf{G} : \mathsf{A} \rightarrow \mathsf{B}$ 
has component
$(\mathsf{K}\alpha)_D \doteq \alpha_{|\mathsf{K}|(D)}$
for each $D \in |\mathsf{D}|$. 
The composite 
$\alpha\mathsf{H} : \mathsf{F} \circ \mathsf{H} \Rightarrow \mathsf{G} \circ \mathsf{H} : \mathsf{A} \rightarrow \mathsf{C}$ 
of $\alpha : \mathsf{F} \Rightarrow \mathsf{G} : \mathsf{A} \rightarrow \mathsf{B}$ 
with $\mathsf{H} : \mathsf{B} \rightarrow \mathsf{C}$
has component
$(\alpha\mathsf{H})_A \doteq \mathsf{H}(|\mathsf{F}|(A),|\mathsf{G}|(A))(\alpha_A)$
for each $A \in |\mathsf{A}|$. 
Because of the pseudo-naturality of $\beta$,
we have the equivalence
$\mathsf{F}\beta \bullet \alpha\mathsf{K} \equiv \alpha\mathsf{H} \bullet \mathsf{G}\beta$ 
for two pseudo-natural transformations
$\alpha : \mathsf{F} \Rightarrow \mathsf{G} : \mathsf{A} \rightarrow \mathsf{B}$ 
and
$\beta : \mathsf{H} \Rightarrow \mathsf{K} : \mathsf{B} \rightarrow \mathsf{C}$.
We could choose either one of these for
the horizontal composite 
$\alpha \circ \beta : \mathsf{F} \circ \mathsf{H} \Rightarrow \mathsf{G} \circ \mathsf{K} : \mathsf{A} \rightarrow \mathsf{C}$.

\paragraph{Adjunctions}
A $\mathsf{C}$-\emph{adjunction} 
$g 
= \langle \mathsf{left}(g), \mathsf{right}(g) \rangle
= \langle \check{g}, \hat{g} \rangle
: A \rightleftharpoons B$
in an order-enriched category $\mathsf{C}$ consists of 
a left adjoint $\mathsf{C}$-morphism in the forward direction
$\mathsf{left}(g) = \check{g} : A \rightarrow B$
and a right adjoint $\mathsf{C}$-morphism in the reverse direction
$\mathsf{right}(g) = \hat{g} : B \rightarrow A$
satisfying the adjointness conditions:
$a \circ \check{g} \leq b$
\underline{iff}
$a \leq b \circ \hat{g}$
for every $\mathsf{C}$-object $C$
and every pair of $\mathsf{C}$-morphisms
$a : C \rightarrow A$ and $b : C \rightarrow B$
(or equivalently,
$1_A \leq \check{g} \circ \hat{g}$
and
$\hat{g} \circ \check{g} \leq 1_B$).
For any two adjunctions
$g_1, g_2 : A \rightleftharpoons B$,
$g_1 \leq g_2$
when
$\check{g}_1 \leq \check{g}_2$
(or equivalently, when $\hat{g}_2 \leq \hat{g}_1$).
Let $\mathsf{Adj}_{\mathsf{C}}(A,B)$
denote the preorder of all $\mathsf{C}$-adjunctions from $A$ to $B$.
Composition and identities of adjunctions are defined componentwise.
Let $\mathsf{Adj}_{\mathsf{C}}$ denote the order-enriched category of 
$\mathsf{C}$-objects and $\mathsf{C}$-adjunctions.
Posetal objects and adjunctions form the full subcategory $\mathsf{Adj}_{\mathsf{C},=} \subset \mathsf{Adj}_{\mathsf{C}}$. 
Projecting to the left and right gives rise to two component order-enriched functors.
The left functor 
$\mathsf{left}_{\mathsf{C}} : \mathsf{Adj}_{\mathsf{C}} \rightarrow \mathsf{C}$
is the identity of objects and maps an adjunction
$g : A \rightleftharpoons B$
to its left component
$\mathsf{left}_{\mathsf{C}}(g) = \check{g} : A \rightarrow B$.
The right functor 
$\mathsf{right}_{\mathsf{C}} : \mathsf{Adj}^{\mathrm{op}}_{\mathsf{C}} \rightarrow \mathsf{C}$
is the identity of objects and maps an adjunction
$g : A \rightleftharpoons B$
to its right component
$\mathsf{right}_{\mathsf{C}}(g) = \hat{g} : B \rightarrow A$.
The order-enriched \emph{involution} isomorphism
${(\mbox{-})}^{\propto} : \mathsf{Adj}^{\mathrm{op}} \rightarrow \mathsf{Adj}$
flips source/target and left/right:
${(\mbox{-})}^{\propto} \!\circ\, \mathsf{left} = \mathsf{right}$ 
and
${(\mbox{-})}^{\propto} \!\circ\, \mathsf{right}^{\mathrm{op}} = \mathsf{left}^{\mathrm{op}}$.

Let $g : A \rightleftharpoons B$ be any $\mathsf{C}$-adjunction.
The \emph{closure} of $g$ is the $\mathsf{C}$-endomorphism
$(\mbox{-})^{\bullet_{g}}
= \check{g} \circ \hat{g}
: A \rightarrow A$.
Closure is increasing $1_A \leq (\mbox{-})^{\bullet_{g}}$ and idempotent $(\mbox{-})^{\bullet_{g}} \circ (\mbox{-})^{\bullet_{g}} \equiv (\mbox{-})^{\bullet_{g}}$.
Idempotency is implied by the fact that
$\check{g} \circ \hat{g} \circ \check{g} \equiv \check{g}$.
The closure of any $A$-element $a : 1 \rightarrow A$
is the $A$-element 
$a^{\bullet_{g}} = a \circ (\mbox{-})^{\bullet_{g}} : 1 \rightarrow A$.
The closure equalizer diagram is the parallel pair
$1_A, (\mbox{-})^{\bullet_{g}} : A \rightarrow A$.
The subobject of closed elements of $g$
is defined to be the equalizer 
$\mathrm{incl}_0^{g} : \mathsf{clo}(g) \rightarrow A$
of this diagram.
Being part of a limiting cone, 
$\mathrm{incl}_0^{g} \cdot (\mbox{-})^{\bullet_{g}} 
= \mathrm{incl}_0^{g}$.
An $A$-element $a : 1 \rightarrow A$ is a closed element of $g$
when
it factors through $\mathsf{clo}(g)$;
that is,
there is an element $\bar{a} : 1 \rightarrow \mathsf{clo}(g)$ 
such that $a$ is equal to its inclusion $a = \bar{a} \circ \mathrm{incl}_0^{g}$
(or equivalently,
when $a$ is equal to its closure $a = a^{\bullet_{g}}$).
An $A$-element $a : 1 \rightarrow A$ is a pseudo-closed element of $g$
when $a$ is equivalent to its closure 
$a \equiv_{A} a^{\bullet_{g}}$
(or equivalently,
when $a$ is equivalent $a \equiv_{A} b \circ \hat{g}$ 
to the image of some target element $b \in B$).
Any closed element of $g$ is a pseudo-closed element of $g$.
When $A$ is a poset, 
the closed and pseudo-closed elements of $g$ coincide.

Let $g : A \rightleftharpoons B$ be any $\mathsf{C}$-adjunction.
The \emph{interior} of $g$ is the $\mathsf{C}$-endomorphism
$(\mbox{-})^{\circ_{g}}
= \hat{g} \circ \check{g}
: B \rightarrow B$.
Interior is decreasing $1_B \geq (\mbox{-})^{\circ_{g}}$ and idempotent $(\mbox{-})^{\circ_{g}} \circ (\mbox{-})^{\circ_{g}} \equiv (\mbox{-})^{\circ_{g}}$.
Idempotency is implied by the fact that
$\hat{g} \circ \check{g} \circ \hat{g} \equiv \hat{g}$.
The interior of any $B$-element $b : 1 \rightarrow B$
is the $B$-element 
$b^{\circ_{g}} = b \circ (\mbox{-})^{\circ_{g}} : 1 \rightarrow B$.
The interior equalizer diagram is the parallel pair
$1_B, (\mbox{-})^{\circ_{g}} : B \rightarrow B$.
The subobject of open elements of $g$
is defined to be the equalizer 
$\mathrm{incl}_1^{g} : \mathsf{open}(g) \rightarrow B$
of this diagram.
Being part of a limiting cone, 
$\mathrm{incl}_1^{g} \cdot (\mbox{-})^{\circ_{g}} 
= \mathrm{incl}_1^{g}$.
A $B$-element $b : 1 \rightarrow B$ is an open element of $g$
when
it factors through $\mathsf{open}(g)$;
that is,
there is an element $\tilde{b} : 1 \rightarrow \mathsf{open}(g)$ 
such that $b$ is equal to its inclusion $b = \tilde{b} \circ \mathrm{incl}_1^{g}$
(or equivalently,
when $b$ is equal to its interior $b = b^{\circ_{g}}$).
A $B$-element $b : 1 \rightarrow B$ is a pseudo-open element of $g$
when $b$ is equivalent to its interior 
$b \equiv_{B} b^{\circ_{g}}$
(or equivalently,
when $b$ is equivalent $b \equiv_{B} a \circ \check{g}$ 
to the image of some source element $a \in A$).
Any open element of $g$ is a pseudo-open element of $g$.
When $B$ is a poset, 
the open and pseudo-open elements of $g$ coincide.

A $\mathsf{C}$-\emph{pseudo-reflection} is a $\mathsf{C}$-adjunction 
$g : A \rightleftharpoons B$
that satisfies the equivalence
$1_{B} \equiv (\mbox{-})^{\circ_{g}}$.
Any pseudo-reflection is a pseudo-epimorphism. 
Let $\widetilde{\mathsf{Ref}}(\mathsf{C})$ denote the morphism subclass of all $\mathsf{C}$-pseudo-reflections.
A $\mathsf{C}$-\emph{reflection} is a $\mathsf{C}$-pseudo-reflection that is strict: 
it satisfies the identity
$1_{B} = (\mbox{-})^{\circ_{g}}$. 
If $g : A \rightleftharpoons B$ is a (strict) $\mathsf{C}$-reflection
and the source $A$ is posetal,
then the target $B$ is also posetal.
Let $\mathsf{Ref}(\mathsf{C})$ denote the morphism subclass of all $\mathsf{C}$-reflections.
A $\mathsf{C}$-\emph{pseudo-coreflection} is a $\mathsf{C}$-adjunction 
$g : A \rightleftharpoons B$
that satisfies the equivalence
$1_{A} \equiv (\mbox{-})^{\bullet_{g}}$. 
Any pseudo-coreflection is a pseudo-monomorphism. 
Let $\widetilde{\mathsf{Ref}}^\propto(\mathsf{C})$ denote the morphism subclass of all $\mathsf{C}$-pseudo-coreflections.
A $\mathsf{C}$-\emph{coreflection} is a $\mathsf{C}$-pseudo-coreflection that is strict:
it satisfies the identity
$1_{A} = (\mbox{-})^{\bullet_{g}}$. 
If $g : A \rightleftharpoons B$ is a (strict) $\mathsf{C}$-coreflection
and the target $B$ is posetal,
then the source $A$ is also posetal.
Let $\mathsf{Ref}^\propto(\mathsf{C})$ denote the morphism subclass of all $\mathsf{C}$-coreflections.
The involution of a $\mathsf{C}$-pseudo-reflection is a $\mathsf{C}$-pseudo-coreflection, and vice-versa.
A $\mathsf{C}$-isomorphism $f : A \rightarrow B$ forms an adjunction with its inverse
$\langle f, f^{\mbox{-1}} \rangle : A \rightleftharpoons B$,
which is also called a $\mathsf{C}$-isomorphism.
The transpose $\langle f^{\mbox{-1}}, f \rangle : B \rightleftharpoons A$
of a $\mathsf{C}$-isomorphism is also a $\mathsf{C}$-isomorphism. 
Let $\mathsf{iso}(\mathsf{C})$ denote the morphism subclass of all $\mathsf{C}$-isomorphisms.
A $\mathsf{C}$-equivalence $f : A \rightarrow B$ forms an adjunction with its pseudo-inverse
$\langle f, f^\prime \rangle : A \rightleftharpoons B$,
which is also called a $\mathsf{C}$-equivalence.
The transpose $\langle f^\prime, f \rangle : B \rightleftharpoons A$
of a $\mathsf{C}$-equivalence is also a $\mathsf{C}$-equivalence. 
Let $\mathsf{equ}(\mathsf{C})$ denote the morphism subclass of all $\mathsf{C}$-equivalences.
Any $\mathsf{C}$-isomorphism is a $\mathsf{C}$-equivalence.
A $\mathsf{C}$-morphism is a $\mathsf{C}$-isomorphism iff it is both a $\mathsf{C}$-reflection and a $\mathsf{C}$-coreflection.
A $\mathsf{C}$-morphism is a $\mathsf{C}$-equivalence iff it is both a $\mathsf{C}$-pseudo-reflection and a $\mathsf{C}$-pseudo-coreflection.
The connection between these morphism classes is summarized as follows.
\begin{center}
$\begin{array}{rcl}
\mathsf{iso}(\mathsf{C}) & \subseteq & \mathsf{equ}(\mathsf{C}) \\
\mathsf{Ref}(\mathsf{C}) & \subseteq & \widetilde{\mathsf{Ref}}(\mathsf{C}) \\
\mathsf{Ref}^\propto(\mathsf{C}) & \subseteq & \widetilde{\mathsf{Ref}}^\propto(\mathsf{C})
\\
\mathsf{iso}(\mathsf{C}) & = & \mathsf{Ref}(\mathsf{C}) \cap \mathsf{Ref}^\propto(\mathsf{C}) \\
\mathsf{equ}(\mathsf{C}) & = & \widetilde{\mathsf{Ref}}(\mathsf{C}) \cap \widetilde{\mathsf{Ref}}^\propto(\mathsf{C})
\end{array}$
\end{center}

\paragraph{Involutions}
An \emph{involution} $\propto$ on an order-enriched category $\mathsf{C}$ 
is an order-enriched functor 
${(\mbox{-})}^{\propto} : \mathsf{C}^{\mathrm{op}} \rightarrow \mathsf{C}$
that is an isomorphism between categories.
An involution consists of
an object bijection $(\mbox{-})^\propto : |\mathsf{A}| \rightarrow |\mathsf{A}|$; 
and 
for each pair of objects $A_1,A_2 \in |\mathsf{A}|$,
an order isomorphism 
$(\mbox{-})^\propto_{A_1,A_2} 
: \mathsf{A}^{\mathrm{op}}(A_2,A_1) \rightarrow \mathsf{A}(A_1^\propto,A_2^\propto)$
that respects composition and identities:
such that
$(g \circ^{\mathrm{op}}_{A_3,A_2,A_1} f)^\propto_{A_1,A_3}
= f^\propto_{A_1,A_2} \circ_{A_1,A_2,A_3} g^\propto_{A_2,A_3})$
and $(1_{A^\propto})^\propto_{A,A} = 1_A$.

There is an \emph{involution} on $\mathsf{C}$-adjunctions
${(\mbox{-})}_{\mathsf{C}}^{\propto} : \mathsf{Adj}(\mathsf{C})^{\mathrm{op}} \rightarrow \mathsf{Adj}(\mathsf{C})$
that flips source/target and left/right:
${(\mbox{-})}_{\mathsf{C}}^{\propto} \!\circ\, \mathsf{left}_{\mathsf{C}} = \mathsf{right}_{\mathsf{C}}$ 
and
${(\mbox{-})}_{\mathsf{C}}^{\propto} \!\circ\, \mathsf{right}_{\mathsf{C}}^{\mathrm{op}} = \mathsf{left}_{\mathsf{C}}^{\mathrm{op}}$.
The involution of a $\mathsf{C}$-reflection is a $\mathsf{C}$-coreflection, and vice-versa.
There is a \emph{direct} image functor
$\mathsf{dir}_{\mathsf{C}} : \mathsf{B} \rightarrow \mathsf{Adj}(\mathsf{C})$,
where $\mathsf{dir}_{\mathsf{C}} \circ \mathsf{left}_{\mathsf{C}} 
= \exists_{\mathsf{C}} : \mathsf{B} \rightarrow \mathsf{C}$
and $\mathsf{dir}_{\mathsf{C}} \circ \mathsf{right}_{\mathsf{C}}^{\mathrm{op}} 
= {{(-)}_{\mathsf{C}}^{-1}}^{\mathrm{op}} : \mathsf{B} \rightarrow \mathsf{C}^{\mathrm{op}}$.
There is a \emph{inverse} image functor
$\mathsf{inv}_{\mathsf{C}} : \mathsf{B}^{\mathrm{op}} \rightarrow \mathsf{Adj}(\mathsf{C})$,
where $\mathsf{inv}_{\mathsf{C}} \circ \mathsf{left}_{\mathsf{C}} 
= {(-)}_{\mathsf{C}}^{-1} : \mathsf{B}^{\mathrm{op}} \rightarrow \mathsf{C}$
and $\mathsf{inv}_{\mathsf{C}} \circ \mathsf{right}_{\mathsf{C}}^{\mathrm{op}} 
= \exists_{\mathsf{C}}^{\mathrm{op}} : \mathsf{B}^{\mathrm{op}} \rightarrow \mathsf{C}^{\mathrm{op}}$.
Direct is the involution of inverse $\mathsf{dir}_{\mathsf{C}} = \mathsf{inv}_{\mathsf{C}}^{\mathrm{op}} \circ {(\mbox{-})}_{\mathsf{C}}^{\propto}$.

An involution ${(\mbox{-})}^\propto$ respects a factorization system
$\langle \mathsf{E}, \mathsf{M} \rangle$ when
it maps morphisms in $\mathsf{E}$ to morphisms in $\mathsf{M}$
and vice-versa.

\begin{center}
$\begin{array}{l}
\begin{array}{c@{\hspace{5pt}\Rightarrow\hspace{5pt}}c}
\mathsf{E} & \mathsf{M} \\
\mathsf{M} & \mathsf{E} \\ 
\end{array} \\ \hline
{\propto}_{\mathsf{C}} : \mathsf{mor}(\mathsf{C}) \rightarrow \mathsf{mor}(\mathsf{C}) \\
{\propto}_{\mathsf{E}\mathsf{M}} : \mathsf{E} \rightarrow \mathsf{M} \\
{\propto}_{\mathsf{M}\mathsf{E}} : \mathsf{M} \rightarrow \mathsf{E} \\ \hline
{\propto}_{\mathsf{C}} \cdot \mathsf{src}_{\mathsf{C}} = \mathsf{tgt}_{\mathsf{C}} \\
{\propto}_{\mathsf{C}} \cdot \mathsf{tgt}_{\mathsf{C}} = \mathsf{src}_{\mathsf{C}} \\
{\propto}_{\mathsf{E}\mathsf{M}} \cdot {\propto}_{\mathsf{M}\mathsf{E}} = 1 \\
{\propto}_{\mathsf{M}\mathsf{E}} \cdot {\propto}_{\mathsf{E}\mathsf{M}} = 1 \\
\mathrm{incl}_{\mathsf{E}} \cdot {\propto}_{\mathsf{C}} 
= {\propto}_{\mathsf{E}\mathsf{M}} \cdot \mathsf{incl}_{M} \\
\mathsf{incl}_\mathsf{M} \cdot {\propto}_{\mathsf{C}} 
= {\propto}_{\mathsf{M}\mathsf{E}} \cdot \mathrm{incl}_{E}
\end{array}$
\end{center}

The involution
${\propto}_{\mathsf{C}} : \mathsf{Adj}(\mathsf{C})^{\mathrm{op}} \rightarrow \mathsf{Adj}(\mathsf{C})$
restricts to isomorphisms,
restricts to equivalence,
maps morphisms in $\mathsf{E}$ to morphisms in $\mathsf{M}$,
and
maps morphisms in $\mathsf{M}$ to morphisms in $\mathsf{E}$.

Dual correspondents on $\mathsf{Adj}(\mathsf{C})$ for an order-enriched category $\mathsf{C}$:

\begin{center}
$\begin{array}{l}
\begin{array}{|c@{\hspace{5pt}\Rightarrow\hspace{5pt}}c|} \hline
A_0 & A_1 \\
\mathsf{left}(g) : A_0 \rightarrow A_1 & \mathsf{right}(g) : A_1 \rightarrow A_0 \\
(\mbox{-})^{\bullet_g} : A_0 \rightarrow A_0 & (\mbox{-})^{\circ_g} : A_1 \rightarrow A_1 \\
\mathsf{clo}(g) \subseteq A_0	& \mathsf{open}(g) \subseteq A_1 \\
\mathsf{axis}(g) & \mathsf{axis}(g) \\
\mathsf{ref}(g) : A_0 \rightleftharpoons \mathsf{axis}(g)		
&	\mathsf{ref}^\propto(g) : \mathsf{axis}(g) \rightleftharpoons A_1	\\
(A_0,\mathsf{ref}(g),\mathsf{axis}(g),\mathsf{ref}^\propto(g),A_1) 
& (A_1,\mathsf{ref}^\propto(g),\mathsf{axis}(g),\mathsf{ref}(g),A_0) \\ \hline
  \mathsf{src}_\mathsf{C}
&	\mathsf{tgt}_\mathsf{C}            \\
  \mathsf{left}_\mathsf{C}  : \mathsf{Adj}(\mathsf{C}) \rightarrow \mathsf{C}
&	\mathsf{right}_\mathsf{C} : \mathsf{Adj}(\mathsf{C})^{\mathrm{op}} \rightarrow \mathsf{C}      \\
\mathsf{clo}_\mathsf{C} & \mathsf{open}_\mathsf{C} \\
  \mathsf{ref}_\mathsf{C} 
  : \mathsf{C}^{\mathsf{2}} \rightarrow	\mathsf{Ref}(\mathsf{C})^{\mathsf{2}}
& \mathsf{ref}^\propto_\mathsf{C} 
  : \mathsf{C}^{\mathsf{2}} \rightarrow \mathsf{Ref}^\propto(\mathsf{C})^{\mathsf{2}} \\
\mathsf{Ref}(\mathsf{C})		&	\mathsf{Ref}^\propto(\mathsf{C}) \\
\mathsf{Iso}(\mathsf{C})		&	\mathsf{Iso}(\mathsf{C})         \\
\mathsf{Equiv}(\mathsf{C})	&	\mathsf{Equiv}(\mathsf{C})       \\
\mathsf{axis}_\mathsf{C}		&	\mathsf{axis}_\mathsf{C}         \\ \hline
\end{array}
\\ \\
\mathsf{Iso}(\mathsf{C}) 
\subseteq \mathsf{Equiv}(\mathsf{C})
\subseteq \mathsf{Ref}(\mathsf{C}) \bigcap \mathsf{Ref}^\propto(\mathsf{C})
\\
\mathsf{Ref}(\mathsf{C}), \mathsf{Ref}^\propto(\mathsf{C}) \subseteq \mathsf{Adj}(\mathsf{C})
\end{array}$
\end{center}

\subsection{Order-Enriched Fibrations\label{subsec:order:enriched:fibrations}}

An \emph{order-enriched fibration} 
$\mathsf{P} : \mathsf{E} \rightarrow \mathsf{B}$
is an order-enriched functor,
which is a split fibration 
that satisfies the following additional conditions\footnote{In a sense,
in this definition we are characterizing the fibration 
$|\mbox{-}| : \mathsf{Ord} \rightarrow \mathsf{Set}$.}.
\begin{itemize}
\item
The functor $\mathsf{P}$ is faithful, replete and preserves limits.
\item {\bfseries [cleavage uniqueness]}
The cleavage $\gamma(b, E_1)$ is the only cartesian morphism for a fiber pair $(b, E_1)$; 
that is,
if $e : E_2 \rightarrow E_1$ is cartesian
then $\Delta(e) = E_2$, $\flat_{e} = 1_{E_2}$ and $\sharp_{e} = e$\footnote{Equivalently (assuming the fibration splits),
if an $\mathsf{E}$-morphism $e : E_3 \rightarrow E_1$
factors as $e = e_2 \cdot_{\mathsf{E}} e_1 : E_3 \rightarrow E_2 \rightarrow E_1$
where $e_1$ is cartesian,
then $\Delta(e) = \Delta(e_2)$,
$\flat_{e} = \flat_{e_2} : E_2 \rightarrow \Delta(e)$
and $\sharp_{e} = \sharp_{e_2} \cdot_{\mathsf{E}} e_1 : \Delta(e) \rightarrow E_1$.}.
\item {\bfseries [right cancellation]}
For any cartesian $\mathsf{E}$-morphism $e : E_2 \rightarrow E_1$
and any parallel pair of $\mathsf{E}$-morphisms $g,h : E_3 \rightarrow E_2$,
if $g \cdot_{\mathsf{E}} e \leq h \cdot_{\mathsf{E}} e$ then $g \leq h$\footnote{It follows that,
(1) any cartesian $\mathsf{E}$-morphism is a pseudo-monomorphism:
for any cartesian $\mathsf{E}$-morphism $e : E_2 \rightarrow E_1$ 
and any parallel pair of $\mathsf{E}$-morphisms $g,h : E_3 \rightarrow E_2$,
if $g \cdot_{\mathsf{E}} e \equiv h \cdot_{\mathsf{E}} e$ then $g \equiv h$;
(2) for any $\mathsf{E}$-morphism $e : E_2 \rightarrow E_1$ 
and any parallel pair of $\mathsf{E}$-morphisms $g,h : E_3 \rightarrow E_2$,
if $g \cdot_{\mathsf{E}} e \leq h \cdot_{\mathsf{E}} e$ then $\delta_{g,e} \leq \delta_{h,e}$.}.
\item {\bfseries [left cancellation]}
For any $\mathsf{E}$-morphism $e : E_2 \rightarrow E_1$
and any parallel pair of $\mathsf{E}$-morphisms $g, h : \Delta(e) \rightarrow E$,
if $\flat_{e} \cdot_{\mathsf{E}} g \leq \flat_{e} \cdot_{\mathsf{E}} h$ then $g \leq h$\footnote{It follows that any gap is a pseudo-epimorphism:
for any $\mathsf{E}$-morphism $e : E_2 \rightarrow E_1$
and any parallel pair of $\mathsf{E}$-morphisms $g, h : \Delta(e) \rightarrow E$,
if $\flat_{e} \cdot_{\mathsf{E}} g \equiv \flat_{e} \cdot_{\mathsf{E}} h$ then $g \equiv h$.}.
\item {\bfseries [equivalence factorization]} 
An equivalent pair of $\mathsf{E}$-morphisms 
$f \equiv g : E_2 \rightarrow E_1$
have the same apex $\Delta(f) = \Delta(g)$
and gap $\flat_{f} = \flat_{g}$
and equivalent lifts $\sharp_{f} \equiv \sharp_{g} : \Delta(f) \rightarrow E_1$.
\end{itemize}

\begin{fact}
Equalizing morphisms are cartesian.
\end{fact}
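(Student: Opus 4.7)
The plan is to verify the cartesian universal property directly for an equalizer inclusion $\mathrm{incl} : E \rightarrow A$ of a parallel pair $f,g : A \rightarrow B$ in $\mathsf{E}$. Three hypotheses of the order-enriched fibration do all the work here: limit-preservation of $\mathsf{P}$, faithfulness of $\mathsf{P}$, and the universal property of the equalizer downstairs in $\mathsf{B}$.

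First I would push the equalizer through $\mathsf{P}$. Since $\mathsf{P}$ preserves limits, $\mathsf{P}(\mathrm{incl}) : \mathsf{P}(E) \rightarrow \mathsf{P}(A)$ is the equalizer of $\mathsf{P}(f), \mathsf{P}(g)$ in $\mathsf{B}$, and in particular $\mathsf{P}(\mathrm{incl})$ is a $\mathsf{B}$-monomorphism. This is the technical fact that will both produce the cancellation used in the uniqueness step and the equation $\mathsf{P}(f)\cdot \mathsf{P}(\mathrm{incl})\cdot \mathsf{P}(g) = \ldots$ used for existence.

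Next I would check the cartesian condition for the fiber pair $(\mathsf{P}(\mathrm{incl}), A)$. Given a test pair $v : X \rightarrow A$ in $\mathsf{E}$ and $x : \mathsf{P}(X) \rightarrow \mathsf{P}(E)$ in $\mathsf{B}$ with $x \cdot_{\mathsf{B}} \mathsf{P}(\mathrm{incl}) = \mathsf{P}(v)$, the goal is to produce a unique $u : X \rightarrow E$ with $u \cdot_{\mathsf{E}} \mathrm{incl} = v$ and $\mathsf{P}(u) = x$. Computing $\mathsf{P}(v \cdot_{\mathsf{E}} f) = x \cdot_{\mathsf{B}} \mathsf{P}(\mathrm{incl}) \cdot_{\mathsf{B}} \mathsf{P}(f) = x \cdot_{\mathsf{B}} \mathsf{P}(\mathrm{incl}) \cdot_{\mathsf{B}} \mathsf{P}(g) = \mathsf{P}(v \cdot_{\mathsf{E}} g)$ and then invoking faithfulness of $\mathsf{P}$ upgrades this to the strict equation $v \cdot_{\mathsf{E}} f = v \cdot_{\mathsf{E}} g$ in $\mathsf{E}$. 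The equalizer universal property in $\mathsf{E}$ then produces a unique $u : X \rightarrow E$ with $u \cdot_{\mathsf{E}} \mathrm{incl} = v$, which gives both existence and uniqueness on the $\mathsf{E}$-side.

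Finally I would check that this $u$ lies over $x$: applying $\mathsf{P}$ to $u \cdot_{\mathsf{E}} \mathrm{incl} = v$ gives $\mathsf{P}(u) \cdot_{\mathsf{B}} \mathsf{P}(\mathrm{incl}) = \mathsf{P}(v) = x \cdot_{\mathsf{B}} \mathsf{P}(\mathrm{incl})$, and right-cancelling the $\mathsf{B}$-monomorphism $\mathsf{P}(\mathrm{incl})$ (from step one) yields $\mathsf{P}(u) = x$, completing the verification. By cleavage uniqueness it then also follows that $\Delta(\mathrm{incl}) = E$, $\flat_{\mathrm{incl}} = 1_E$ and $\sharp_{\mathrm{incl}} = \mathrm{incl}$.

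The step I expect to be the main obstacle is the faithfulness-upgrade: the order-enriched cancellation conditions (left/right cancellation) only give $v \cdot_{\mathsf{E}} f \equiv v \cdot_{\mathsf{E}} g$, which is too weak to invoke the strict $\mathsf{E}$-equalizer property. Using ordinary faithfulness of $\mathsf{P}$ rather than the order-theoretic cancellation laws is what converts the downstairs equation back to strict $\mathsf{E}$-equality and makes the argument go through; everything else is a routine chase of the two universal properties.
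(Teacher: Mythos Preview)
Your proposal is correct and follows essentially the same argument as the paper: push the equalizer through $\mathsf{P}$ by limit-preservation, use faithfulness to upgrade $\mathsf{P}(v\cdot f)=\mathsf{P}(v\cdot g)$ to $v\cdot f=v\cdot g$, invoke the $\mathsf{E}$-equalizer universal property, and then cancel the $\mathsf{B}$-monomorphism $\mathsf{P}(\mathrm{incl})$ to identify $\mathsf{P}(u)$ with $x$. Your closing remark about cleavage uniqueness and your observation that faithfulness (not the order-cancellation axioms) is what delivers the strict equality are both apt additions, but the core proof is the paper's.
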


\noindent
{\bfseries Proof:}
Let $f, g : B \rightarrow A$ be a parallel pair of $\mathsf{E}$-morphisms,
and let $e : E \rightarrow B$ be its equalizing $\mathsf{E}$-morphism. 
Suppose $h : C \rightarrow B$ is a $\mathsf{E}$-morphism
and $k : \mathsf{P}(C) \rightarrow \mathsf{P}(E)$ is a $\mathsf{B}$-morphism
such that $k \cdot_{\mathsf{B}} \mathsf{P}(e) = \mathsf{P}(h)$.
Since $\mathsf{P}$ preserves limits,
$\mathsf{P}(e) : \mathsf{P}(E) \rightarrow \mathsf{P}(B)$ 
is the equalizing $\mathsf{B}$-morphism of the parallel pair of $\mathsf{B}$-morphisms
$\mathsf{P}(f), \mathsf{P}(g) : \mathsf{P}(B) \rightarrow \mathsf{P}(A)$.
Hence,
$\mathsf{P}(h \cdot_{\mathsf{E}} f)
= \mathsf{P}(h) \cdot_{\mathsf{E}} \mathsf{P}(f)
= k \cdot_{\mathsf{B}} \mathsf{P}(e) \cdot_{\mathsf{E}} \mathsf{P}(f)
= k \cdot_{\mathsf{B}} \mathsf{P}(e) \cdot_{\mathsf{E}} \mathsf{P}(g)
= \mathsf{P}(h) \cdot_{\mathsf{E}} \mathsf{P}(g)
= \mathsf{P}(h \cdot_{\mathsf{E}} g)$,
and
$h \cdot_{\mathsf{E}} f = h \cdot_{\mathsf{E}} g$
since $\mathsf{P}$ is faithful.
Therefore,
there is a unique $\mathsf{E}$-morphism
$\tilde{k} : C \rightarrow E$
such that $\tilde{k} \cdot_{\mathsf{E}} e = h$. 
But 
$\mathsf{P}(\tilde{k}) = k$,
since $\mathsf{P}(e)$ is a $\mathsf{B}$-monomorphism
and
$\mathsf{P}(\tilde{k}) \cdot_{\mathsf{E}} \mathsf{P}(e)
= \mathsf{P}(\tilde{k} \cdot_{\mathsf{E}} e)
= \mathsf{P}(h)
= k \cdot_{\mathsf{B}} \mathsf{P}(e)$.
\rule{2pt}{6pt}

\begin{fact}
The right adjoint of a reflection is cartesian.
The left adjoint of a coreflection is cartesian.
\end{fact}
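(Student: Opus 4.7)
The plan is to verify the cartesian universal property for $\hat{g}$ directly, exploiting the strict reflection identity $\hat{g} \cdot_{\mathsf{E}} \check{g} = 1_B$ together with the standing assumption that the fibration functor $\mathsf{P}$ is faithful. No cleavage manipulation is needed.

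Fixing the fiber pair $(\mathsf{P}(\hat{g}), A)$ and writing $b \doteq \mathsf{P}(\hat{g})$, I will take an arbitrary test datum: $v : E \rightarrow A$ in $\mathsf{E}$ and $x : \mathsf{P}(E) \rightarrow \mathsf{P}(B)$ in $\mathsf{B}$ with $x \cdot_{\mathsf{B}} b = \mathsf{P}(v)$. The natural candidate fill-in is $u \doteq v \cdot_{\mathsf{E}} \check{g} : E \rightarrow B$. Applying $\mathsf{P}$ to the reflection identity gives $b \cdot_{\mathsf{B}} \mathsf{P}(\check{g}) = 1_{\mathsf{P}(B)}$, and the computation $\mathsf{P}(u) = \mathsf{P}(v) \cdot_{\mathsf{B}} \mathsf{P}(\check{g}) = x \cdot_{\mathsf{B}} b \cdot_{\mathsf{B}} \mathsf{P}(\check{g}) = x$ shows that $u$ sits over $x$.

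The factorization equation $u \cdot_{\mathsf{E}} \hat{g} = v$ is the only point that requires care, and it is the main obstacle I anticipate: a direct rewrite using the unit $1_A \leq \check{g} \cdot_{\mathsf{E}} \hat{g}$ yields only $v \leq u \cdot_{\mathsf{E}} \hat{g}$, not equality. I will therefore argue by faithfulness instead of by rewriting. Both $u \cdot_{\mathsf{E}} \hat{g}$ and $v$ are parallel $\mathsf{E}$-morphisms $E \rightarrow A$, and $\mathsf{P}(u \cdot_{\mathsf{E}} \hat{g}) = x \cdot_{\mathsf{B}} b = \mathsf{P}(v)$, so faithfulness of $\mathsf{P}$ forces $u \cdot_{\mathsf{E}} \hat{g} = v$. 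Uniqueness of $u$ reduces to the same observation: any alternative lift $u^\prime$ of $x$ satisfies $\mathsf{P}(u^\prime) = \mathsf{P}(u)$ and so $u^\prime = u$.

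The coreflection case is completely dual, using $\check{g} \cdot_{\mathsf{E}} \hat{g} = 1_A$: given $v : E \rightarrow B$ and $x : \mathsf{P}(E) \rightarrow \mathsf{P}(A)$ with $x \cdot_{\mathsf{B}} \mathsf{P}(\check{g}) = \mathsf{P}(v)$, I will set $u \doteq v \cdot_{\mathsf{E}} \hat{g}$ and rerun the faithfulness argument to verify $\mathsf{P}(u) = x$, $u \cdot_{\mathsf{E}} \check{g} = v$, and uniqueness. The essential content of the whole proof is that a strict identity on one side of the adjunction, combined with faithfulness of $\mathsf{P}$, is exactly what is needed to promote the adjoint into a genuine cartesian lift.
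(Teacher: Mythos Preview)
Your proof is correct and follows essentially the same approach as the paper's: both take $u \doteq v \cdot_{\mathsf{E}} \check{g}$ as the candidate lift, verify $\mathsf{P}(u) = x$ using the strict identity $\mathsf{P}(\hat{g} \cdot_{\mathsf{E}} \check{g}) = 1$, and obtain the factorization $u \cdot_{\mathsf{E}} \hat{g} = v$ from faithfulness of $\mathsf{P}$. The only cosmetic difference is in the uniqueness step: the paper computes $k = k \cdot_{\mathsf{E}} \hat{g} \cdot_{\mathsf{E}} \check{g} = h \cdot_{\mathsf{E}} \check{g}$ directly from the reflection identity, whereas you invoke faithfulness again; both arguments are valid and equally short.
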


\noindent
{\bfseries Proof:}
Let
$g
= \langle \check{g}, \hat{g} \rangle 
: A_0 \rightleftharpoons A_1$ 
denote a reflection,
so that $1_{A_0} \leq \check{g} \cdot_{\mathsf{E}} \hat{g}$
and $\hat{g} \cdot_{\mathsf{E}} \check{g} = 1_{A_1}$.
We show that $\hat{g}$ is a cartesian $\mathsf{E}$-morphism. 
Let $h : B \rightarrow A_0$ be an $\mathsf{E}$-morphism
and let $f : \mathsf{P}(B) \rightarrow \mathsf{P}(A_1)$ be a $\mathsf{B}$-morphism
such that $\mathsf{P}(h) = f \cdot_{\mathsf{B}} \mathsf{P}(\hat{g})$.
{\bfseries [Existence]}
Consider the $\mathsf{E}$-morphism
$h \cdot_{\mathsf{E}} \check{g} : B \rightarrow A_1$.
The underlying $\mathsf{B}$-morphism is $f$,
since
$\mathsf{P}(h \cdot_{\mathsf{E}} \check{g})
= \mathsf{P}(h) \cdot_{\mathsf{B}} \mathsf{P}(\check{g})
= f \cdot_{\mathsf{B}} \mathsf{P}(\hat{g}) \cdot_{\mathsf{B}}  \mathsf{P}(\check{g})
= f \cdot_{\mathsf{B}} \mathsf{P}(\hat{g} \cdot_{\mathsf{E}} \check{g})
= f \cdot_{\mathsf{B}} \mathsf{P}(1_{A_1})
= f$.
The composition with $\hat{g}$ is $h$,
since
$\mathsf{P}(h \cdot_{\mathsf{E}} \check{g} \cdot_{\mathsf{E}} \hat{g})
= \mathsf{P}(h) \cdot_{\mathsf{B}} \mathsf{P}(\check{g} \cdot_{\mathsf{E}} \hat{g})
= f \cdot_{\mathsf{B}} \mathsf{P}(\hat{g}) \cdot_{\mathsf{B}} \mathsf{P}(\check{g} \cdot_{\mathsf{E}} \hat{g})
= f \cdot_{\mathsf{B}} \mathsf{P}(\hat{g} \cdot_{\mathsf{E}} \check{g} \cdot_{\mathsf{E}} \hat{g})
= f \cdot_{\mathsf{B}} \mathsf{P}(\hat{g})
= \mathsf{P}(h)$
and the fibration $\mathsf{P}$ is faithful.
{\bfseries [Uniqueness]}
Suppose $k : B \rightarrow A_1$ is an $\mathsf{E}$-morphism such that
$\mathsf{P}(k) = f$ and $k \cdot_{\mathsf{E}} \hat{g} = h$.
Then
$k 
= k \cdot_{\mathsf{E}} 1_{A_1}
= k \cdot_{\mathsf{E}} \hat{g} \cdot_{\mathsf{E}} \check{g}
= h \cdot_{\mathsf{E}} \check{g}$.
The proof for coreflections is dual.
\rule{2pt}{6pt}

\begin{center}
SPACE
\end{center}
\begin{center}
SPACE
\end{center}
\begin{center}
SPACE
\end{center}
\begin{center}
SPACE
\end{center}
\begin{center}
SPACE
\end{center}
\begin{center}
SPACE
\end{center}
\begin{center}
SPACE
\end{center}
\begin{center}
SPACE
\end{center}
\begin{center}
SPACE
\end{center}

\subsection{Pseudo Factorization Systems\label{subsec:pseudo:factorization:systems}}

A pseudo-factorization system in an order-enriched category
relaxes the notion of 
a factorization system in a category
by using equivalences instead of isomorphisms,
and only requires uniqueness up to equivalence.
Let $\mathsf{C}$ be an arbitrary order-enriched category.
A \emph{pseudo-factorization system} in $\mathsf{C}$ is a pair 
$\langle \mathsf{E}, \mathsf{M} \rangle$ of classes of $\mathsf{C}$-morphisms satisfying the following conditions\footnote{Pseudo-factorization systems are often compatible with ordinary factorization systems. 
A \emph{tiered factorization system} in $\mathsf{C}$ is a quadruple 
$\langle \tilde{\mathsf{E}}, \mathsf{E}, \mathsf{M}, \tilde{\mathsf{M}} \rangle$,
consisting of an ordinary factorization system $\langle \mathsf{E}, \mathsf{M} \rangle$
within a pseudo-factorization system $\langle \tilde{\mathsf{E}}, \tilde{\mathsf{M}} \rangle$;
so that $\mathsf{E} \subseteq \tilde{\mathsf{E}}$ and $\mathsf{M} \subseteq \tilde{\mathsf{M}}$.}. 
{\bfseries Subcategories:}
All $\mathsf{C}$-equivalences (and hence also all $\mathsf{C}$-isomorphisms) are in $\mathsf{E} \,\cap\, \mathsf{M}$. 
Both $\mathsf{E}$ and $\mathsf{M}$ are closed under $\mathsf{C}$-composition.
Hence,
$\mathsf{E}$ and $\mathsf{M}$ are $\mathsf{C}$-subcategories with the same objects as $\mathsf{C}$.
{\bfseries Existence:} 
Every $\mathsf{C}$-morphism $f : A \rightarrow B$ has an $\langle \mathsf{E}, \mathsf{M} \rangle$-factorization\footnote{An $\langle \mathsf{E}, \mathsf{M} \rangle$-factorization is a quadruple $(A, e, C, m, B)$ where 
$e : A \rightarrow C$ and $m : C \rightarrow B$ 
is a composable pair of $\mathsf{C}$-morphisms 
with $e \in \mathsf{E}$ and $m \in \mathsf{M}$.};
that is,
there is an $\langle \mathsf{E}, \mathsf{M} \rangle$-factorization $(A, e, C, m, B)$ 
and $f$ is its composition\footnote{In this paper, all compositions are written in diagrammatic form.} $f = e \cdot m$.
{\bfseries Diagonalization:}
For every diagonalization square there is a diagonal,
unique up to equivalence;
that is, 
for every pseudo-commutative square $e \cdot s \equiv r \cdot m$ of $\mathsf{C}$-morphisms 
with $e \in \mathsf{E}$ and $m \in \mathsf{M}$, 
there is a $\mathsf{C}$-morphism $d$, 
unique up to equivalence, 
with $e \cdot d \equiv r$ and $d \cdot m \equiv s$.
This diagonalization condition implies the following condition.
{\bfseries Uniqueness:}
Any two $\langle \mathsf{E}, \mathsf{M} \rangle$-factorizations are equivalent; 
that is, 
if $(A, e, C, m, B)$ and $(A, e^\prime, C^\prime, m^\prime, B)$ 
are two $\langle \mathsf{E}, \mathsf{M} \rangle$-factorizations of $f : A \rightarrow B$, 
then there is a $\mathsf{C}$-equivalence $h : C \equiv C^\prime$, 
unique up to equivalence, 
with $e \cdot h \equiv e^\prime$ and $h \cdot m^\prime \equiv m$.
An \emph{epi-mono pseudo-factorization system} is one 
where $\mathsf{E}$ is contained in the class of $\mathsf{C}$-pseudo-epimorphisms
and $\mathsf{M}$ is contained in the class of $\mathsf{C}$-pseudo-monomorphisms.
For an epi-mono pseudo-factorization system,
the diagonalization condition is equivalent to the uniqueness condition.

Let $\mathsf{C}^{\tilde{\mathsf{2}}}$ denote the pseudo-arrow category\footnote{Recall that $\mathsf{2}$ is the two-object category, pictured as $\bullet \rightarrow \bullet$, with one non-trivial morphism. The pseudo-arrow category $\mathsf{C}^{\tilde{\mathsf{2}}}$ of the order-enriched category $\mathsf{C}$ is (isomorphic to) the pseudo-functor category $\tilde{[}\mathsf{2}, \mathsf{C}\tilde{]}$ of order-enriched functors and pseudo-natural transformations.} of $\mathsf{C}$.
An object of $\mathsf{C}^{\tilde{\mathsf{2}}}$ is a triple $(A, f, B)$,
where $f : A \rightarrow B$ is a $\mathsf{C}$-morphism.
A morphism of $\mathsf{C}^{\tilde{\mathsf{2}}}$,
$(a, b) : (A_1, f_1, B_1) \rightarrow (A_2, f_2, B_2)$,
is a pair of $\mathsf{C}$-morphisms $a : A_1 \rightarrow A_2$ and $b : B_1 \rightarrow B_2$ that form a pseudo-commutative square $a \cdot f_2 \equiv f_1 \cdot b$.
There are source and target projection functors
$\partial_0^{\mathsf{C}}, \partial_1^{\mathsf{C}} : \mathsf{C}^{\tilde{\mathsf{2}}} \rightarrow \mathsf{C}$
and an arrow natural transformation
$\alpha_{\mathsf{C}} : \partial_0^{\mathsf{C}} \Rightarrow \partial_1^{\mathsf{C}} : \mathsf{C}^{\tilde{\mathsf{2}}} \rightarrow \mathsf{C}$
with component
$\alpha_{\mathsf{C}}(A, f, B) = f : A \rightarrow B$
(background of Figure~\ref{pseudo-factorization-equivalence}).
Let $\mathsf{E}^{\tilde{\mathsf{2}}}$ denote the full subcategory of $\mathsf{C}^{\tilde{\mathsf{2}}}$
whose objects are the morphisms in $\mathsf{E}$.
Make the same definitions for $\mathsf{M}^{\tilde{\mathsf{2}}}$.
Just as for $\mathsf{C}^{\tilde{\mathsf{2}}}$,
the category $\mathsf{E}^{\tilde{\mathsf{2}}}$ has source and target projection functors
$\partial_0^\mathsf{E}, \partial_1^\mathsf{E} : \mathsf{E}^{\tilde{\mathsf{2}}} \rightarrow \mathsf{C}$
and an arrow natural transformation
$\alpha_{\mathsf{E}} : \partial_0^\mathsf{E} \Rightarrow \partial_1^\mathsf{E} : \mathsf{E}^{\tilde{\mathsf{2}}} \rightarrow \mathsf{C}$
(foreground of Figure~\ref{pseudo-factorization-equivalence}).
The same is true for $\mathsf{M}^{\tilde{\mathsf{2}}}$.
Let $\mathsf{E} \tilde{\odot} \mathsf{M}$ denote the category
of $\langle \mathsf{E}, \mathsf{M} \rangle$-pseudo-factorizations
(top foreground of Figure~\ref{pseudo-factorization-equivalence}),
whose objects are $\langle \mathsf{E}, \mathsf{M} \rangle$-factorizations $(A, e, C, m, B)$,
and whose morphisms $(a, c, b) : (A_1, e_1, C_1, m_1, B_1) \rightarrow (A_2, e_2, C_2, m_2, B_2)$
are $\mathsf{C}$-morphism triples
where $(a, c) : (A_1, e_1, C_1) \rightarrow (A_2, e_2, C_2)$ 
is an $\mathsf{E}^{\tilde{\mathsf{2}}}$-morphism
and $(c, b) : (C_1, m_1, B_1) \rightarrow (C_2, m_2, B_2)$ 
is an $\mathsf{M}^{\tilde{\mathsf{2}}}$-morphism.
$\mathsf{E} \tilde{\odot} \mathsf{M}
= \mathsf{E}^{\tilde{\mathsf{2}}} \times_{\mathsf{C}} \mathsf{M}^{\tilde{\mathsf{2}}}$
is the pullback (in the category of categories)
of the $1^{\mathrm{st}}$-projection of $\mathsf{E}^{\tilde{\mathsf{2}}}$
and the $0^{\mathrm{th}}$-projection of $\mathsf{M}^{\tilde{\mathsf{2}}}$.
There is a composition functor
$\circ_{\mathsf{C}} : \mathsf{E} \tilde{\odot} \mathsf{M} \rightarrow \mathsf{C}^{\tilde{\mathsf{2}}}$
that commutes with projections:
on objects $\circ_{\mathsf{C}}(A, e, C, m, B) = (A, e \circ_{\mathsf{C}} m, B)$,
and on morphisms $\circ_{\mathsf{C}}(a, c, b) = (a, b)$.

An $\langle \mathsf{E}, \mathsf{M} \rangle$-factorization system with choice has 
(1) a specified factorization for each $\mathsf{C}$-morphism,
and
(2) a specified diagonal for each diagonalization square in $\mathsf{C}$;
that is,
(1) there is a choice function 
from the class of $\mathsf{C}$-morphisms 
to the class of $\langle \mathsf{E}, \mathsf{M} \rangle$-factorizations, 
mapping each triple $(A,f,B)$ corresponding to a $\mathsf{C}$-morphism $f : A \rightarrow B$ 
to one of its factorizations $\gamma(A,f,B) = (A,e,C,m,B)$,
and
(2) there is a choice function 
from the class of diagonalization squares in $\mathsf{C}$
to the class of $\mathsf{C}$-morphisms, 
mapping each diagonalization square $(A,(e,C,s),(r,D,m),B)$ 
where $e \cdot s \equiv r \cdot m$ with $e \in \mathsf{E}$ and $m \in \mathsf{M}$,
to one of its diagonals 
$\gamma(A,(e,C,s),(r,D,m),B) = d$ where $d : C \rightarrow D$ is a $\mathsf{C}$-morphism
satisfying $e \cdot d \equiv r$ and $d \cdot m \equiv s$.
To reiterate,
diagonalization must be chosen in addition to factorization.
When choice is specified,
there is a factorization \underline{pseudo}-functor
$\div_{\mathsf{C}} 
: \mathsf{C}^{\tilde{\mathsf{2}}} \rightarrow \mathsf{E} \tilde{\odot} \mathsf{M}$,
which is defined on objects 
$\div_{\mathsf{C}}(A,f,B) = \gamma(A,f,B) = (A,e,C,m,B)$
as the chosen $\langle \mathsf{E}, \mathsf{M} \rangle$-factorization, 
and 
which is defined on morphisms 
$\div_{\mathsf{C}}\left((a,b):(A_1,f_1,B_1)\rightarrow(A_2,f_2,B_2)\right) 
= (a, c, b) : \div_{\mathsf{C}}(A_1,f_1,B_1) \rightarrow \div_{\mathsf{C}}(A_2,f_2,B_2)$
via the chosen diagonal $c = \gamma(A_1,(e_1,C_1,m_1{\cdot}b),(q{\cdot}e_2,C_2,m_2),B_2)$.

\emph{Show that $\div_{\mathsf{C}}$ is functorial.}

Clearly,
factorization followed by composition is the identity
$\div_{\mathsf{C}} \circ\, \circ_{\mathsf{C}} 
= \mathsf{id}_{\mathsf{C}}$.
By uniqueness of factorization (up to isomorphism)
composition followed by factorization is an isomorphism
$\circ_{\mathsf{C}} \,\circ \div_{\mathsf{C}} 
\cong \mathsf{id}_{\mathsf{E} \tilde{\odot} \mathsf{M}}$.

\begin{theorem} [General Equivalence] \label{pseudo-general-equivalence}
When an order-enriched category $\mathsf{C}$ has an $\langle \mathsf{E}, \mathsf{M} \rangle$-pseudo-factorization system with choice,
the $\mathsf{C}$-pseudo-arrow category
is equivalent (Figure~\ref{pseudo-factorization-equivalence}) to
the $\langle \mathsf{E}, \mathsf{M} \rangle$-pseudo-factorization category
\[\mathsf{C}^{\tilde{\mathsf{2}}} \equiv \mathsf{E} \tilde{\odot} \mathsf{M}.\]
\end{theorem}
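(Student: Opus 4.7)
The plan is to exhibit the factorization assignment $\div_{\mathsf{C}}$ and the composition functor $\circ_{\mathsf{C}}$ as a pseudo-inverse pair between $\mathsf{C}^{\tilde{\mathsf{2}}}$ and $\mathsf{E}\tilde{\odot}\mathsf{M}$, in close analogy with Theorem~\ref{general-equivalence}, but with strict equalities and isomorphisms relaxed to equivalences and pseudo-natural equivalences wherever the corresponding uniqueness clause has been weakened.

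First I would verify that $\div_{\mathsf{C}}\colon \mathsf{C}^{\tilde{\mathsf{2}}}\rightarrow \mathsf{E}\tilde{\odot}\mathsf{M}$ is a pseudo-functor. It is well-defined on objects by the chosen factorization, and on morphisms by the chosen diagonal $c$. Preservation of identities holds only up to equivalence: for the diagonalization square induced by $(1_A,1_B)$, the identity $1_C$ is itself a compatible diagonal, so diagonal-uniqueness forces the chosen diagonal to be equivalent to $1_C$, giving $\div_{\mathsf{C}}(1_{(A,f,B)})\equiv 1_{\div_{\mathsf{C}}(A,f,B)}$. Preservation of composition reduces to the observation that the diagrammatic composite $c_1\cdot c_2$ of two chosen diagonals satisfies (by pasting of pseudo-commutative triangles) the diagonalization equations for the composite outer square, and is therefore equivalent to the chosen diagonal for that composite. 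In both cases it is precisely the slackening of diagonal uniqueness from equality to equivalence that forces $\div_{\mathsf{C}}$ to be a pseudo-functor rather than a strict functor.

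Next I would address the two round-trip relations. The composite $\div_{\mathsf{C}}\circ\circ_{\mathsf{C}}$ equals $\mathsf{id}_{\mathsf{C}^{\tilde{\mathsf{2}}}}$: on an object $(A,f,B)$, composing the chosen factorization $(A,e,C,m,B)$ returns $(A,e\cdot m,B)=(A,f,B)$, while on a morphism $(a,c,b)$ the projection $\circ_{\mathsf{C}}$ forgets $c$ and $\div_{\mathsf{C}}$ then reinstates a chosen diagonal; strict equality on morphisms is obtained by adopting the convention that if a diagonalization square already carries a compatible diagonal, the choice function returns that diagonal. The other round-trip $\circ_{\mathsf{C}}\circ\div_{\mathsf{C}}\equiv\mathsf{id}_{\mathsf{E}\tilde{\odot}\mathsf{M}}$ is only a pseudo-natural equivalence, whose component at $(A,e,C,m,B)$ is the triple $(1_A,h,1_B)\colon(A,e',C',m',B)\rightarrow(A,e,C,m,B)$ supplied by the Uniqueness clause from the chosen factorization $(A,e',C',m',B)$ of $e\cdot m$, where $h\colon C'\equiv C$ satisfies $e'\cdot h\equiv e$ and $h\cdot m\equiv m'$. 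Pseudo-naturality in a morphism $(a,c,b)$ reduces once again to the unique-up-to-equivalence property of diagonals.

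The main obstacle is the bookkeeping of diagonal choices in the pseudo-functoriality step: one must track equivalences rather than equalities through every pasted diagonalization square, and invoke the Subcategory closure of $\mathsf{E}$ and $\mathsf{M}$ under $\mathsf{C}$-composition to ensure that the composite diagonals remain inside $\mathsf{E}\tilde{\odot}\mathsf{M}$. Once $\div_{\mathsf{C}}$ is known to be a pseudo-functor and the two round-trip relations are in place, they assemble into the asserted equivalence of order-enriched categories $\mathsf{C}^{\tilde{\mathsf{2}}}\equiv \mathsf{E}\tilde{\odot}\mathsf{M}$, mediated by $\div_{\mathsf{C}}$ and $\circ_{\mathsf{C}}$ as displayed in Figure~\ref{pseudo-factorization-equivalence}.
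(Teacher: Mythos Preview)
Your approach is essentially the paper's: exhibit $\div_{\mathsf{C}}$ and $\circ_{\mathsf{C}}$ as a pseudo-inverse pair, checking pseudo-functoriality of $\div_{\mathsf{C}}$ via diagonal-uniqueness and then verifying the two round trips. The paper's own argument is in fact sketchier than yours (it even leaves ``Show that $\div_{\mathsf{C}}$ is functorial'' as an unfinished remark), so your expansion of the pseudo-functoriality check is apposite.

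One small slip: in your treatment of $\div_{\mathsf{C}}\circ\circ_{\mathsf{C}}=\mathsf{id}_{\mathsf{C}^{\tilde{\mathsf{2}}}}$ you describe the morphism case as starting with a triple $(a,c,b)$ and having $\circ_{\mathsf{C}}$ ``forget $c$'' before $\div_{\mathsf{C}}$ reinstates a chosen diagonal. That is the other composite. For $\div_{\mathsf{C}}\circ\circ_{\mathsf{C}}$ (diagrammatic order, domain $\mathsf{C}^{\tilde{\mathsf{2}}}$) the input morphism is a pair $(a,b)$; $\div_{\mathsf{C}}$ produces $(a,c,b)$ and $\circ_{\mathsf{C}}$ returns $(a,b)$, so strict equality holds with no special convention on the choice function. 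Your ad-hoc convention is therefore unnecessary here; save the equivalence-only argument for $\circ_{\mathsf{C}}\circ\div_{\mathsf{C}}$, where the Uniqueness clause gives the comparison $(1_A,h,1_B)$ exactly as you describe.
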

This equivalence is mediated by factorization and composition.

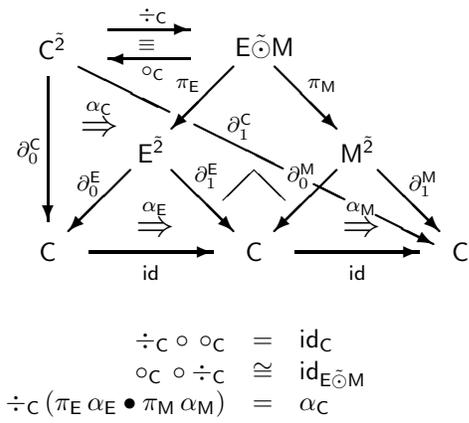
\begin{figure}
\begin{center}
\setlength{\unitlength}{0.78pt}
\begin{tabular}{c}
\begin{picture}(200,110)(-83,0)
\put(0,0){\begin{picture}(100,100)(0,0)
\put(5,75){\makebox(100,50){$\mathsf{E} \tilde{\odot} \mathsf{M}$}}
\put(-50,25){\makebox(100,50){$\mathsf{E}^{\tilde{\mathsf{2}}}$}}
\put(50,25){\makebox(100,50){$\mathsf{M}^{\tilde{\mathsf{2}}}$}}
\put(-100,-25){\makebox(100,50){$\mathsf{C}$}}
\put(0,-25){\makebox(100,50){$\mathsf{C}$}}
\put(100,-25){\makebox(100,50){$\mathsf{C}$}}
\put(-32,56){\makebox(100,50){\footnotesize{$\pi_\mathsf{E}$}}}
\put(33,56){\makebox(100,50){\footnotesize{$\pi_\mathsf{M}$}}}
\put(-80,8){\makebox(100,50){\footnotesize{$\partial_0^\mathsf{E}$}}}
\put(-23,12){\makebox(100,50){\footnotesize{$\partial_1^\mathsf{E}$}}}
\put(23,12){\makebox(100,50){\footnotesize{$\partial_0^\mathsf{M}$}}}
\put(82,8){\makebox(100,50){\footnotesize{$\partial_1^\mathsf{M}$}}}
\put(-50,-35){\makebox(100,50){\footnotesize{$\mathsf{id}$}}}
\put(50,-35){\makebox(100,50){\footnotesize{$\mathsf{id}$}}}
\put(-48,-4){\makebox(100,50){\footnotesize{$\alpha_{\mathsf{E}}$}}}
\put(-48,-14){\makebox(100,50){\Large{$\Rightarrow$}}}
\put(52,-4){\makebox(100,50){\footnotesize{$\alpha_{\mathsf{M}}$}}}
\put(52,-14){\makebox(100,50){\Large{$\Rightarrow$}}}
\put(50,25){\begin{picture}(30,15)(0,-15)
\put(0,0){\line(-1,-1){15}}
\put(0,0){\line(1,-1){15}}
\end{picture}}
\thicklines
\put(40,90){\vector(-1,-1){30}}
\put(60,90){\vector(1,-1){30}}
\put(-10,40){\vector(-1,-1){30}}
\put(10,40){\vector(1,-1){30}}
\put(-30,0){\vector(1,0){60}}
\put(90,40){\vector(-1,-1){30}}
\put(110,40){\vector(1,-1){30}}
\put(70,0){\vector(1,0){60}}
\end{picture}}
\thicklines
\put(-21,108){\vector(1,0){40}}
\put(-49,90){\makebox(100,50){\footnotesize{$\div_{\mathsf{C}}$}}}
\put(-52,75.5){\makebox(100,50){\footnotesize{$\equiv$}}}
\put(-49,62){\makebox(100,50){\footnotesize{$\circ_{\mathsf{C}}$}}}
\put(19,94){\vector(-1,0){40}}
\put(-35,90){\line(2,-1){44}}
\put(21,62){\line(2,-1){43}}
\put(79,33){\line(2,-1){15}}
\put(109,18){\vector(2,-1){28}}
\put(-50,85){\vector(0,-1){70}}
\put(-98,75){\makebox(100,50){$\mathsf{C}^{\tilde{\mathsf{2}}}$}}
\put(-109,24){\makebox(100,50){\footnotesize{$\partial_0^{\mathsf{C}}$}}}
\put(-7,36){\makebox(100,50){\footnotesize{$\partial_1^{\mathsf{C}}$}}}
\put(-75,45){\makebox(100,50){\footnotesize{$\alpha_{\mathsf{C}}$}}}
\put(-75,35){\makebox(100,50){\Large{$\Rightarrow$}}}
\end{picture}
\\ \\ \\
$\begin{array}{rcl}
\div_{\mathsf{C}} \circ\, \circ_{\mathsf{C}} & = & \mathsf{id}_{\mathsf{C}} 
\\
\circ_{\mathsf{C}} \,\circ \div_{\mathsf{C}} & \cong & \mathsf{id}_{\mathsf{E} \tilde{\odot} \mathsf{M}} 
\\
\div_{\mathsf{C}} 
\left(
\pi_\mathsf{E} \, \alpha_{\mathsf{E}}
\bullet
\pi_\mathsf{M} \, \alpha_{\mathsf{M}}
\right)
& = &
\alpha_{\mathsf{C}}
\end{array}$
\end{tabular}
\end{center}
\caption{Pseudo-factorization Equivalence}
\label{pseudo-factorization-equivalence}
\end{figure}


\section{Conceptual Structures Categories\label{sec:conceptual:structures:categories}}

An (abstract) conceptual structures (CS) category is a category 
in which an abstract version of the concept lattice construction can be defined.
\begin{definition}
An (abstract) conceptual structures (CS) category $\mathsf{C}$ is an order-enriched category that possesses finite limits.
\end{definition}
Recall that all finite limits are unique up to isomorphism,
and that the limiting cone for all finite limits is collectively monomorphic.
Here are some special finite limits.
There is a terminal object $1$ in $\mathsf{C}$.
For any two $\mathsf{C}$-objects $A_0,A_1 \in |\mathsf{C}|$,
there is a binary product $A_0 {\times} A_1$ with product projections $\pi_0 : A_0 {\times} A_1 \rightarrow A_0$ and $\pi_1 : A_0 {\times} A_1 \rightarrow A_1$.
For any parallel pair of $\mathsf{C}$-morphisms $f,g : A \rightarrow B$,
there is an equalizer $m : E \rightarrow A$.
For any opspan of $\mathsf{C}$-morphisms $f_0 : A_0 \rightarrow B \leftarrow A_1 : f_1$,
there is a pullback $A_0 {\times_{B}} A_1$ with pullback projections $\pi_0 : A_0 {\times} A_1 \rightarrow A_0$ and $\pi_1 : A_0 {\times} A_1 \rightarrow A_1$
satisfying $\pi_0 \circ f_0 = \pi_1 \circ f_1$.
For any parallel pair of $\mathsf{C}$-morphisms $f,g : A \rightarrow B$,
there is an equalizer $m : E \rightarrow A$.

In this section let $\mathsf{C}$ be a fixed CS category,
and assume that $g = \langle \check{g}, \hat{g} \rangle : A_0 \rightarrow A_1$ 
is a $\mathsf{C}$-adjunction with posetal source and target.

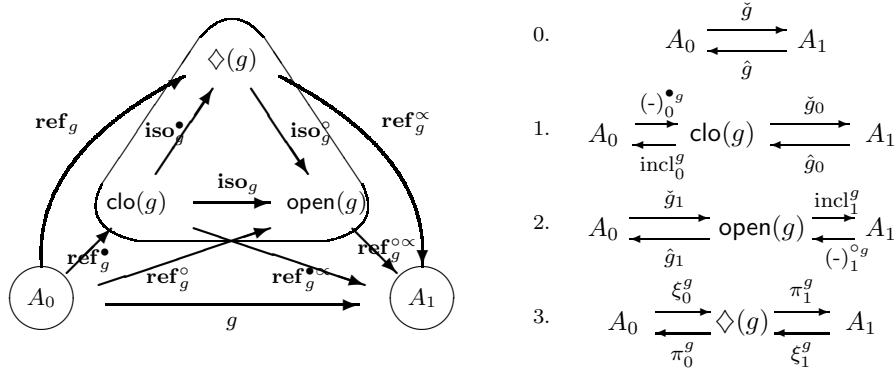
\begin{figure}
\begin{center}
\begin{tabular}{c@{\hspace{30pt}}c}
\begin{tabular}{c}
\setlength{\unitlength}{1.2pt}
\begin{picture}(120,70)(-30,-25)
\put(0,30){\makebox(60,30){\small{$\diamondsuit(g)$}}}
\put(-30,-15){\makebox(60,30){\small{$\mathsf{clo}(g)$}}}
\put(30,-15){\makebox(60,30){\small{$\mathsf{open}(g)$}}}
\put(-60,-45){\makebox(60,30){\small{$A_0$}}}
\put(60,-45){\makebox(60,30){\small{$A_1$}}}
\put(15,-53){\makebox(30,30){\footnotesize{$g$}}}
\put(-40,10){\makebox(30,30){\footnotesize{$\mathbf{ref}_{g}$}}}
\put(70,10){\makebox(30,30){\footnotesize{$\mathbf{ref}^\propto_{g}$}}}
\put(-30,-33){\makebox(30,30){\footnotesize{$\mathbf{ref}^\bullet_{g}$}}}
\put(64,-30){\makebox(30,30){\footnotesize{$\mathbf{ref}^{\circ \propto}_{g}$}}}
\put(-5,-39){\makebox(30,30){\footnotesize{$\mathbf{ref}^\circ_{g}$}}}
\put(37,-39){\makebox(30,30){\footnotesize{$\mathbf{ref}^{\bullet \propto}_{g}$}}}
\put(17,-9){\makebox(30,30){\footnotesize{$\mathbf{iso}_{g}$}}}
\put(-6,7){\makebox(30,30){\footnotesize{$\mathbf{iso}^\bullet_{g}$}}}
\put(40,7){\makebox(30,30){\footnotesize{$\mathbf{iso}^\circ_{g}$}}}
\put(-30,-30){\circle{20}}
\put(90,-30){\circle{20}}
\qbezier(-11,6)(-18,-8)(0,-12)
\qbezier(20,52.2)(30,64)(40,52.2)
\qbezier(71,6)(78,-8)(60,-12)
\put(-11,6){\line(2,3){31}}
\put(0,-12){\line(1,0){60}}
\put(71,6){\line(-2,3){31}}
\thicklines
\put(-10,-32){\vector(1,0){80}}
\put(6,9){\vector(2,3){18}}
\put(18,0){\vector(1,0){24}}
\put(36,36){\vector(2,-3){18}}
\put(-22,-22){\vector(1,1){14}}
\put(18,-8){\vector(3,-1){54}}
\put(68,-8){\vector(1,-1){14}}
\put(-12,-26){\vector(3,1){54}}
\qbezier(-30,-20)(-33,15)(15,40)
\put(15,40){\vector(3,2){0}}
\qbezier(90,-20)(93,15)(45,40)
\put(90,-20){\vector(0,-1){0}}
\end{picture}
\end{tabular}
&
\begin{tabular}{r@{\hspace{20pt}}c}
\footnotesize{0.}
&
\setlength{\unitlength}{1.0pt}
\begin{picture}(60,20)(0,0)
\put(-25,-15){\makebox(60,30){$A_0$}}
\put(25,-15){\makebox(60,30){$A_1$}}
\put(15,-4){\makebox(30,30){\footnotesize{$\check{g}$}}}
\put(15,-26){\makebox(30,30){\footnotesize{$\hat{g}$}}}
\put(15,4){\vector(1,0){30}}
\put(45,-4){\vector(-1,0){30}}
\end{picture}
\\ \\
\footnotesize{1.}
&
\begin{picture}(105,20)(0,0)
\put(-30,-15){\makebox(60,30){$A_0$}}
\put(15,-15){\makebox(60,30){$\mathsf{clo}(g)$}}
\put(75,-15){\makebox(60,30){$A_1$}}
\put(7.5,-3){\makebox(30,30){\footnotesize{$(\mbox{-})_0^{\bullet_{g}}$}}}
\put(7.5,-27){\makebox(30,30){\footnotesize{$\mathrm{incl}_0^{g}$}}}
\put(65,-4){\makebox(30,30){\footnotesize{$\check{g}_0$}}}
\put(65,-26){\makebox(30,30){\footnotesize{$\hat{g}_0$}}}
\put(12,4){\vector(1,0){16}}
\put(27,-4){\vector(-1,0){16}}
\put(63,4){\vector(1,0){30}}
\put(93,-4){\vector(-1,0){30}}
\end{picture}
\\ \\
\footnotesize{2.}
&
\begin{picture}(105,20)(0,0)
\put(-30,-15){\makebox(60,30){$A_0$}}
\put(30,-15){\makebox(60,30){$\mathsf{open}(g)$}}
\put(75,-15){\makebox(60,30){$A_1$}}
\put(73.5,-5){\makebox(30,30){\footnotesize{$\mathrm{incl}_1^{g}$}}}
\put(77.5,-26){\makebox(30,30){\footnotesize{$(\mbox{-})_1^{\circ_{g}}$}}}
\put(12,-4){\makebox(30,30){\footnotesize{$\check{g}_1$}}}
\put(12,-26){\makebox(30,30){\footnotesize{$\hat{g}_1$}}}
\put(10,4){\vector(1,0){30}}
\put(40,-4){\vector(-1,0){30}}
\put(79,4){\vector(1,0){16}}
\put(95,-4){\vector(-1,0){16}}
\end{picture}
\\ \\
\footnotesize{3.}
&
\begin{picture}(90,20)(0,0)
\put(-30,-15){\makebox(60,30){$A_0$}}
\put(15,-15){\makebox(60,30){$\diamondsuit(g)$}}
\put(60,-15){\makebox(60,30){$A_1$}}
\put(7.5,-3){\makebox(30,30){\footnotesize{$\xi_0^{g}$}}}
\put(7.5,-27){\makebox(30,30){\footnotesize{$\pi_0^{g}$}}}
\put(52.5,-3){\makebox(30,30){\footnotesize{$\pi_1^{g}$}}}
\put(52.5,-27){\makebox(30,30){\footnotesize{$\xi_1^{g}$}}}
\put(12,4){\vector(1,0){21}}
\put(33,-4){\vector(-1,0){21}}
\put(57,4){\vector(1,0){21}}
\put(78,-4){\vector(-1,0){21}}
\end{picture}
\\ \\
\end{tabular}
\end{tabular}
\end{center}
\caption{Factorizations of an Adjunction}
\label{adjunction-factorization}
\end{figure}

\subsection{Objects\label{subsec:objects:cs}}

Let $(\mathbf{A}_0,\mathbf{g},\mathbf{A}_1)$ be
an object in the arrow category $\mathsf{Ord}^{\mathsf{2}}$
consisting of two preorders and an adjunction 
$\mathbf{g} : \mathbf{A}_0 \rightleftharpoons \mathbf{A}_1$
between them.
The \emph{adjunction diagram} of $\mathbf{g}$ 
is the diagram in the category $\mathsf{Ord}$
consisting of the two preorders $\mathbf{A}_0$ and $\mathbf{A}_1$
and the two monotonic functions
$\check{\mathbf{g}} : \mathbf{A}_0 \rightarrow \mathbf{A}_1$
and $\hat{\mathbf{g}} : \mathbf{A}_1 \rightarrow \mathbf{A}_0$.
A \emph{cone} over the adjunction diagram of $\mathbf{g}$ consists of 
a vertex preorder $\mathbf{C}$ and two component monotonic functions
$\mathbf{c}_0 : \mathbf{C} \rightarrow \mathbf{A}_0$
and $\mathbf{c}_1 : \mathbf{C} \rightarrow \mathbf{A}_1$
that satisfy the bipolar pair constraints
$\mathbf{c}_0 \cdot \check{\mathbf{g}} = \mathbf{c}_1$
and $\mathbf{c}_1 \cdot \hat{\mathbf{g}} = \mathbf{c}_0$.
The \emph{axis} order $\diamondsuit(\mathbf{g})$
is the finite limit in $\mathsf{Ord}$ of the adjunction diagram of $\mathbf{g}$.
More precisely,
the axis order is the vertex preorder of a limiting cone for the adjunction diagram of $\mathbf{g}$.
It comes equipped with two projection monotonic functions
$\pi_0^{\mathbf{g}} : \diamondsuit(\mathbf{g}) \rightarrow \mathbf{A}_0$
and $\pi_1^{\mathbf{g}} : \diamondsuit(\mathbf{g}) \rightarrow \mathbf{A}_1$
that satisfy the bipolar pair constraints
$\pi_0^{\mathbf{g}} \cdot \check{\mathbf{g}} = \pi_1^{\mathbf{g}}$
and $\pi_1^{\mathbf{g}} \cdot \hat{\mathbf{g}} = \pi_0^{\mathbf{g}}$.
The limiting cone is optimal.
For any cone over the adjunction diagram of $\mathbf{g}$ as above,
there is a unique mediating monotonic function
$\mathbf{c} : \mathbf{C} \rightarrow \diamondsuit(\mathbf{g})$
satisfying the projection constraints
$\mathbf{c} \cdot \pi_0^{\mathbf{g}} = \mathbf{c}_0$
and
$\mathbf{c} \cdot \pi_1^{\mathbf{g}} = \mathbf{c}_1$.

Since the closure monotonic function
$(\mbox{-})^{\bullet_{\mathbf{g}}} : \mathbf{A}_0 \rightarrow \mathbf{A}_0$
and the left adjoint monotonic function $\check{\mathbf{g}} : \mathbf{A}_0 \rightarrow \mathbf{A}_1$
have a common source and satisfy the bipolar pair constraints
$(\mbox{-})^{\bullet_{\mathbf{g}}} \cdot \check{\mathbf{g}} = \check{\mathbf{g}}$
and $\check{\mathbf{g}} \cdot \hat{\mathbf{g}} = (\mbox{-})^{\bullet_{\mathbf{g}}}$,
they form a cone over the adjunction diagram of $\mathbf{g}$ 
called the \emph{source embedding cone}.
The \emph{source embedding} monotonic function 
$\xi_0^{\mathbf{g}} : \mathbf{A}_0 \rightarrow \diamondsuit(\mathbf{g})$
is the mediating monotonic function for the source embedding cone.
It satisfies the projection constraints
$\xi_0^{\mathbf{g}} \cdot \pi_0^{\mathbf{g}} 
= (\mbox{-})^{\bullet_{\mathbf{g}}}
\geq \mathrm{id}_{\mathbf{A}_0}$
and $\xi_0^{\mathbf{g}} \cdot \pi_1^{\mathbf{g}} = \check{\mathbf{g}}$.
Consider the composite monotonic function
$\pi_0^{\mathbf{g}} \cdot \xi_0^{\mathbf{g}} 
: \diamondsuit(\mathbf{g}) \rightarrow \diamondsuit(\mathbf{g})$.
Since this satisfies the projection constraints
$\pi_0^{\mathbf{g}} \cdot \xi_0^{\mathbf{g}} \cdot \pi_0^{\mathbf{g}}
= \pi_0^{\mathbf{g}} \cdot (\mbox{-})^{\bullet_{\mathbf{g}}}
= \pi_0^{\mathbf{g}} \cdot \check{\mathbf{g}} \cdot \hat{\mathbf{g}}
= \pi_1^{\mathbf{g}} \cdot \hat{\mathbf{g}}
= \pi_0^{\mathbf{g}}$
and
$\pi_0^{\mathbf{g}} \cdot \xi_0^{\mathbf{g}} \cdot \pi_1^{\mathbf{g}}
= \pi_0^{\mathbf{g}} \cdot \check{\mathbf{g}}
= \pi_1^{\mathbf{g}}$,
by uniqueness of limit mediators,
it is the identity
$\pi_0^{\mathbf{g}} \cdot \xi_0^{\mathbf{g}} = \mathrm{id}_{\diamondsuit(\mathbf{g})}$.
Source embedding and projection form the \emph{extent reflection}
$\mathsf{ref}_{\mathbf{g}} 
= \langle \xi_0^{\mathbf{g}}, \pi_0^{\mathbf{g}} \rangle
: \mathbf{A}_0 \rightleftharpoons \diamondsuit(\mathbf{g})$.

Dually,
since the right adjoint monotonic function $\hat{\mathbf{g}} : \mathbf{A}_1 \rightarrow \mathbf{A}_0$
and the interior monotonic function
$(\mbox{-})^{\circ_{\mathbf{g}}} : \mathbf{A}_1 \rightarrow \mathbf{A}_1$
have a common source and satisfy the bipolar pair constraints
$\hat{\mathbf{g}} \cdot \check{\mathbf{g}} = (\mbox{-})^{\circ_{\mathbf{g}}}$
and $(\mbox{-})^{\circ_{\mathbf{g}}} \cdot \hat{\mathbf{g}} = \hat{\mathbf{g}}$,
they form a cone over the adjunction diagram of $\mathbf{g}$ 
called the \emph{target embedding cone}.
The \emph{target embedding} monotonic function 
$\xi_1^{\mathbf{g}} : \mathbf{A}_1 \rightarrow \diamondsuit(\mathbf{g})$
is the mediating monotonic function for the target embedding cone.
It satisfies the projection constraints
$\xi_1^{\mathbf{g}} \cdot \pi_0^{\mathbf{g}} = \hat{\mathbf{g}}$
and $\xi_0^{\mathbf{g}} \cdot \pi_1^{\mathbf{g}} 
= (\mbox{-})^{\circ_{\mathbf{g}}}
\leq \mathrm{id}_{\mathbf{A}_1}$.
Consider the composite monotonic function
$\pi_1^{\mathbf{g}} \cdot \xi_1^{\mathbf{g}} 
: \diamondsuit(\mathbf{g}) \rightarrow \diamondsuit(\mathbf{g})$.
Since this satisfies the projection constraints
$\pi_1^{\mathbf{g}} \cdot \xi_1^{\mathbf{g}} \cdot \pi_0^{\mathbf{g}}
= \pi_1^{\mathbf{g}} \cdot \hat{\mathbf{g}}
= \pi_0^{\mathbf{g}}$
and
$\pi_1^{\mathbf{g}} \cdot \xi_1^{\mathbf{g}} \cdot \pi_1^{\mathbf{g}}
= \pi_1^{\mathbf{g}} \cdot (\mbox{-})^{\circ_{\mathbf{g}}}
= \pi_1^{\mathbf{g}} \cdot \hat{\mathbf{g}} \cdot \check{\mathbf{g}}
= \pi_0^{\mathbf{g}} \cdot \check{\mathbf{g}}
= \pi_1^{\mathbf{g}}$,
by uniqueness of limit mediators,
it is the identity
$\pi_1^{\mathbf{g}} \cdot \xi_1^{\mathbf{g}} = \mathrm{id}_{\diamondsuit(\mathbf{g})}$.
Target projection and embedding form the \emph{intent coreflection} 
$\mathsf{ref}_{\mathbf{g}}^{\propto} 
= \langle \pi_1^{\mathbf{g}}, \xi_1^{\mathbf{g}} \rangle
: \diamondsuit(\mathbf{g}) \rightleftharpoons \mathbf{A}_1$.

Composition of the extent reflection and the intent coreflection is the original adjunction,
$\mathsf{ref}_{\mathbf{g}} \circ \mathsf{ref}_{\mathbf{g}}^{\propto} 
= \mathbf{g}$. 
The \emph{polar factorization} of $(\mathbf{A}_0,\mathbf{g},\mathbf{A}_1)$
is the quintuple
$(\mathbf{A}_0,\mathsf{ref}_{\mathbf{g}},\diamondsuit(\mathbf{g}),\mathsf{ref}_{\mathbf{g}}^{\propto},\mathbf{A}_1)$
consisting of the $\mathsf{Rel}^{\mathsf{2}}$-object (reflection)
$(\mathbf{A}_0,\mathsf{ref}_{\mathbf{g}},\diamondsuit(\mathbf{g}))$ 
and the $\mathsf{Rel}^{{\propto}, \mathsf{2}}$-object (coreflection)
$(\diamondsuit(\mathbf{g}),\mathsf{ref}_{\mathbf{g}}^{\propto},\mathbf{A}_1)$.

\subsubsection{Existence of Factorization\label{subsubsec:existence:of:factorization:cs}}

\paragraph{The Closed Polar Factorization.}

Since closure is strictly idempotent 
$(\mbox{-})^{\bullet_{g}} \circ (\mbox{-})^{\bullet_{g}} 
= (\mbox{-})^{\bullet_{g}}$,
it forms a cone over the closure equalizer diagram consisting of 
$(\mbox{-})^{\bullet_{g}} : A_0 \rightarrow A_0$ 
and 
$(\mbox{-})^{\bullet_{g}} : A_0 \rightarrow A_0$. 
The closure target restriction 
$(\mbox{-})_0^{\bullet_{g}} : A_0 \rightarrow \mathsf{clo}(g)$ 
is defined to be the unique mediating $\mathsf{C}$-morphism for this cone,
and hence satisfies 
$(\mbox{-})_0^{\bullet_{g}} \cdot \mathrm{incl}_0^{g} 
= (\mbox{-})^{\bullet_{g}} \geq 1_{A_0}$.
Being an equalizer, 
$\mathrm{incl}_0^{g} : \mathsf{clo}(g) \rightarrow A_0$
is a monomorphic $\mathsf{C}$-monomorphism. 
Since 
$\mathrm{incl}_0^{g} 
= \mathrm{incl}_0^{g} \cdot (\mbox{-})^{\bullet_{g}} 
= \mathrm{incl}_0^{g} \cdot (\mbox{-})_0^{\bullet_{g}} \cdot \mathrm{incl}_0^{g}$,
by cancellation on the right ($\mathsf{C}$-monomorphism)
$1_{\mathsf{clo}(g)}
= \mathrm{incl}_0^{g} \cdot (\mbox{-})_0^{\bullet_{g}}$.
Hence,
there is a $\mathsf{C}$-reflection 
$\mathsf{ref}^\bullet_{g}
= (\mbox{-})_0^{\bullet_{g}} \dashv \mathrm{incl}_0^{g}
: A_0 \rightleftharpoons \mathsf{clo}(g)$ 
called the \emph{closed reflection} of $g$.
Define the left adjoint source restriction
$\check{g}_0 
\doteq \mathrm{incl}_0^{g} \circ \check{g}
: \mathsf{clo}(g) \rightarrow A_0 \rightarrow A_1$. 
The left adjoint morphism factors through this:
$\check{g}
= (\mbox{-})_0^{\bullet_{g}} \circ \check{g}_0$,
since
$(\mbox{-})_0^{\bullet_{g}} \cdot \check{g}_0
= (\mbox{-})_0^{\bullet_{g}} \cdot \mathrm{incl}_0^{g} \cdot \check{g}
= (\mbox{-})^{\bullet_{g}} \cdot \check{g}
= \check{g}$.
Define the right adjoint target restriction
$\hat{g}_0 
\doteq \hat{g} \cdot (\mbox{-})_0^{\bullet_{g}}
: A_1 \rightarrow A_0 \rightarrow \mathsf{clo}(g)$. 
The right adjoint morphism factors through this restriction:
$\hat{g}
= \hat{g}_0 \cdot \mathrm{incl}_0^{g}$,
since
$\hat{g}_0 \cdot \mathrm{incl}_0^{g}
= \hat{g} \cdot (\mbox{-})_0^{\bullet_{g}} \cdot \mathrm{incl}_0^{g}
= \hat{g} \cdot (\mbox{-})^{\bullet_{g}}
= \hat{g} \cdot \check{g} \cdot \hat{g}
= \hat{g}$.
Since
$\hat{g}_0 \cdot \check{g}_0 
= \hat{g} \cdot (\mbox{-})_0^{\bullet_{g}}
\cdot \mathrm{incl}_0^{g} \cdot \check{g}
= \hat{g} \cdot (\mbox{-})^{\bullet_{g}} \cdot \check{g}
= \hat{g} \cdot \check{g} \cdot \hat{g} \cdot \check{g}
= (\mbox{-})^{\circ_{g}} \cdot (\mbox{-})^{\circ_{g}}
= (\mbox{-})^{\circ_{g}}
\leq 1_{A_1}$
and
$\check{g}_0 \cdot \hat{g}_0
= \mathrm{incl}_0^{g} \cdot \check{g}
\cdot \hat{g} \cdot (\mbox{-})_0^{\bullet_{g}}
= \mathrm{incl}_0^{g} \cdot (\mbox{-})^{\bullet_{g}} \cdot (\mbox{-})_0^{\bullet_{g}}
= \mathrm{incl}_0^{g} \cdot (\mbox{-})_0^{\bullet_{g}}
= 1_{\mathsf{clo}(g)}$,
there is a $\mathsf{C}$-coreflection 
$\mathsf{ref}_{g}^{\bullet \propto}
= \check{g}_0 \dashv \hat{g}_0
: \mathsf{clo}(g) \rightleftharpoons B$ 
called the \emph{closed coreflection} of $g$.
The original adjunction factors in terms of its closed reflection and coreflection
$g 
= \mathsf{ref}_{g}^\bullet \circ \mathsf{ref}_{g}^{\bullet \propto}$
(Figure~\ref{adjunction-factorization}, row 1).
The \emph{closed polar factorization} of $g$ is the quintuple
$(A_0, \mathsf{ref}^{\circ}_{g}, \mathsf{open}(g), \mathsf{ref}^{\circ, \propto}_{g}, A_1)$.

\paragraph{The Open Polar Factorization.}

Since interior is strictly idempotent 
$(\mbox{-})^{\circ_{g}} \circ (\mbox{-})^{\circ_{g}} 
= (\mbox{-})^{\circ_{g}}$,
it forms a cone over the interior equalizer diagram consisting of 
$(\mbox{-})^{\circ_{g}} : A_1 \rightarrow A_1$ 
and 
$(\mbox{-})^{\circ_{g}} : A_1 \rightarrow A_1$. 
The interior target restriction 
$(\mbox{-})_1^{\circ_{g}} : A_1 \rightarrow \mathsf{open}(g)$ 
is defined to be the unique mediating $\mathsf{C}$-morphism for this cone,
and hence satisfies 
$(\mbox{-})_1^{\circ_{g}} \cdot \mathrm{incl}_1^{g} 
= (\mbox{-})^{\circ_{g}} \geq 1_{A_1}$.
Being an equalizer, 
$\mathrm{incl}_1^{g} : \mathsf{open}(g) \rightarrow A_1$
is a monomorphic $\mathsf{C}$-monomorphism. 
Since 
$\mathrm{incl}_1^{g} 
= \mathrm{incl}_1^{g} \cdot (\mbox{-})^{\circ_{g}} 
= \mathrm{incl}_1^{g} \cdot (\mbox{-})_1^{\circ_{g}} \cdot \mathrm{incl}_1^{g}$,
by cancellation on the right ($\mathsf{C}$-monomorphism)
$1_{\mathsf{open}(g)}
= \mathrm{incl}_1^{g} \cdot (\mbox{-})_1^{\circ_{g}}$.
Hence, 
there is a $\mathsf{C}$-coreflection
$\mathsf{ref}^{\circ \propto}_{g}
= \mathrm{incl}_1^{g} \dashv (\mbox{-})_1^{\circ_{g}}
: \mathsf{open}(g) \rightleftharpoons A_1$ 
called the \emph{open coreflection} of $g$
Define the right adjoint source restriction
$\hat{g}_1 
\doteq \mathrm{incl}_1^{g} \circ \hat{g}
: \mathsf{open}(g) \rightarrow A_1 \rightarrow A_0$. 
The right adjoint morphism factors through this:
$\hat{g}
= (\mbox{-})_1^{\circ_{g}} \circ \hat{g}_1$,
since
$(\mbox{-})_1^{\circ_{g}} \cdot \hat{g}_1
= (\mbox{-})_1^{\circ_{g}} \cdot \mathrm{incl}_1^{g} \cdot \hat{g}
= (\mbox{-})^{\circ_{g}} \cdot \hat{g}
= \hat{g}$.
Define the left adjoint target restriction
$\check{g}_1 
\doteq \check{g} \cdot (\mbox{-})_1^{\circ_{g}}
: A_0 \rightarrow A_1 \rightarrow \mathsf{open}(g)$. 
The left adjoint morphism factors through this restriction:
$\check{g}
= \check{g}_1 \cdot \mathrm{incl}_1^{g}$,
since
$\check{g}_1 \cdot \mathrm{incl}_1^{g}
= \check{g} \cdot (\mbox{-})_1^{\circ_{g}} \cdot \mathrm{incl}_1^{g}
= \check{g} \cdot (\mbox{-})^{\circ_{g}}
= \check{g} \cdot \hat{g} \cdot \check{g}
= \check{g}$.
Since
$\check{g}_1 \cdot \hat{g}_1 
= \check{g} \cdot (\mbox{-})_1^{\circ_{g}}
\cdot \mathrm{incl}_1^{g} \cdot \hat{g}
= \check{g} \cdot (\mbox{-})^{\circ_{g}} \cdot \hat{g}
= \check{g} \cdot \hat{g} \cdot \check{g} \cdot \hat{g}
= (\mbox{-})^{\bullet_{g}} \cdot (\mbox{-})^{\bullet_{g}}
= (\mbox{-})^{\bullet_{g}}
\geq 1_{A_0}$
and
$\hat{g}_1 \cdot \check{g}_1
= \mathrm{incl}_1^{g} \cdot \hat{g}
\cdot \check{g} \cdot (\mbox{-})_1^{\circ_{g}}
= \mathrm{incl}_1^{g} \cdot (\mbox{-})^{\circ_{g}} \cdot (\mbox{-})_1^{\circ_{g}}
= \mathrm{incl}_1^{g} \cdot (\mbox{-})_1^{\circ_{g}}
= 1_{\mathsf{open}(g)}$,
there is a $\mathsf{C}$-reflection
$\mathsf{ref}_{g}^\circ
= \check{g}_1 \dashv \hat{g}_1
: A_0 \rightleftharpoons \mathsf{open}(g)$ 
called the \emph{open reflection} of $g$
The original adjunction factors in terms of its open reflection and coreflection
$g 
= \mathsf{ref}_{g}^\circ \circ \mathsf{ref}_{g}^{\circ \propto}$
(Figure~\ref{adjunction-factorization}, row 2).
The \emph{open polar factorization} of $g$ is the quintuple
$(A_0, \mathsf{ref}^{\circ}_{g}, \mathsf{open}(g), \mathsf{ref}^{\circ, \propto}_{g}, A_1)$.

\paragraph{The Polar Factorization.}

Consider the axis diagram in $\mathsf{C}$
consisting of the two opspans
$\mathrm{incl}_0^{g} : \mathsf{clo}(g) \rightarrow A_0
\leftarrow \mathsf{open}(g) : \hat{g}_1$
and
$\check{g}_0 : \mathsf{clo}(g) \rightarrow A_1
\leftarrow \mathsf{open}(g) : \mathrm{incl}_1^{g}$.
The axis $\mathsf{C}$-object $\diamondsuit(g)$ is the pullback of this diagram.
It comes equipped with two pullback projections
$\tilde{\pi}_0^{g} : \diamondsuit(g) \rightarrow \mathsf{clo}(g)$
and
$\tilde{\pi}_1^{g} : \diamondsuit(g) \rightarrow \mathsf{open}(g)$.
These satisfy
$\tilde{\pi}_0^{g} \circ \mathrm{incl}_0^{g}
= \tilde{\pi}_1^{g} \circ \hat{g}_1$
and
$\tilde{\pi}_0^{g} \circ \check{g}_0
= \tilde{\pi}_1^{g} \circ \mathrm{incl}_1^{g}$.
Since
$\tilde{\pi}_0^{g} \circ \check{g}_{01} \circ \mathrm{incl}_1^{g}
= \tilde{\pi}_0^{g} \circ \check{g}_0
= \tilde{\pi}_1^{g} \circ \mathrm{incl}_1^{g}$
and $\mathrm{incl}_1^{g}$ is a $\mathbf{C}$-monomorphism,
$\tilde{\pi}_0^{g} \circ \check{g}_{01} = \tilde{\pi}_1^{g}$.
Dually,
$\tilde{\pi}_1^{g} \circ \hat{g}_{01} = \tilde{\pi}_0^{g}$.
Define the extended projections
$\pi_0^{g}
= \tilde{\pi}_0^{g} \circ \mathrm{incl}_0^{g} 
: \diamondsuit(g) \rightarrow \mathsf{clo}(g) \rightarrow A_0$
and
$\pi_1^{g}
= \tilde{\pi}_1^{g} \circ \mathrm{incl}_1^{g} 
: \diamondsuit(g) \rightarrow \mathsf{open}(g) \rightarrow A_1$.
The pair of $\mathsf{C}$-morphisms
$(\mbox{-})_0^{\bullet_{g}} : A_0 \rightarrow \mathsf{clo}(g)$
and
$\check{g}_1 : A_0 \rightarrow \mathsf{open}(g)$
forms a cone for the axis diagram,
since
$(\mbox{-})_0^{\bullet_{g}} \circ \mathrm{incl}_0^{g}
= (\mbox{-})^{\bullet_{g}}
= \check{g}_1 \circ \hat{g}_1$
and
$(\mbox{-})_0^{\bullet_{g}} \circ \check{g}_0
= \check{g}
= \check{g}_1 \circ \mathrm{incl}_1^{g}$.
Let
$\xi_0^{g} : A_0 \rightarrow \diamondsuit(g)$
denote the mediating $\mathsf{C}$-morphism for this cone;
so that
$\xi_0^{g}$ is the unique $\mathsf{C}$-morphism
such that
$\xi_0^{g} \circ \tilde{\pi}_0^{g}
= (\mbox{-})_0^{\bullet_{g}}$
and
$\xi_0^{g} \circ \tilde{\pi}_1^{g}
= \check{g}_1$.
Hence,
$\xi_0^{g} \circ \pi_0^{g}
= \xi_0^{g} \circ \tilde{\pi}_0^{g} \circ \mathrm{incl}_0^{g}
= (\mbox{-})_0^{\bullet_{g}} \circ \mathrm{incl}_0^{g}
= (\mbox{-})^{\bullet_{g}}$
and
$\xi_0^{g} \circ \pi_1^{g}
= \xi_0^{g} \circ \tilde{\pi}_1^{g} \circ \mathrm{incl}_1^{g}
= \check{g}_1 \circ \mathrm{incl}_1^{g}
= \check{g}$.
Since
$\pi_0^{g} \circ \xi_0^{g} \circ \tilde{\pi}_0^{g}
= \tilde{\pi}_0^{g} \circ 
\mathrm{incl}_0^{g} \circ (\mbox{-})_0^{\bullet_{g}}
= \tilde{\pi}_0^{g}$
and
$\pi_0^{g} \circ \xi_0^{g} \circ \tilde{\pi}_1^{g}
= \pi_0^{g} \circ \check{g}_1
= \tilde{\pi}_0^{g} \circ \mathrm{incl}_0^{g} \circ \check{g}_1
= \tilde{\pi}_0^{g} \circ \check{g}_{01}
= \tilde{\pi}_1^{g}$,
by uniqueness of mediating $\mathsf{C}$-morphisms
(in other words,
since limit projections are collectively monomorphic)
$\pi_0^{g} \circ \xi_0^{g}
= 1_{\diamondsuit(g)}$.
By direct calculation,
$\xi_0^{g} \circ \pi_0^{g}
= \xi_0^{g} \circ \tilde{\pi}_0^{g} \circ \mathrm{incl}_0^{g} 
= (\mbox{-})_0^{\bullet_{g}} \circ \mathrm{incl}_0^{g} 
= (\mbox{-})^{\bullet_{g}} \geq 1_{A_0}$.
Hence,
there is a $\mathsf{C}$-reflection
$\mathsf{ref}(g)
= \langle \xi_0^{g}, \pi_0^{g} \rangle
: A_0 \rightleftharpoons \diamondsuit(g)$
called the \emph{extent reflection} of $g$.
Dually,
the pair of $\mathsf{C}$-morphisms
$\hat{g}_0 : A_1 \rightarrow \mathsf{clo}(g)$
and
$(\mbox{-})_1^{\circ_{g}} : A_1 \rightarrow \mathsf{open}(g)$
is a cone for the axis diagram.
Let
$\xi_1^{g} : A_1 \rightarrow \diamondsuit(g)$
denote the unique mediating $\mathsf{C}$-morphism for this cone.
Hence,
there is a $\mathsf{C}$-coreflection
$\mathsf{ref}^\propto(g)
= \langle \pi_1^{g}, \xi_1^{g} \rangle
: \diamondsuit(g) \rightleftharpoons A_1$
called the \emph{intent reflection} of $g$.
The original $\mathsf{C}$-adjunction factors in terms of its extent reflection and intent coreflection
$g = \mathsf{ref}_{g} \circ \mathsf{ref}^\propto_{g}$
(Figure~\ref{adjunction-factorization}, row 3).
The quintuple
$( A_0, 
\mathsf{ref}_{g}, \diamondsuit(g), \mathsf{ref}^\propto_{g}, 
A_1 )$
is called the \emph{polar factorization} of $g$.
Since both source $A_0$ and target $A_1$ are posetal objects, 
the axis $\diamondsuit(g)$ is also a posetal object.

\paragraph{Linking the Factorizations.}

Consider the left adjoint source restriction
$\check{g}_0 : \mathsf{clo}(g) \rightarrow A_1$.
Since
$\check{g}_0 \circ (\mbox{-})^{\circ_{g}}
= \mathrm{incl}_0^{g} \circ \check{g} \circ \hat{g} \circ \check{g}
= \mathrm{incl}_0^{g} \circ \check{g}
= \check{g}_0
= \check{g}_0 \circ 1_{A_1}$,
there is a unique equalizer mediating $\mathsf{C}$-morphism
$\check{g}_{01} : \mathsf{clo}(g) \rightarrow \mathsf{open}(g)$
such that
$\check{g}_{01} \circ \mathrm{incl}_1^{g} = \check{g}_0$.
Call this the left adjoint source-target restriction.
The left adjoint target restriction factors through this 
$(\mbox{-})_0^{\bullet_{g}} \circ \check{g}_{01}
= \check{g}_1$,
since
$(\mbox{-})_0^{\bullet_{g}} \circ \check{g}_{01} \circ \mathrm{incl}_1^{g}
= (\mbox{-})_0^{\bullet_{g}} \circ \check{g}_0
= \check{g}
= \check{g}_1 \circ \mathrm{incl}_1^{g}$
and $\mathrm{incl}_1^{g}$ is a $\mathsf{C}$-monomorphism.
Consider the right adjoint source restriction
$\hat{g}_1 : \mathsf{open}(g) \rightarrow A_0$.
Since
$\hat{g}_1 \circ (\mbox{-})^{\bullet_{g}}
= \mathrm{incl}_1^{g} \circ \hat{g} \circ \check{g} \circ \hat{g}
= \mathrm{incl}_1^{g} \circ \hat{g}
= \hat{g}_1
= \hat{g}_1 \circ 1_{A_0}$,
there is a unique equalizer mediating $\mathsf{C}$-morphism
$\hat{g}_{01} : \mathsf{open}(g) \rightarrow \mathsf{clo}(g)$
such that
$\hat{g}_{01} \circ \mathrm{incl}_0^{g} = \hat{g}_1$.
Call this the right adjoint source-target restriction.
The right adjoint target restriction factors through this 
$(\mbox{-})_1^{\circ_{g}} \circ \hat{g}_{01}
= \hat{g}_0$,
since
$(\mbox{-})_1^{\circ_{g}} \circ \hat{g}_{01} \circ \mathrm{incl}_0^{g}
= (\mbox{-})_1^{\circ_{g}} \circ \hat{g}_1
= \hat{g}
= \hat{g}_0 \circ \mathrm{incl}_0^{g}$
and $\mathrm{incl}_0^{g}$ is a $\mathsf{C}$-monomorphism.
Since
$\check{g}_{01} \circ \hat{g}_{01} \circ \mathrm{incl}_0^{g}
= \check{g}_{01} \circ \hat{g}_1
= \check{g}_{01} \circ \mathrm{incl}_1^{g} \circ \hat{g}
= \check{g}_0 \circ \hat{g}
= \mathrm{incl}_0^{g} \circ \check{g} \circ \hat{g}
= \mathrm{incl}_0^{g} \circ (\mbox{-})^{\bullet_{g}}
= \mathrm{incl}_0^{g}
= 1_{\mathsf{clo}(g)} \circ \mathrm{incl}_0^{g}$
and
$\mathrm{incl}_0^{g}$ is a $\mathsf{C}$-monomorphism,
$\check{g}_{01} \circ \hat{g}_{01}
= 1_{\mathsf{clo}(g)}$.
Dually,
$\hat{g}_{01} \circ \check{g}_{01}
= 1_{\mathsf{open}(g)}$.
Hence,
there is a $\mathsf{C}$-isomorphism 
$\mathsf{iso}_{g}
= \check{g}_{01} \dashv \hat{g}_{01}
: \mathsf{clo}(g) \rightleftharpoons \mathsf{open}(g)$
called the \emph{central isomorphism} of $g$.
The closed and open reflections are isomorphic
$\mathsf{ref}_{g}^\bullet \circ \mathsf{iso}_{g}
= \mathsf{ref}_{g}^\circ$.
The closed and open coreflections are isomorphic
$\mathsf{ref}_{g}^{\bullet \propto} \circ \mathsf{iso}_{g}
= \mathsf{ref}_{g}^{\circ \propto}$.

The pair of $\mathsf{C}$-morphisms
$1_{\mathsf{clo}(g)} : \mathsf{clo}(g) \rightarrow \mathsf{clo}(g)$
and
$\check{g}_{01} : \mathsf{clo}(g) \rightarrow \mathsf{open}(g)$
forms a cone for the axis diagram,
since
$1_{\mathsf{clo}(g)} \circ \mathrm{incl}_0^{g}
= \mathrm{incl}_0^{g}
= \check{g}_{01} \circ \hat{g}_{01} \circ \mathrm{incl}_0^{g}
= \check{g}_{01} \circ \hat{g}_1$
and
$1_{\mathsf{clo}(g)} \circ \check{g}_0
= \check{g}_0
= \check{g}_{01} \circ \mathrm{incl}_1^{g}$.
Let
$\tilde{\xi}_0^{g} : \mathsf{clo}(g) \rightarrow \diamondsuit(g)$
denote the mediating $\mathsf{C}$-morphism for this cone;
so that
$\tilde{\xi}_0^{g}$ is the unique $\mathsf{C}$-morphism
such that
$\tilde{\xi}_0^{g} \circ \tilde{\pi}_0^{g}
= 1_{\mathsf{clo}(g)}$
and
$\tilde{\xi}_0^{g} \circ \tilde{\pi}_1^{g}
= \check{g}_{01}$.
Since
$\tilde{\pi}_0^{g} \circ \tilde{\xi}_0^{g} \circ \tilde{\pi}_0^{g}
= \tilde{\pi}_0^{g}$
and
$\tilde{\pi}_0^{g} \circ \tilde{\xi}_0^{g} \circ \tilde{\pi}_1^{g}
= \tilde{\pi}_0^{g} \circ \check{g}_{01}
= \tilde{\pi}_1^{g}$,
by uniqueness of mediating $\mathsf{C}$-morphisms
$\tilde{\pi}_0^{g} \circ \tilde{\xi}_0^{g}
= 1_{\diamondsuit(g)}$.
Hence,
there is a $\mathsf{C}$-isomorphism 
$\mathsf{iso}^\bullet_{g}
= \langle \tilde{\xi}_0^{g}, \tilde{\pi}_0^{g} \rangle
: \mathsf{clo}(g) \rightleftharpoons \diamondsuit(g)$
called the \emph{closed isomorphism} of $g$.
Since
$(\mbox{-})_0^{\bullet_{g}} \circ \tilde{\xi}_0^{g}
\circ \tilde{\pi}_0^{g}
= (\mbox{-})_0^{\bullet_{g}}
= \xi_0^{g} \circ \tilde{\pi}_0^{g}$
and
$(\mbox{-})_0^{\bullet_{g}} \circ \tilde{\xi}_0^{g}
\circ \tilde{\pi}_1^{g}
= (\mbox{-})_0^{\bullet_{g}} \circ \check{g}_{01}
= \check{g}_1
= \xi_0^{g} \circ \tilde{\pi}_1^{g}$,
by uniqueness of mediating $\mathsf{C}$-morphisms
$(\mbox{-})_0^{\bullet_{g}} \circ \tilde{\xi}_0^{g}
= \xi_0^{g}$.
Since in addition
$\tilde{\pi}_0^{g} \circ \mathrm{incl}_0^{g}
= \pi_0^{g}$,
the extent and closed reflections are isomorphic
$\mathsf{ref}_{g} 
= \mathsf{ref}^\bullet_{g} \circ \mathsf{iso}^\bullet_{g}$.
Dually,
since the pair of $\mathsf{C}$-morphisms
$\hat{g}_{01} : \mathsf{open}(g) \rightarrow \mathsf{clo}(g)$
and
$1_{\mathsf{open}(g)} : \mathsf{open}(g) \rightarrow \mathsf{open}(g)$
forms a cone for the axis diagram,
by defining
$\tilde{\xi}_1^{g} : \mathsf{open}(g) \rightarrow \diamondsuit(g)$
to be the mediating $\mathsf{C}$-morphism for this cone,
so that
$\tilde{\xi}_1^{g}$ is the unique $\mathsf{C}$-morphism
such that
$\tilde{\xi}_1^{g} \circ \tilde{\pi}_0^{g}
= \hat{g}_{01}$
and
$\tilde{\xi}_1^{g} \circ \tilde{\pi}_1^{g}
= 1_{\mathsf{open}(g)}$,
there is a $\mathsf{C}$-isomorphism 
$\mathsf{iso}^\circ_{g}
= \langle \tilde{\pi}_1^{g}, \tilde{\xi}_1^{g} \rangle
: \diamondsuit(g) \rightleftharpoons \mathsf{open}(g)$
called the \emph{open isomorphism} of $g$.
Furthermore,
the intent and open coreflections are isomorphic
$\mathsf{ref}^\propto_{g} 
= \mathsf{iso}^\circ_{g} \circ \mathsf{ref}^{\circ \propto}_{g}$.
Also,
the central isomorphism factors in terms of the closed and open isomorphisms
$\mathsf{iso}_{g} 
= \mathsf{iso}^\bullet_{g} \circ \mathsf{iso}^\circ_{g}$.

\begin{figure}
\begin{center}
\setlength{\unitlength}{0.9pt}
\begin{picture}(80,105)(0,0)
\put(20,70){\makebox(40,30){$\diamondsuit(g)$}}
\put(-20,30){\makebox(40,30){$\mathsf{clo}(g)$}}
\put(60,30){\makebox(40,30){$\mathsf{open}(g)$}}
\put(-20,-15){\makebox(40,30){$A_0$}}
\put(60,-15){\makebox(40,30){$A_1$}}
\qbezier(-9,8)(-13,13)(-15.75,18)
\qbezier(-20.25,27)(-35,65)(22,84)
\put(22.5,84){\vector(3,1){0}}
\qbezier(89,8)(93,13)(95.75,18)
\qbezier(101.25,29)(115,65)(58,84)
\put(57.5,84){\vector(-3,1){0}}
\put(-43,50){\makebox(40,30){\footnotesize{$\xi_0^{g}$}}}
\put(84.5,50){\makebox(40,30){\footnotesize{$\xi_1^{g}$}}}
\put(-10,55){\makebox(40,30){\footnotesize{$\tilde{\pi}_0^{g}$}}}
\put(50,55){\makebox(40,30){\footnotesize{$\tilde{\pi}_1^{g}$}}}
\put(24,41){\makebox(40,30){\footnotesize{$\check{g}_{01}$}}}
\put(20,30){\makebox(40,30){\footnotesize{$\cong$}}}
\put(24,19){\makebox(40,30){\footnotesize{$\hat{g}_{01}$}}}
\put(20,-3){\makebox(40,30){\footnotesize{$\check{g}$}}}
\put(20,-27){\makebox(40,30){\footnotesize{$\hat{g}$}}}
\put(4,11){\makebox(40,30){\footnotesize{$\check{g}_0$}}}
\put(37,11){\makebox(40,30){\footnotesize{$\hat{g}_1$}}}
\put(-35,7.5){\makebox(40,30){\footnotesize{$\mathrm{incl}_0^{g}$}}}
\put(75,7.5){\makebox(40,30){\footnotesize{$\mathrm{incl}_1^{g}$}}}
\put(20,50){\vector(1,0){40}}
\put(60,40){\vector(-1,0){40}}
\put(0,35){\vector(0,-1){25}}
\put(80,35){\vector(0,-1){25}}
\put(16,36){\vector(2,-1){52}}
\put(64,36){\vector(-2,-1){52}}
\put(25,75){\vector(-1,-1){20}}
\put(55,75){\vector(1,-1){20}}
\put(20,5){\vector(1,0){40}}
\put(60,-5){\vector(-1,0){40}}
\thinlines
\put(15,16){\line(0,1){15}}
\put(15,31){\line(-2,-1){12}}
\put(65,16){\line(0,1){15}}
\put(65,31){\line(2,-1){12}}
\end{picture}
\end{center}
\caption{The Axis of an Adjunction}
\label{axis}
\end{figure}
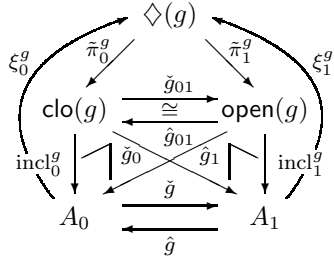

\subsubsection{Uniqueness of Factorization\label{subsubsec:uniqueness:of:factorization:cs}}

\begin{lemma} [Diagonalization]
Assume that we are given a commutative square
$e \circ s = r \circ m$
\begin{center}
\setlength{\unitlength}{0.85pt}
\begin{picture}(50,50)
\put(-50,25){\makebox(100,50){$A_0$}}
\put(0,25){\makebox(100,50){$B$}}
\put(-50,-25){\makebox(100,50){$C$}}
\put(0,-25){\makebox(100,50){$A_1$}}
\put(0,30){\makebox(50,50){\footnotesize{$e$}}}
\put(0,-31){\makebox(50,50){\footnotesize{$m$}}}
\put(-30,0){\makebox(50,50){\footnotesize{$r$}}}
\put(31,0){\makebox(50,50){\footnotesize{$s$}}}
\put(-5,5){\makebox(50,50){\footnotesize{$d$}}}
\thicklines
\put(12,0){\vector(1,0){26}}
\put(12,50){\vector(1,0){26}}
\put(0,38){\vector(0,-1){26}}
\put(50,38){\vector(0,-1){26}}
\put(38,38){\vector(-1,-1){26}}
\end{picture}
\end{center}
of adjunctions,
with reflection $e$ and coreflection $m$.
Then there is a unique adjunction 
$d : B \rightleftharpoons C$
with $e \circ d = r$ and $d \circ m = s$.
\end{lemma}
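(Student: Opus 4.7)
The plan is to define the diagonal adjunction $d = \langle \check{d}, \hat{d} \rangle$ explicitly, exploiting the strict identities $\hat{e} \circ \check{e} = 1_B$ (because $e$ is a reflection) and $\check{m} \circ \hat{m} = 1_C$ (because $m$ is a coreflection). These two identities are precisely what let us ``invert'' $\check{e}$ on the $B$-side and $\hat{m}$ on the $C$-side, much as in the usual diagonalization for ordinary $\langle \mathrm{epi}, \mathrm{mono} \rangle$-factorization systems.

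Concretely, I will set $\check{d} \doteq \hat{e} \circ \check{r}$ and $\hat{d} \doteq \hat{r} \circ \check{e}$. The first step is to show these agree with the symmetric-looking choices $\check{d} = \check{s} \circ \hat{m}$ and $\hat{d} = \check{m} \circ \hat{s}$; this uses the commutativity of the square on left components $\check{e} \circ \check{s} = \check{r} \circ \check{m}$ (and dually on right components $\hat{s} \circ \hat{e} = \hat{m} \circ \hat{r}$), combined with the two strictness identities above. Once both forms of $\check{d}$ and $\hat{d}$ are available, the equations $e \circ d = r$ and $d \circ m = s$ drop out componentwise: each of the four required identities reduces, after one pre- or post-composition, to either the square's commutativity or to $\hat{e} \circ \check{e} = 1_B$ or $\check{m} \circ \hat{m} = 1_C$.

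Next I must check that $\langle \check{d}, \hat{d} \rangle$ is genuinely an adjunction, i.e.\ that $1_B \leq \check{d} \circ \hat{d}$ and $\hat{d} \circ \check{d} \leq 1_C$. Substituting the definitions, $\check{d} \circ \hat{d}$ collapses to $\hat{e} \circ (\mbox{-})^{\bullet_{r}} \circ \check{e}$ and $\hat{d} \circ \check{d}$ to $\check{m} \circ (\mbox{-})^{\circ_{s}} \circ \hat{m}$; the desired inequalities then follow from the fact that the closure of $r$ is increasing ($1_{A_0} \leq (\mbox{-})^{\bullet_{r}}$) and the interior of $s$ is decreasing ($(\mbox{-})^{\circ_{s}} \leq 1_{A_1}$), combined once more with the two strictness identities for $e$ and $m$.

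Finally, uniqueness is immediate: if $d'$ is any other such adjunction, then $\check{e} \circ \check{d}' = \check{r}$ forces $\check{d}' = \hat{e} \circ \check{e} \circ \check{d}' = \hat{e} \circ \check{r} = \check{d}$, and dually $\hat{m} \circ \hat{d}' = \hat{s}$ forces $\hat{d}' = \check{m} \circ \hat{s} = \hat{d}$. The main point of care in this proof is not any single hard step but the diagrammatic bookkeeping --- keeping track of which strict identity applies on which side, and which adjoint component is being manipulated --- and this is really the only obstacle.
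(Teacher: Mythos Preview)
Your proposal is correct and follows essentially the same approach as the paper: define $\check{d}$ and $\hat{d}$ via the two equivalent formulas, verify the adjunction inequalities and the factorization identities using the strict conditions $\hat{e}\cdot\check{e}=1_B$ and $\check{m}\cdot\hat{m}=1_C$, and derive uniqueness from those same strict identities. The only difference is cosmetic: where you route the adjunction check through $(\mbox{-})^{\bullet_r}$ and $(\mbox{-})^{\circ_s}$, the paper routes it through $\hat{r}\cdot\check{r}$ and $\check{s}\cdot\hat{s}$, which amounts to the same computation after one application of the square's commutativity.
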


\noindent
{\bfseries Proof:} 
The necessary conditions give the definitions
$\check{d}
\doteq 
\check{s} \cdot \hat{m} 
= \hat{e} \cdot \check{r}$
and
$\hat{d}
\doteq 
\hat{r} \cdot \check{e} 
= \check{m} \cdot \hat{s}$.
Existence follows from these definitions.

In more detail,
the fundamental adjointness property,
the special conditions for (co) reflections
and the above commutative diagram,
resolve into the following identities and inequalities:
$\hat{e} \cdot \check{e} = 1_{B}$,
$1_{A_0} \leq \check{e} \cdot \hat{e}$,
$\hat{s} \cdot \check{s} \leq 1_{A_1}$,
$1_{B} \leq \check{s} \cdot \hat{s}$,
$\hat{r} \cdot \check{r} \leq 1_{C}$,
$1_{A_0} \leq \check{r} \cdot \hat{r}$,
$\hat{m} \cdot \check{m} \leq 1_{A_1}$,
$1_{C} = \check{m} \cdot \hat{m}$,
$\check{e} \cdot \check{s} =
\check{r} \cdot \check{m}$,
and
$\hat{m} \cdot \hat{r} =
\hat{s} \cdot \hat{e}$.
By suitable pre- and post-composition we can prove the identities:
$\check{e} \cdot \check{s} \cdot \hat{m} 
= \check{r}$,
$\check{m} \cdot \hat{s} \cdot \hat{e}
= \hat{r}$,
$\hat{m} \cdot \hat{r} \cdot \check{e} 
= \hat{s}$
and
$\hat{e} \cdot \check{r} \cdot \check{m} 
= \check{s}$,
(and then)
$\check{s} \cdot \hat{m} 
= \hat{e} \cdot \check{r}$ and
$\hat{r} \cdot \check{e} 
= \check{m} \cdot \hat{s}$.

[{\bfseries Existence}]
Define the $\mathsf{C}$-morphisms
\underline{$\check{d}
\doteq 
\check{s} \cdot \hat{m} 
= \hat{e} \cdot \check{r}$}
and
\underline{$\hat{d}
\doteq 
\hat{r} \cdot \check{e} 
= \check{m} \cdot \hat{s}$}.
The properties
$\hat{d} \cdot \check{d}
= \check{m} \cdot \hat{s} \cdot \hat{e} \cdot \check{r}
= \hat{r} \cdot \check{r}
\leq 1_{C}$ and
$\check{d} \cdot \hat{d}
= \check{s} \cdot \hat{m} \cdot \hat{r} \cdot \check{e}
= \check{s} \cdot \hat{s}
\geq 1_{B}$ 
show that
$d = \langle \check{d}, \hat{d} \rangle : B \rightleftharpoons C$
is a $\mathsf{C}$-adjunction.
The properties
$\check{d} \cdot \check{m}
= \hat{e} \cdot \check{r} \cdot \check{m}
= \check{s}$
and
$\hat{m} \cdot \hat{d}
= \hat{m} \cdot \hat{r} \cdot \check{e}
= \hat{s}$
show that
$d$
satisfies the required identity
$d \circ m = s$.
The properties
$\check{e} \cdot \check{d}
= \check{e} \cdot \check{s} \cdot \hat{m}
= \check{r}$
and
$\hat{d} \cdot \hat{e}
= \check{m} \cdot \hat{s} \cdot \hat{e}
= \hat{r}$
show that
$d$
satisfies the required identity
$e \circ d = r$.

[{\bfseries Uniqueness}]
Suppose $b = \langle \check{b}, \hat{b} \rangle
: B \rightleftharpoons C$
is another $\mathsf{C}$-adjunction satisfying the require identities
$e \circ b = r$
and $b \circ m = s$.
These identities resolve to the identities
$\check{e} \cdot \check{b} = \check{r}$,
$\hat{b} \cdot \hat{e} = \hat{r}$,
$\check{b} \cdot \check{m} = \check{s}$,
and
$\hat{m} \cdot \hat{b} = \hat{s}$.
Hence,
$\check{b}
= \check{b} \cdot \check{m} \cdot \hat{m}
= \check{s} \cdot \hat{m}
= \check{d}$,
$\hat{b}
= \hat{b} \cdot \hat{e} \cdot \check{e}
= \hat{r} \cdot \check{e}
= \hat{d}$
and thus
$b = d$.
\rule{2pt}{6pt}

\begin{lemma} [Polar Factorization]
The classes $\mathsf{Ref}(\mathsf{C})$ and $\mathsf{Ref}(\mathsf{C})^\propto$ of reflections and coreflections form a factorization system for $\mathsf{Adj}(\mathsf{C})_{=}$.
The (open, closed or full) polar factorization makes this a factorization system with choice.
\end{lemma}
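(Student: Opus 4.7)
The plan is to check the three conditions (Subcategories, Existence, Diagonalization) of the factorization system definition, and then observe that the canonical axis construction provides the choice.

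For \textbf{Subcategories}, I would first note that it was already established that a $\mathsf{C}$-morphism is a $\mathsf{C}$-isomorphism iff it is both a $\mathsf{C}$-reflection and a $\mathsf{C}$-coreflection; in $\mathsf{Adj}(\mathsf{C})_{=}$ this means $\mathsf{Ref}(\mathsf{C}) \cap \mathsf{Ref}^\propto(\mathsf{C}) = \mathsf{iso}(\mathsf{Adj}(\mathsf{C})_{=})$, so in particular all isomorphisms lie in both classes. To show $\mathsf{Ref}(\mathsf{C})$ is closed under composition, given two reflections $g_1 : A_0 \rightleftharpoons A_1$ and $g_2 : A_1 \rightleftharpoons A_2$ with $\hat{g}_i \cdot \check{g}_i = 1$, the composite has left $\check{g}_1 \cdot \check{g}_2$ and right $\hat{g}_2 \cdot \hat{g}_1$, and $(\hat{g}_2 \cdot \hat{g}_1) \cdot (\check{g}_1 \cdot \check{g}_2) = \hat{g}_2 \cdot 1_{A_1} \cdot \check{g}_2 = 1_{A_2}$, so the composite is a reflection. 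Closure of $\mathsf{Ref}^\propto(\mathsf{C})$ under composition is dual.

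For \textbf{Existence}, I would simply invoke the polar factorization construction from the previous subsection: every $\mathsf{C}$-adjunction $g : A_0 \rightleftharpoons A_1$ between posetal objects factors as $g = \mathsf{ref}_{g} \circ \mathsf{ref}^\propto_{g}$ via the axis object $\diamondsuit(g)$, which exists in $\mathsf{C}$ because $\mathsf{C}$ has finite limits and is posetal because both $A_0$ and $A_1$ are posetal. The extent reflection $\mathsf{ref}_{g}$ lies in $\mathsf{Ref}(\mathsf{C})$ and the intent coreflection $\mathsf{ref}^\propto_{g}$ lies in $\mathsf{Ref}^\propto(\mathsf{C})$, as was shown by direct calculation.

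For \textbf{Diagonalization}, I would cite the Diagonalization Lemma just proved: every commutative square $e \circ s = r \circ m$ with $e \in \mathsf{Ref}(\mathsf{C})$ and $m \in \mathsf{Ref}^\propto(\mathsf{C})$ admits a unique diagonal $d = \langle \check{s} \cdot \hat{m}, \hat{r} \cdot \check{e} \rangle$. This diagonalization condition then automatically implies the Uniqueness condition: given two $\langle \mathsf{Ref}(\mathsf{C}), \mathsf{Ref}^\propto(\mathsf{C}) \rangle$-factorizations $(A_0, e, B, m, A_1)$ and $(A_0, e', B', m', A_1)$ of the same adjunction, applying diagonalization to both square orientations produces mutually inverse mediating adjunctions $B \cong B'$ compatible with the two factorizations (since $\mathsf{Adj}(\mathsf{C})_{=}$ is posetal, the uniqueness of the diagonal forces these to be inverse isomorphisms).

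Finally, for the \textbf{choice} clause, I would observe that each of the three factorizations constructed in the preceding subsection --- the closed polar factorization through $\mathsf{clo}(g)$, the open polar factorization through $\mathsf{open}(g)$, and the full polar factorization through $\diamondsuit(g)$ --- is defined canonically by a specific finite limit (equalizer or pullback) in $\mathsf{C}$. Selecting one of these three constructions uniformly over all $\mathsf{C}$-adjunctions gives a choice function from morphisms of $\mathsf{Adj}(\mathsf{C})_{=}$ to $\langle \mathsf{Ref}(\mathsf{C}), \mathsf{Ref}^\propto(\mathsf{C}) \rangle$-factorizations, turning the factorization system into one with choice. I do not expect any real obstacle here, since all of the heavy lifting --- construction of the axis, proof of the reflection/coreflection properties, and the diagonalization calculation --- has already been carried out; this lemma is essentially the assembly of the preceding material into the form required by the factorization-system definition.
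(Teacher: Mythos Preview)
Your proposal is correct and follows exactly the approach of the paper, whose proof is literally ``The previous discussion and lemma'': you have simply made explicit which pieces of the preceding material (closure of reflections/coreflections under composition, the axis construction giving existence, and the Diagonalization Lemma) are being invoked. If anything, your write-up is more detailed than the paper's one-line proof, but the content and strategy are identical.
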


\begin{proof} 
The previous discussion and lemma.
$\Box$
\end{proof}

\noindent
With this result,
we can specialize the discussion of subsection~\ref{subsec:factorization:systems}
to the case $\mathsf{C} = \mathsf{Adj}(\mathsf{C})_{=}$.
The arrow category $\mathsf{Adj}(\mathsf{C})_{=}^{\mathsf{2}}$
has adjunctions
$(A, g, B)$
as objects
and pairs of adjunctions
$(a, b) : (A_1, g_1, B_1) \rightarrow (A_2, g_2, B_2)$ 
forming a commutative diagram $a \circ g_2 = g_1 \circ b$ as morphisms.
The factorization category 
$\mathsf{Ref}(\mathsf{C}) \odot \mathsf{Ref}(\mathsf{C})^\propto$
has reflection-coreflection factorizations 
$(A, e, C, m, B)$
as objects
and triples of adjunctions
$(a, c, b) : (A_1, e_1, C_1, m_1, B_1) \rightarrow (A_2, e_2, C_2, m_2, B_2)$
forming commutative diagrams 
$a \circ e_2 = e_1 \circ c$ and $c \circ m_2 = m_1 \circ b$ as morphisms.
The polar factorization functor
$\div_{\mathsf{Adj}(\mathsf{C})_{=}} 
: \mathsf{Adj}(\mathsf{C})_{=}^{\mathsf{2}} \rightarrow
\mathsf{Ref}(\mathsf{C}) \odot \mathsf{Ref}(\mathsf{C})^\propto$
maps an adjunction $(A, g, B)$ 
to its polar factorization 
$\div_{\mathsf{Adj}(\mathsf{C})_{=}}(A, g, B)
= (A, \mathsf{ref}_{\mathsf{C}}(g), \diamondsuit(g), \mathsf{ref}_{\mathsf{C}}^\propto(g), B)$,
and maps a morphism of adjunctions
$(a, b) : (A_1, g_1, B_1) \rightarrow (A_2, g_2, B_2)$ 
to a morphism of polar factorizations
$\div_{\mathsf{Adj}_{=}}(a, b)
= (a, \diamondsuit_{(a, b)}, b) 
: \div_{\mathsf{Adj}(\mathsf{C})_{=}}(A_1, g_1, B_1)
\rightarrow 
\div_{\mathsf{Adj}(\mathsf{C})_{=}}(A_2, g_2, B_2)$,
where the axis adjunction
$\diamondsuit_{(a, b)}
: \diamondsuit(g_1) \rightleftharpoons \diamondsuit(g_2)$
is given by diagonalization of the commutative square
$\mathsf{ref}_{\mathsf{C}}(g_1) 
\circ
\left( \mathsf{ref}_{\mathsf{C}}^\propto(g_1) \circ b \right)
= 
\left( a \circ \mathsf{ref}_{\mathsf{C}}(g_2) \right)
\circ
\mathsf{ref}_{\mathsf{C}}^\propto(g_2)$.
The axis 
$\diamondsuit_{(a, b)}
= \langle \check{\diamondsuit}_{(a, b)}, \hat{\diamondsuit}_{(a, b)} \rangle$ 
is defined as follows.
\begin{center}
$\begin{array}{r@{\hspace{5pt}\doteq\hspace{5pt}}c@{\hspace{5pt}=\hspace{5pt}}c}
\check{\diamondsuit}_{(a, b)}
& \pi_1^{g_1} \cdot \check{b} \cdot \xi_1^{g_2}
& \pi_0^{g_1} \cdot \check{a} \cdot \xi_0^{g_2}
: \diamondsuit(g_1) \rightarrow \diamondsuit(g_2)
\\
\hat{\diamondsuit}_{(a, b)} 
& \pi_0^{g_2} \cdot \hat{a} \cdot \xi_0^{g_1} 
& \pi_1^{g_2} \cdot \hat{b} \cdot \xi_1^{g_1}
: \diamondsuit(g_2) \rightarrow \diamondsuit(g_1)
\end{array}$
\end{center}
Hence,
to compute either adjoint, 
first project to either source or target order,
next use the corresponding component adjoint,
and finally embed from the corresponding order.
Likewise,
the open polar factorization functor
$\div_{\mathsf{Adj}(\mathsf{C})_{=}}^{\circ} 
: \mathsf{Adj}(\mathsf{C})_{=}^{\mathsf{2}} \rightarrow
\mathsf{Ref}(\mathsf{C}) \odot \mathsf{Ref}(\mathsf{C})^\propto$
maps an adjunction 
to its open polar factorization,
and the closed polar factorization functor is dual.
These three functors are mutually isomorphic.

\begin{theorem} [Special Equivalence] \label{special-equivalence}
The $\mathsf{Adj}(\mathsf{C})_{=}$-arrow category
is equivalent (Fig.~\ref{conceptual-structure}) to
the $\langle \mathsf{Ref}(\mathsf{C}), \mathsf{Ref}(\mathsf{C})^\propto \rangle$-factorization category
\[\mathsf{Adj}(\mathsf{C})_{=}^{\mathsf{2}} \equiv \mathsf{Ref}(\mathsf{C}) \odot \mathsf{Ref}(\mathsf{C})^\propto.\]
\end{theorem}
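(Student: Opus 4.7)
The plan is to derive this as an immediate specialization of Theorem~\ref{general-equivalence} (General Equivalence), taking the ambient category to be $\mathsf{Adj}(\mathsf{C})_{=}$, the left class $\mathsf{E}$ to be $\mathsf{Ref}(\mathsf{C})$, and the right class $\mathsf{M}$ to be $\mathsf{Ref}(\mathsf{C})^\propto$. Essentially nothing new has to be proven; all the substantive work has been carried out in the preceding two lemmas.

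First I would cite the Polar Factorization Lemma, which asserts precisely that $\langle \mathsf{Ref}(\mathsf{C}),\mathsf{Ref}(\mathsf{C})^\propto\rangle$ is an ordinary factorization system on $\mathsf{Adj}(\mathsf{C})_{=}$ with choice: the \textbf{Subcategories} clause is immediate from the closure of (co)reflections under composition together with the observation that every isomorphism is both a reflection and a coreflection; \textbf{Existence} is witnessed by the polar factorization $g = \mathsf{ref}_g \circ \mathsf{ref}^\propto_g$ constructed in the subsection on existence (or, equivalently, by either the open or the closed polar factorization, which are linked to the full polar factorization by the central isomorphism $\mathsf{iso}_g$); \textbf{Diagonalization} with uniqueness is exactly the content of the Diagonalization Lemma; and \textbf{Uniqueness} then follows formally from diagonalization.

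Next I would identify the equivalence functors supplied by Theorem~\ref{general-equivalence} with the concrete ones already introduced in the excerpt. The factorization functor $\div_{\mathsf{Adj}(\mathsf{C})_{=}} : \mathsf{Adj}(\mathsf{C})_{=}^{\mathsf{2}} \rightarrow \mathsf{Ref}(\mathsf{C})\odot\mathsf{Ref}(\mathsf{C})^\propto$ is the polar factorization functor described above, which sends an object $(A,g,B)$ to $(A,\mathsf{ref}(g),\diamondsuit(g),\mathsf{ref}^\propto(g),B)$ and sends a morphism $(a,b)$ to $(a,\diamondsuit_{(a,b)},b)$ where $\diamondsuit_{(a,b)}$ is the axis adjunction. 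The composition functor $\circ_{\mathsf{Adj}(\mathsf{C})_{=}}$ in the opposite direction simply composes the two adjoint legs. The identity $\div \circ \circ_{\mathsf{C}} = \mathsf{id}$ from the General Equivalence then specializes to the observation $\mathsf{ref}_g \circ \mathsf{ref}^\propto_g = g$, while $\circ_{\mathsf{C}} \circ \div \cong \mathsf{id}$ specializes to the uniqueness-up-to-isomorphism of polar factorizations.

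The only step requiring a small verification is checking that the diagonal prescribed by Theorem~\ref{general-equivalence} on morphisms really does coincide with $\diamondsuit_{(a,b)}$ as displayed in the excerpt; this is a direct application of the Diagonalization Lemma to the commutative square $\mathsf{ref}(g_1) \circ (\mathsf{ref}^\propto(g_1) \circ b) = (a \circ \mathsf{ref}(g_2)) \circ \mathsf{ref}^\propto(g_2)$, whose left edge $\mathsf{ref}(g_1)$ is a reflection and whose right edge $\mathsf{ref}^\propto(g_2)$ is a coreflection, so the hypotheses of that lemma are met. I expect the main (and only) obstacle to be bookkeeping: matching the variance of the two formulas $\check{\diamondsuit}_{(a,b)} = \pi_1^{g_1}\cdot\check{b}\cdot\xi_1^{g_2} = \pi_0^{g_1}\cdot\check{a}\cdot\xi_0^{g_2}$ against the explicit definitions $\check{d} = \check{s}\cdot\hat{m} = \hat{e}\cdot\check{r}$ produced in the Diagonalization Lemma, using the identities $\mathsf{ref}(g_i) = \langle\xi_0^{g_i},\pi_0^{g_i}\rangle$ and $\mathsf{ref}^\propto(g_i) = \langle\pi_1^{g_i},\xi_1^{g_i}\rangle$.
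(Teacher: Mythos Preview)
Your proposal is correct and matches the paper's own argument essentially line for line: the paper likewise derives the theorem as a direct specialization of Theorem~\ref{general-equivalence} with ambient category $\mathsf{Adj}(\mathsf{C})_{=}$, $\mathsf{E} = \mathsf{Ref}(\mathsf{C})$, $\mathsf{M} = \mathsf{Ref}(\mathsf{C})^\propto$, invoking the Polar Factorization Lemma (which in turn rests on the Diagonalization Lemma) to supply the factorization system with choice, and it identifies the diagonal with the axis adjunction $\diamondsuit_{(a,b)}$ via exactly the formulas you cite.
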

This equivalence,
mediated by (open, closed or full) polar factorization and composition,
is a special case for adjunctions of the general equivalence
(Thm.~\ref{general-equivalence}).

\subsection{Morphisms\label{subsec:morphisms:cs}}

Let 
$\mathbf{f}
= (\mathbf{f}_0,\mathbf{f}_1) 
: (\mathbf{A}_0,\mathbf{g},\mathbf{A}_1) \rightarrow (\mathbf{B}_0,\mathbf{h},\mathbf{B}_1)$
be a morphism in the arrow category $\mathsf{Ord}^{\mathsf{2}}$.
This consists of four order adjunctions
$\mathbf{g} : \mathbf{A}_0 \rightleftharpoons \mathbf{A}_1$,
$\mathbf{h} : \mathbf{B}_0 \rightleftharpoons \mathbf{B}_1$,
$\mathbf{f}_0 : \mathbf{A}_0 \rightleftharpoons \mathbf{B}_0$ and
$\mathbf{f}_1 : \mathbf{A}_1 \rightleftharpoons \mathbf{B}_1$
which satisfy the commutative diagram
$\mathbf{g} \circ \mathbf{f}_1 = \mathbf{f}_0 \circ \mathbf{h}$.
This condition resolves into the constraints,
$\check{\mathbf{g}} \cdot \check{\mathbf{f}}_1 = \check{\mathbf{f}}_0 \cdot \check{\mathbf{h}}$,
which implies that 
$\check{\mathbf{f}}_1$ maps open elements of $\mathbf{A}_1$ to open elements of $\mathbf{B}_1$,
and $\hat{\mathbf{h}} \cdot \hat{\mathbf{f}}_0 = \hat{\mathbf{f}}_1 \cdot \hat{\mathbf{g}}$,
which implies that 
$\hat{\mathbf{f}}_0$ maps closed elements of $\mathbf{B}_0$ to closed elements of $\mathbf{A}_0$.

The monotonic functions
$\pi_0^{\mathbf{g}} \cdot \check{\mathbf{f}}_0 \cdot (\mbox{-})^{\bullet_{\mathbf{h}}}
: \diamondsuit(\mathbf{g}) \rightarrow \mathbf{B}_0$
and
$\pi_1^{\mathbf{g}} \cdot \check{\mathbf{f}}_1
: \diamondsuit(\mathbf{g}) \rightarrow \mathbf{B}_1$
form a cone over the target adjunction diagram,
satisfing the bipolar pair
$(\pi_0^{\mathbf{g}} \cdot \check{\mathbf{f}}_0 
\cdot (\mbox{-})^{\bullet_{\mathbf{h}}}) \cdot \check{\mathbf{h}}
= \pi_0^{\mathbf{g}} \cdot \check{\mathbf{f}}_0 \cdot \check{\mathbf{h}}
= \pi_0^{\mathbf{g}} \cdot \check{\mathbf{g}} \cdot \check{\mathbf{f}}_1
= \pi_1^{\mathbf{g}} \cdot \check{\mathbf{f}}_1$
and
$(\pi_1^{\mathbf{g}} \cdot \check{\mathbf{f}}_1) \cdot \hat{\mathbf{h}}
= \pi_0^{\mathbf{g}} \cdot \check{\mathbf{g}} 
\cdot \check{\mathbf{f}}_1 \cdot \hat{\mathbf{h}}
= \pi_0^{\mathbf{g}} \cdot \check{\mathbf{f}}_0 
\cdot \check{\mathbf{h}} \cdot \hat{\mathbf{h}}
= \pi_0^{\mathbf{g}} \cdot \check{\mathbf{f}}_0 \cdot (\mbox{-})^{\bullet_{\mathbf{h}}}$.
The mediating monotonic function
$\check{\diamondsuit}_{\mathbf{f}}
: \diamondsuit(\mathbf{g}) \rightarrow \diamondsuit(\mathbf{h})$
for this cone is called the \emph{left adjoint axis function}.
It satisfies the projection constraints
$\check{\diamondsuit}_{\mathbf{f}} \cdot \pi_0^{\mathbf{h}}
= \pi_0^{\mathbf{g}} \cdot \check{\mathbf{f}}_0 \cdot (\mbox{-})^{\bullet_{\mathbf{h}}}$
and $\check{\diamondsuit}_{\mathbf{f}} \cdot \pi_1^{\mathbf{h}}
= \pi_1^{\mathbf{g}} \cdot \check{\mathbf{f}}_1$.
Dually,
the monotonic functions
$\pi_0^{\mathbf{h}} \cdot \hat{\mathbf{f}}_0
: \diamondsuit(\mathbf{h}) \rightarrow \mathbf{A}_0$
and
$\pi_1^{\mathbf{h}} \cdot \hat{\mathbf{f}}_1 \cdot (\mbox{-})^{\circ_{\mathbf{g}}}
: \diamondsuit(\mathbf{h}) \rightarrow \mathbf{A}_1$
form a cone over the source adjunction diagram,
satisfing the bipolar pair
$(\pi_0^{\mathbf{h}} \cdot \hat{\mathbf{f}}_0) \cdot \check{\mathbf{g}}
= \pi_1^{\mathbf{h}} \cdot \hat{\mathbf{h}} 
\cdot \hat{\mathbf{f}}_0 \cdot \check{\mathbf{g}}
= \pi_1^{\mathbf{h}} \cdot \hat{\mathbf{f}}_1 
\cdot \hat{\mathbf{g}} \cdot \check{\mathbf{g}}
= \pi_1^{\mathbf{h}} \cdot \hat{\mathbf{f}}_1 \cdot (\mbox{-})^{\circ_{\mathbf{g}}}$
and
$(\pi_1^{\mathbf{h}} \cdot \hat{\mathbf{f}}_1 
\cdot (\mbox{-})^{\circ_{\mathbf{g}}}) \cdot \hat{\mathbf{g}}
= \pi_1^{\mathbf{h}} \cdot \hat{\mathbf{f}}_1 \cdot \hat{\mathbf{g}}
= \pi_1^{\mathbf{h}} \cdot \hat{\mathbf{h}} \cdot \hat{\mathbf{f}}_0
= \pi_0^{\mathbf{h}} \cdot \hat{\mathbf{f}}_0$.
The mediating monotonic function
$\hat{\diamondsuit}_{\mathbf{f}}
: \diamondsuit(\mathbf{h}) \rightarrow \diamondsuit(\mathbf{g})$
for this cone is called the \emph{right adjoint axis function}.
It satisfies the projection constraints
$\hat{\diamondsuit}_{\mathbf{f}} \cdot \pi_0^{\mathbf{g}}
= \pi_0^{\mathbf{h}} \cdot \hat{\mathbf{f}}_0$
and 
$\hat{\diamondsuit}_{\mathbf{f}} \cdot \pi_1^{\mathbf{g}}
= \pi_1^{\mathbf{h}} \cdot \hat{\mathbf{f}}_1 \cdot (\mbox{-})^{\circ_{\mathbf{g}}}$.

The composite monotonic function
$\check{\diamondsuit}_{\mathbf{f}} \cdot \hat{\diamondsuit}_{\mathbf{f}}
: \diamondsuit(\mathbf{g}) \rightarrow \diamondsuit(\mathbf{g})$
satisfies
$\check{\diamondsuit}_{\mathbf{f}} \cdot \hat{\diamondsuit}_{\mathbf{f}}
\geq \mathrm{id}_{\diamondsuit(\mathbf{g})}$,
since
$(\check{\diamondsuit}_{\mathbf{f}} \cdot \hat{\diamondsuit}_{\mathbf{f}}) \cdot \pi_0^{\mathbf{g}}
= \check{\diamondsuit}_{\mathbf{f}} \cdot \pi_0^{\mathbf{h}} \cdot \hat{\mathbf{f}}_0
= \pi_0^{\mathbf{g}} \cdot \check{\mathbf{f}}_0 \cdot (\mbox{-})^{\bullet_{\mathbf{h}}} \cdot \hat{\mathbf{f}}_0 
\geq \pi_0^{\mathbf{g}} \cdot \check{\mathbf{f}}_0 \cdot \hat{\mathbf{f}}_0 
\geq \pi_0^{\mathbf{g}}$
and
$(\check{\diamondsuit}_{\mathbf{f}} \cdot \hat{\diamondsuit}_{\mathbf{f}}) \cdot \pi_1^{\mathbf{g}}
= \check{\diamondsuit}_{\mathbf{f}} \cdot \pi_1^{\mathbf{h}} 
\cdot \hat{\mathbf{f}}_1 \cdot (\mbox{-})^{\circ_{\mathbf{g}}}
= \pi_1^{\mathbf{g}} \cdot \check{\mathbf{f}}_1
\cdot \hat{\mathbf{f}}_1 \cdot (\mbox{-})^{\circ_{\mathbf{g}}}
= \pi_0^{\mathbf{g}} \cdot \check{\mathbf{g}} \cdot \check{\mathbf{f}}_1
\cdot \hat{\mathbf{f}}_1 \cdot \hat{\mathbf{g}} \cdot \check{\mathbf{g}}
= \pi_0^{\mathbf{g}} \cdot \check{\mathbf{f}}_0 \cdot \check{\mathbf{h}}
\cdot \hat{\mathbf{h}} \cdot \hat{\mathbf{f}}_0 \cdot \check{\mathbf{g}}
\geq \pi_0^{\mathbf{g}} \cdot \check{\mathbf{g}}
= \pi_1^{\mathbf{g}}$.
Dually,
the composite monotonic function
$\hat{\diamondsuit}_{\mathbf{f}} \cdot \check{\diamondsuit}_{\mathbf{f}}
: \diamondsuit(\mathbf{h}) \rightarrow \diamondsuit(\mathbf{h})$
satisfies
$\hat{\diamondsuit}_{\mathbf{f}} \cdot \check{\diamondsuit}_{\mathbf{f}}
\leq \mathrm{id}_{\diamondsuit(\mathbf{h})}$,
since
$(\hat{\diamondsuit}_{\mathbf{f}} \cdot \check{\diamondsuit}_{\mathbf{f}}) \cdot \pi_0^{\mathbf{g}}
...
\leq \pi_0^{\mathbf{h}}$
and
$(\hat{\diamondsuit}_{\mathbf{f}} \cdot \check{\diamondsuit}_{\mathbf{f}}) \cdot \pi_1^{\mathbf{g}}
...
\leq \pi_1^{\mathbf{h}}$.
Hence,
the left and right adjoint axis monotonic functions form the \emph{axis adjunction}
$\diamondsuit_{\mathbf{f}} 
= \langle \check{\diamondsuit}_{\mathbf{f}}, \hat{\diamondsuit}_{\mathbf{f}} \rangle
: \diamondsuit(\mathbf{g}) \rightleftharpoons \diamondsuit(\mathbf{h})$.
Since
$(\xi_0^{\mathbf{g}} \cdot \check{\diamondsuit}_{\mathbf{f}}) \cdot \pi_0^{\mathbf{h}}
= \xi_0^{\mathbf{g}} \cdot \pi_0^{\mathbf{g}} 
\cdot \check{\mathbf{f}}_0 
\cdot (\mbox{-})^{\bullet_{\mathbf{h}}}
= (\mbox{-})^{\bullet_{\mathbf{g}}} 
\cdot \check{\mathbf{f}}_0 
\cdot (\mbox{-})^{\bullet_{\mathbf{h}}}
= (\mbox{-})^{\bullet_{\mathbf{g}}} 
\cdot \check{\mathbf{f}}_0 \cdot \check{\mathbf{h}} 
\cdot \hat{\mathbf{h}}
= (\mbox{-})^{\bullet_{\mathbf{g}}} \cdot \check{\mathbf{g}} 
\cdot \check{\mathbf{f}}_1 \cdot \hat{\mathbf{h}}
= \check{\mathbf{g}} \cdot \check{\mathbf{f}}_1 \cdot \hat{\mathbf{h}}
= \check{\mathbf{f}}_0 \cdot \check{\mathbf{h}} \cdot \hat{\mathbf{h}} 
= \check{\mathbf{f}}_0 \cdot (\mbox{-})^{\bullet_{\mathbf{h}}}
= (\check{\mathbf{f}}_0 \cdot \xi_0^{\mathbf{h}}) \cdot \pi_0^{\mathbf{h}}$
and
$(\xi_0^{\mathbf{g}} \cdot \check{\diamondsuit}_{\mathbf{f}}) \cdot \pi_1^{\mathbf{h}}
= \xi_0^{\mathbf{g}} \cdot \pi_1^{\mathbf{g}} \cdot \check{\mathbf{f}}_1
= \check{\mathbf{g}} \cdot \check{\mathbf{f}}_1
= \check{\mathbf{f}}_0 \cdot \check{\mathbf{h}}
= (\check{\mathbf{f}}_0 \cdot \xi_0^{\mathbf{h}}) \cdot \pi_1^{\mathbf{h}}$,
by uniqueness of limit mediators
$\xi_0^{\mathbf{g}} \cdot \check{\diamondsuit}_{\mathbf{f}}
= \check{\mathbf{f}}_0 \cdot \xi_0^{\mathbf{h}}$.
Since
we already know that
$\hat{\diamondsuit}_{\mathbf{f}} \cdot \pi_0^{\mathbf{g}}
= \pi_0^{\mathbf{h}} \cdot \hat{\mathbf{f}}_0$,
the axis adjunction satisfies the commutative diagram
$\mathsf{ref}_{\mathbf{g}} \circ \diamondsuit_{\mathbf{f}}
= \mathbf{f}_0 \circ \mathsf{ref}_{\mathbf{h}}$.
Since
$(\xi_1^{\mathbf{h}} \cdot \hat{\diamondsuit}_{\mathbf{f}}) \cdot \pi_1^{\mathbf{g}}
= \xi_1^{\mathbf{h}} \cdot \pi_1^{\mathbf{h}} 
\cdot \hat{\mathbf{f}}_1 
\cdot (\mbox{-})^{\circ_{\mathbf{g}}}
= (\mbox{-})^{\circ_{\mathbf{h}}} 
\cdot \hat{\mathbf{f}}_1 
\cdot (\mbox{-})^{\circ_{\mathbf{g}}}
= (\mbox{-})^{\circ_{\mathbf{h}}} 
\cdot \hat{\mathbf{f}}_1 \cdot \hat{\mathbf{g}} 
\cdot \check{\mathbf{g}}
= (\mbox{-})^{\circ_{\mathbf{h}}} \cdot \hat{\mathbf{h}} 
\cdot \hat{\mathbf{f}}_0 \cdot \check{\mathbf{g}}
= \hat{\mathbf{h}} \cdot \hat{\mathbf{f}}_0 \cdot \check{\mathbf{g}}
= \hat{\mathbf{f}}_1 \cdot \hat{\mathbf{g}} \cdot \check{\mathbf{g}} 
= \hat{\mathbf{f}}_1 \cdot (\mbox{-})^{\circ_{\mathbf{g}}}
= (\hat{\mathbf{f}}_1 \cdot \xi_1^{\mathbf{g}}) \cdot \pi_1^{\mathbf{g}}$
and
$(\xi_1^{\mathbf{h}} \cdot \hat{\diamondsuit}_{\mathbf{f}}) \cdot \pi_0^{\mathbf{g}}
= \xi_1^{\mathbf{h}} \cdot \pi_0^{\mathbf{h}} \cdot \hat{\mathbf{f}}_0
= \hat{\mathbf{h}} \cdot \hat{\mathbf{f}}_0
= \hat{\mathbf{f}}_1 \cdot \hat{\mathbf{g}}
= (\hat{\mathbf{f}}_1 \cdot \xi_1^{\mathbf{g}}) \cdot \pi_0^{\mathbf{g}}$,
by uniqueness of limit mediators
$\xi_1^{\mathbf{h}} \cdot \hat{\diamondsuit}_{\mathbf{f}}
= \hat{\mathbf{f}}_1 \cdot \xi_1^{\mathbf{g}}$.
Since we already know that
$\check{\diamondsuit}_{\mathbf{f}} \cdot \pi_1^{\mathbf{h}}
= \pi_1^{\mathbf{g}} \cdot \check{\mathbf{f}}_1$,
the axis adjunction satisfies the commutative diagram
$\diamondsuit_{\mathbf{f}} \circ \mathsf{ref}_{\mathbf{h}}^{\propto} 
= \mathsf{ref}_{\mathbf{g}}^{\propto} \circ \mathbf{f}_1$.

The polar factorization of the $\mathsf{Ord}^{\mathsf{2}}$-morphism
$\mathbf{f}
= (\mathbf{f}_0,\mathbf{f}_1) 
: (\mathbf{A}_0,\mathbf{g},\mathbf{A}_1) \rightarrow (\mathbf{B}_0,\mathbf{h},\mathbf{B}_1)$
is the triple
$(\mathbf{f}_0,\diamondsuit_{\mathbf{f}},\mathbf{f}_1)$
consisting of 
the $\mathsf{Rel}^{\mathsf{2}}$-morphism
$(\mathbf{f}_0,\diamondsuit_{\mathbf{f}}) 
: (\mathbf{A}_0,\mathsf{ref}_{\mathbf{g}},\diamondsuit(\mathbf{g})) 
\rightarrow (\mathbf{B}_0,\mathsf{ref}_{\mathbf{h}},\diamondsuit(\mathbf{h}))$
and
the $\mathsf{Rel}^{{\propto}, \mathsf{2}}$-morphism
$(\diamondsuit_{\mathbf{f}},\mathbf{f}_1) 
: (\diamondsuit(\mathbf{g}),\mathsf{ref}_{\mathbf{g}}^{\propto},\mathbf{A}_1) 
\rightarrow (\diamondsuit(\mathbf{h}),\mathsf{ref}_{\mathbf{h}}^{\propto},\mathbf{B}_1)$.

\begin{assumption}
Limit projections are collectively lax monomorphic:
if $\mathcal{D}$ is a diagram in $\mathsf{Ord}$
and $f,g : \mathbf{C} \rightarrow \mathrm{lim}(\mathcal{D})$
are two parallel monotonic functions that satisfy $f \cdot \pi_i \leq g \cdot \pi_i$
for all limit projections $\pi_i : \mathrm{lim}(\mathcal{D}) \rightarrow \mathcal{D}_i$,
then $f \leq g$.
\end{assumption}


\section{Lattice of Theories Categories\label{sec:lattice:of:theories:categories}}

We want to build and verify the structure in the Diamond Diagram of Figure~\ref{diamond-diagram}.
For this we need to define some additional adjunctions.

\subsection{Objects\label{subsec:objects:lot}}

\subsubsection{Existence of Factorization\label{subsubsec:existence:of:factorization:lot}}

\begin{definition}
An (abstract) lattice of theories (LOT) category $\mathsf{C}$ 
is a conceptual structures category 
that is the complete category for an order-enriched fibration.
\end{definition}

Consider the composition
$(\mbox{-})^{\bullet_{g}} \cdot_{\mathsf{E}} \check{g}
= \check{g}$
for any $\mathsf{E}$-adjunction 
$g= \langle \check{g}, \hat{g} \rangle
: A_0 \rightleftharpoons A_1$ with posetal target.
The \emph{closure lift} of $g$ is the cartesian $\mathsf{E}$-morphism
$(\check{\mbox{-}})^{\bullet_{g}}
\doteq \sharp_{(\mbox{-})^{\bullet_{g}} \cdot \flat_{\check{g}}}
: \Delta(\check{g}) \rightarrow \Delta(\check{g})$.
This morphism lifts the closure,
since
$\flat_{\check{g}} \cdot_{\mathsf{E}} (\check{\mbox{-}})^{\bullet_{g}}
= (\mbox{-})^{\bullet_{g}} \cdot_{\mathsf{E}} \flat_{\check{g}}$.
It is idempotent
$(\check{\mbox{-}})^{\bullet_{g}} \cdot_{\mathsf{E}} (\check{\mbox{-}})^{\bullet_{g}}
= (\check{\mbox{-}})^{\bullet_{g}}$.
Also,
it is equivalent to the identity
$(\check{\mbox{-}})^{\bullet_{g}} \equiv 1_{\Delta(\check{g})}$,
since
$(\check{\mbox{-}})^{\bullet_{g}} \cdot_{\mathsf{E}} \sharp_{\check{g}}
= (\check{\mbox{-}})^{\bullet_{g}}$.

Since the left adjoint factors as
$\check{g} 
= (\mbox{-})^{\bullet_{g}}_{0} \cdot_{\mathsf{E}} \check{g}_{0}$
and the closure morphism factors as
$(\mbox{-})^{\bullet_{g}} 
= (\mbox{-})^{\bullet_{g}}_{0} \cdot_{\mathsf{E}} \mathrm{incl}^{g}_{0}$,
where
$\check{g}_{0}$ and $\mathrm{incl}^{g}_{0}$ are cartesian,
we have equality of the apexes
$\Delta((\mbox{-})^{\bullet_{g}}) 
= \Delta((\mbox{-})^{\bullet_{g}}_{0})
= \Delta(\check{g})$,
equality of the gaps
$\flat_{(\mbox{-})^{\bullet_{g}}} 
= \flat_{(\mbox{-})^{\bullet_{g}}_{0}}
= \flat_{\check{g}} 
: A_0 \rightarrow \Delta(\check{g})$
and the lift identities
$\sharp_{\check{g}} 
= \sharp_{(\mbox{-})^{\bullet_{g}}_{0}} \cdot_{\mathsf{E}} \check{g}_{0}
: \Delta(\check{g}) \rightarrow A_1$
and
$\sharp_{(\mbox{-})^{\bullet_{g}}} 
= \sharp_{(\mbox{-})^{\bullet_{g}}_{0}} \cdot_{\mathsf{E}} \mathrm{incl}^{g}_{0}
: \Delta(\check{g}) \rightarrow A_0$.

\paragraph{Polar Factorization through $\mathsf{ref}_{g}^\bullet$
and $\mathsf{ref}_{g}^{\bullet \propto}$.}

For any $\mathsf{E}$-adjunction
$g 
= \langle \check{g}, \hat{g} \rangle 
: A_0 \rightleftharpoons A_1$,
the closed polar factorization 
$g 
= \mathsf{ref}_{g}^\bullet \circ \mathsf{ref}_{g}^{\bullet \propto}$
is expressed in terms of 
the closed polar reflection
$\mathsf{ref}_{g}^\bullet
= \langle (\mbox{-})^{\bullet_{g}}_{0}, \mathrm{incl}^{g}_{0} \rangle 
: A_0 \rightleftharpoons \mathsf{clo}(g)$
and the closed polar coreflection
$\mathsf{ref}_{g}^{\bullet \propto}
= \langle \check{g}_{0}, \hat{g}_{0} \rangle 
: \mathsf{clo}(g) \rightleftharpoons A_1$.
Hence,
we have the properties:
\begin{flushleft}
{\small \begin{tabular}{l@{\hspace{5pt}}l@{\hspace{5pt}}l}
\begin{tabular}[t]{l}
$1_{A_0} \leq \check{g} \cdot_{\mathsf{E}} \hat{g}$
\\
$\hat{g} \cdot_{\mathsf{E}} \check{g} \leq 1_{A_1}$
\end{tabular}
&
\begin{tabular}[t]{l}
$1_{A_0} \leq (\mbox{-})^{\bullet_{g}} = (\mbox{-})^{\bullet_{g}}_{0} \cdot_{\mathsf{E}} \mathrm{incl}^{g}_{0}$
\\
$\mathrm{incl}^{g}_{0} \cdot_{\mathsf{E}} (\mbox{-})^{\bullet_{g}}_{0}
= 1_{\mathsf{clo}(g)}$
\\
$\mathrm{incl}^{g}_{0}$ is an $\mathsf{E}$-monomorphism
\\
$(\mbox{-})^{\bullet_{g}}_{0}$ is an $\mathsf{E}$-epimorphism
\\
$\mathrm{incl}^{g}_{0}$ is a cartesian $\mathsf{E}$-morphism 
\end{tabular}
&
\begin{tabular}[t]{l}
$\check{g}_{0} \cdot_{\mathsf{E}} \hat{g}_{0} = 1_{\mathsf{clo}(g)}$
\\
$\hat{g}_{0} \cdot_{\mathsf{E}} \check{g}_{0}
= (\mbox{-})^{\circ_{g}} \leq 1_{A_1}$
\\
$\check{g}_{0}$ is an $\mathsf{E}$-monomorphism
\\
$\hat{g}_{0}$ is an $\mathsf{E}$-epimorphism
\\
$\check{g}_{0}$ is a cartesian $\mathsf{E}$-morphism 
\end{tabular}
\end{tabular}}
\end{flushleft}

\paragraph{The Closure Reflection $\mathsf{clo}_{g}$.}

Consider the pair of $\mathsf{E}$-morphisms
$\flat_{\check{g}} : A_0 \rightarrow \Delta(\check{g})$
and
$\sharp_{(\mbox{-})^{\bullet_{g}}} : \Delta(\check{g}) \rightarrow A_0$.
Composing in one direction,
get the inequality
$\flat_{\check{g}} \cdot_{\mathsf{E}} \sharp_{(\mbox{-})^{\bullet_{g}}}
= \flat_{\check{g}} \cdot_{\mathsf{E}} \sharp_{(\mbox{-})^{\bullet_{g}}_{0}} \cdot_{\mathsf{E}} \mathrm{incl}^{g}_{0}
= (\mbox{-})^{\bullet_{g}}_{0} \cdot_{\mathsf{E}} \mathrm{incl}^{g}_{0}
= (\mbox{-})^{\bullet_{g}}
\geq 1_{A_0}$.
Composing in the other direction,
the identity
$\flat_{\check{g}}
 \cdot_{\mathsf{E}} \sharp_{(\mbox{-})^{\bullet_{g}}}
 \cdot_{\mathsf{E}} \flat_{\check{g}}
 \cdot_{\mathsf{E}} \sharp_{\check{g}}
= \flat_{\check{g}}
 \cdot_{\mathsf{E}} \sharp_{(\mbox{-})^{\bullet_{g}}}
 \cdot_{\mathsf{E}} \check{g}
= (\mbox{-})^{\bullet_{g}}
 \cdot_{\mathsf{E}} \check{g}
= \check{g}
= \flat_{\check{g}} \cdot_{\mathsf{E}} \sharp_{\check{g}}$
implies by right cancellation the equivalence
$\flat_{\check{g}}
 \cdot_{\mathsf{E}} \sharp_{(\mbox{-})^{\bullet_{g}}}
 \cdot_{\mathsf{E}} \flat_{\check{g}}
\equiv \flat_{\check{g}}$,
which in turn implies by left cancellation the equivalence
$\sharp_{(\mbox{-})^{\bullet_{g}}} \cdot_{\mathsf{E}} \flat_{\check{g}}
\equiv 1_{\Delta(\check{g})}$.
Hence,
the pair forms the \emph{closure reflection}
$\mathsf{clo}_{g}
= \langle \flat_{\check{g}}, \sharp_{(\mbox{-})^{\bullet_{g}}} \rangle
: A_0 \rightleftharpoons \Delta(\check{g})$.

\paragraph{The Lift Coreflection ${\mathsf{lift}}_{g}$.}

Consider the pair of $\mathsf{E}$-morphisms
$\sharp_{\check{g}} : \Delta(\check{g}) \rightarrow A_1$
and
$\delta_{\hat{g},\check{g}} : A_1 \rightarrow \Delta(\check{g})$.
Composing in one direction,
get the inequality
$\delta_{\hat{g},\check{g}} \cdot_{\mathsf{E}} \sharp_{\check{g}}
= \hat{g} \cdot_{\mathsf{E}} \check{g}
= (\mbox{-})^{\circ_{g}}
\leq 1_{A_1}$.
Composing in the other direction,
the identity
$\flat_{\check{g}}
 \cdot_{\mathsf{E}} \sharp_{\check{g}}
 \cdot_{\mathsf{E}} \delta_{\hat{g},\check{g}}
 \cdot_{\mathsf{E}} \sharp_{\check{g}}
= \check{g} \cdot_{\mathsf{E}} \hat{g} \cdot_{\mathsf{E}} \check{g}
= \check{g}
= \flat_{\check{g}} \cdot_{\mathsf{E}} \sharp_{\check{g}}$
implies by right cancellation the equivalence
$\flat_{\check{g}}
 \cdot_{\mathsf{E}} \sharp_{\check{g}}
 \cdot_{\mathsf{E}} \delta_{\hat{g},\check{g}}
\equiv \flat_{\check{g}}$,
which in turn implies by left cancellation the equivalence
$\sharp_{\check{g}} \cdot_{\mathsf{E}} \delta_{\hat{g},\check{g}}
\equiv 1_{\Delta(\check{g})}$.
Hence,
the pair forms the \emph{lift coreflection}
$\mathsf{clo}_{g}
= \langle \sharp_{\check{g}}, \delta_{\hat{g},\check{g}} \rangle
: \Delta(\check{g}) \rightleftharpoons A_1$.

\paragraph{The Lifted Closure Equivalence $\mathsf{equ}_{g}^\bullet$.}

Consider the pair of $\mathsf{E}$-morphisms
$\sharp_{(\mbox{-})^{\bullet_{g}}_{0}} 
: \Delta(\check{g}) \rightarrow \mathsf{clo}(g)$
and
$\delta_{\mathrm{incl}^{g}_{0},(\mbox{-})^{\bullet_{g}}_{0}}
: \mathsf{clo}(g) \rightarrow \Delta(\check{g})$.
Composing in one direction,
get the identity
$\delta_{\mathrm{incl}^{g}_{0},(\mbox{-})^{\bullet_{g}}_{0}}
 \cdot_{\mathsf{E}} \sharp_{(\mbox{-})^{\bullet_{g}}_{0}} 
= \mathrm{incl}^{g}_{0} \cdot_{\mathsf{E}} (\mbox{-})^{\bullet_{g}}_{0}
= 1_{\mathsf{clo}(g)}$.
Composing in the other direction,
the identity
$\flat_{\check{g}}
 \cdot_{\mathsf{E}} \sharp_{(\mbox{-})^{\bullet_{g}}_{0}}
 \cdot_{\mathsf{E}} \delta_{\mathrm{incl}^{g}_{0},(\mbox{-})^{\bullet_{g}}_{0}}
 \cdot_{\mathsf{E}} \sharp_{\check{g}}
= \flat_{\check{g}}
 \cdot_{\mathsf{E}} \sharp_{(\mbox{-})^{\bullet_{g}}_{0}}
 \cdot_{\mathsf{E}} \delta_{\mathrm{incl}^{g}_{0},(\mbox{-})^{\bullet_{g}}_{0}}
 \cdot_{\mathsf{E}} \sharp_{(\mbox{-})^{\bullet_{g}}_{0}}
 \cdot_{\mathsf{E}} \check{g}_{0}
= (\mbox{-})^{\bullet_{g}}_{0}
 \cdot_{\mathsf{E}} \mathrm{incl}^{g}_{0}
 \cdot_{\mathsf{E}} (\mbox{-})^{\bullet_{g}}_{0}
 \cdot_{\mathsf{E}} \check{g}_{0}
= (\mbox{-})^{\bullet_{g}}_{0}
 \cdot_{\mathsf{E}} \check{g}_{0}
= \check{g}
= \flat_{\check{g}}
 \cdot_{\mathsf{E}} \sharp_{\check{g}}$
implies by right cancellation the equivalence
$\flat_{\check{g}}
 \cdot_{\mathsf{E}} \sharp_{(\mbox{-})^{\bullet_{g}}_{0}}
 \cdot_{\mathsf{E}} \delta_{\mathrm{incl}^{g}_{0},(\mbox{-})^{\bullet_{g}}_{0}}
\equiv \flat_{\check{g}}$,
which in turn implies by left cancellation the equivalence
$\sharp_{(\mbox{-})^{\bullet_{g}}_{0}} \cdot_{\mathsf{E}} \delta_{\mathrm{incl}^{g}_{0},(\mbox{-})^{\bullet_{g}}_{0}}
\equiv 1_{\Delta(\check{g})}$.
Hence,
the pair forms the \emph{lifted closure equivalence}
$\mathsf{equ}_{g}^\bullet
= \langle \sharp_{(\mbox{-})^{\bullet_{g}}_{0}}, \delta_{\mathrm{incl}^{g}_{0},(\mbox{-})^{\bullet_{g}}_{0}} \rangle
: \Delta(\check{g}) \rightleftharpoons \mathsf{clo}(g)$.

\paragraph{Composing the Adjunctions.}

We have already seen that the original adjunction
is the composition of the closed polar reflection and the closed polar coreflection
\[g
= \mathsf{ref}_{g}^\bullet \circ \mathsf{ref}_{g}^{\bullet \propto}
: A_0 \rightleftharpoons \mathsf{clo}({g}) \rightleftharpoons A_1.\]
The identities
$\flat_{\check{g}}
 \cdot_{\mathsf{E}} \sharp_{(\mbox{-})^{\bullet_{g}}_{0}} 
= \flat_{(\mbox{-})^{\bullet_{g}}_{0}} 
 \cdot_{\mathsf{E}} \sharp_{(\mbox{-})^{\bullet_{g}}_{0}} 
= (\mbox{-})^{\bullet_{g}}_{0}$
and
$\delta_{\mathrm{incl}^{g}_{0},(\mbox{-})^{\bullet_{g}}_{0}}
 \cdot_{\mathsf{E}} 
\sharp_{(\mbox{-})^{\bullet_{g}}} 
= \delta_{\mathrm{incl}^{g}_{0},(\mbox{-})^{\bullet_{g}}_{0}}
 \cdot_{\mathsf{E}} \sharp_{(\mbox{-})^{\bullet_{g}}_{0}} 
 \cdot_{\mathsf{E}} \mathrm{incl}^{g}_{0}
= \mathrm{incl}^{g}_{0}
 \cdot_{\mathsf{E}} (\mbox{-})^{\bullet_{g}}_{0}
 \cdot_{\mathsf{E}} \mathrm{incl}^{g}_{0}
= \mathrm{incl}^{g}_{0}$
show that 
the closed polar reflection is the composition of the closure reflection and the lifted closure equivalence
\[\mathsf{ref}_{g}^\bullet
= \mathsf{clo}_{g} \circ {\mathsf{equ}}_{g}^\bullet
: A_0 \rightleftharpoons \Delta(\check{g}) \rightleftharpoons \mathsf{clo}({g}).\]
The identities
$\sharp_{(\mbox{-})^{\bullet_{g}}_{0}} 
 \cdot_{\mathsf{E}} \check{g}_{0}
= \sharp_{\check{g}}$
and
$\hat{g}_{0}
 \cdot_{\mathsf{E}} \delta_{\mathrm{incl}^{g}_{0},(\mbox{-})^{\bullet_{g}}_{0}}
= \hat{g}
 \cdot_{\mathsf{E}} (\mbox{-})^{\bullet_{g}}_{0}
 \cdot_{\mathsf{E}} \mathrm{incl}^{g}_{0}
 \cdot_{\mathsf{E}} \flat_{\check{g}}
= \hat{g}
 \cdot_{\mathsf{E}} (\mbox{-})^{\bullet_{g}}
 \cdot_{\mathsf{E}} \flat_{\check{g}}
= \hat{g}
 \cdot_{\mathsf{E}} \flat_{\check{g}}
= \delta_{\hat{g},\check{g}}$
show that 
the the lift coreflection is the composition of the lifted closure equivalence and the closed polar coreflection
\[{\mathsf{lift}}_{g}
= {\mathsf{equ}}_{g}^\bullet \circ \mathsf{ref}_{g}^{\bullet \propto}
: \Delta(\check{g}) \rightleftharpoons \mathsf{clo}({g}) \rightleftharpoons A_1.\]
The identities
$\delta_{\hat{g},\check{g}}
 \cdot_{\mathsf{E}} \sharp_{(\mbox{-})^{\bullet_{g}}}
= \delta_{\hat{g},\check{g}}
 \cdot_{\mathsf{E}} \sharp_{(\mbox{-})^{\bullet_{g}}_{0}}
 \cdot_{\mathsf{E}} \mathrm{incl}^{g}_{0}
= \delta_{\hat{g},\check{g}}
 \cdot_{\mathsf{E}} \sharp_{(\mbox{-})^{\bullet_{g}}_{0}}
 \cdot_{\mathsf{E}} \mathrm{incl}^{g}_{0}
 \cdot_{\mathsf{E}} (\mbox{-})^{\bullet_{g}}
= \delta_{\hat{g},\check{g}}
 \cdot_{\mathsf{E}} \sharp_{(\mbox{-})^{\bullet_{g}}_{0}}
 \cdot_{\mathsf{E}} \mathrm{incl}^{g}_{0}
 \cdot_{\mathsf{E}} \check{g}
 \cdot_{\mathsf{E}} \hat{g}
= \delta_{\hat{g},\check{g}}
 \cdot_{\mathsf{E}} \sharp_{(\mbox{-})^{\bullet_{g}}_{0}}
 \cdot_{\mathsf{E}} \check{g}_{0}
 \cdot_{\mathsf{E}} \hat{g}
= \delta_{\hat{g},\check{g}}
 \cdot_{\mathsf{E}} \sharp_{\check{g}}
 \cdot_{\mathsf{E}} \hat{g}
= (\mbox{-})^{\circ_{g}} \cdot_{\mathsf{E}} \hat{g}
= \hat{g}$
and
$\flat_{\check{g}} \cdot_{\mathsf{E}} \sharp_{\check{g}}
= \check{g}$
show that 
the original adjunction is the composition of the closure reflection and the lift coreflection
\[g
= \mathsf{clo}_{g} \circ {\mathsf{lift}}_{g}
: A_0 \rightleftharpoons \Delta(\check{g}) \rightleftharpoons A_1.\]
But this can also be computed by adjunction composition
$g
= \mathsf{ref}_{g}^\bullet
 \circ \mathsf{ref}_{g}^{\bullet \propto}
= \mathsf{clo}_{g}
 \circ {\mathsf{equ}}_{g}^\bullet 
 \circ \mathsf{ref}_{g}^{\bullet \propto}
= \mathsf{clo}_{g}
 \circ {\mathsf{lift}}_{g}$.

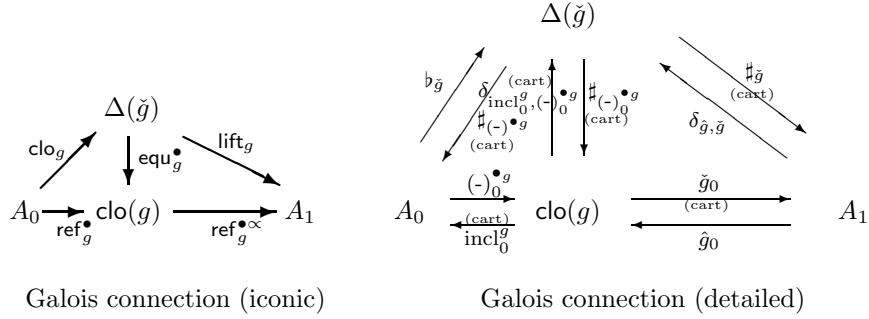
\begin{figure}
\begin{center}
\begin{tabular}{c@{\hspace{25pt}}c}

\setlength{\unitlength}{1.3pt}
\begin{picture}(90,30)(0,30)
\put(-30,15){\makebox(60,30){$A_0$}}
\put(0,15){\makebox(60,30){$\mathsf{clo}(g)$}}
\put(50,15){\makebox(60,30){$A_1$}}
\put(1,45){\makebox(60,30){$\Delta(\check{g})$}}
\put(9,9){\makebox(30,30)[l]{\footnotesize{$\mathsf{ref}_{g}^\bullet$}}}
\put(54,9){\makebox(30,30)[l]{\footnotesize{$\mathsf{ref}_{g}^{\bullet \propto}$}}}
\put(-18,33){\makebox(30,30)[r]{\footnotesize{$\mathsf{clo}_{g}$}}}
\put(56,35){\makebox(30,30)[l]{\footnotesize{${\mathsf{lift}}_{g}$}}}
\put(33,30){\makebox(30,30)[l]{\footnotesize{${\mathsf{equ}}_{g}^\bullet$}}}
\thicklines
\put(5,30){\vector(1,0){12}}
\put(43,30){\vector(1,0){30}}
\put(4.5,37){\vector(1,1){16}}
\put(30,52){\vector(0,-1){14}}
\put(46,52){\vector(2,-1){28}}
\end{picture}
&
\setlength{\unitlength}{1.2pt}
\begin{picture}(150,75)
\put(-50,-25){\makebox(100,50){$A_0$}}
\put(0,-25){\makebox(100,50){$\mathsf{clo}(g)$}}
\put(90,-25){\makebox(100,50){$A_1$}}
\put(0,37){\makebox(100,50){$\Delta(\check{g})$}}

\put(13,-3.5){\makebox(25,25){\footnotesize{$(\mbox{-})^{\bullet_{g}}_{0}$}}}
\put(12,-15){\makebox(25,25){\tiny{$(\mathrm{cart})$}}}
\put(12,-21.5){\makebox(25,25){\footnotesize{$\mathrm{incl}^{g}_{0}$}}}
\put(13,4){\vector(1,0){20}}
\put(33,-4){\vector(-1,0){20}}

\put(82,-3.5){\makebox(25,25){\footnotesize{$\check{g}_{0}$}}}
\put(81,-10.5){\makebox(25,25){\tiny{$(\mathrm{cart})$}}}
\put(82,-21.5){\makebox(25,25){\footnotesize{$\hat{g}_{0}$}}}
\put(70,4){\vector(1,0){50}}
\put(120,-4){\vector(-1,0){50}}

\put(-2,6){\begin{picture}(40,60)
\put(-5,25){\makebox(30,20){\footnotesize{$\flat_{\check{g}}$}}}
\put(17,12){\makebox(30,20){\footnotesize{$\sharp_{(\mbox{-})^{\bullet_{g}}}$}}}
\put(13,5){\makebox(30,20){\tiny{$(\mathrm{cart})$}}}
\put(5.5,16){\vector(2,3){20}}
\put(32.5,40){\vector(-2,-3){20}}
\end{picture}}
\put(35,6){\begin{picture}(30,60)
\put(-7,20){\makebox(20,20){\footnotesize{$\delta_{\mathrm{incl}^{g}_{0},(\mbox{-})^{\bullet_{g}}_{0}}$}}}
\put(-7,25){\makebox(20,20){\tiny{$(\mathrm{cart})$}}}
\put(20,20){\makebox(20,20){\footnotesize{$\sharp_{(\mbox{-})^{\bullet_{g}}_{0}}$}}}
\put(17,13){\makebox(20,20){\tiny{$(\mathrm{cart})$}}}
\put(10,12){\vector(0,1){30}}
\put(20,42){\vector(0,-1){30}}
\end{picture}}
\put(62,6){\begin{picture}(80,60)
\put(0,0){\makebox(80,60){}}
\put(35,22){\makebox(24,32){\footnotesize{$\sharp_{\check{g}}$}}}
\put(34,15){\makebox(24,32){\tiny{$(\mathrm{cart})$}}}
\put(20,6){\makebox(24,32){\footnotesize{$\delta_{\hat{g}, \check{g}}$}}}
\put(23,49){\vector(4,-3){40}}
\put(57,11){\vector(-4,3){40}}
\end{picture}}
\end{picture}
\\ \\ \\
Galois connection (iconic) & Galois connection (detailed)
\end{tabular}
\end{center}
\caption{Closure-Lift Factorization}
\label{closure-lift-factorization}
\end{figure}

\begin{figure}
\begin{center}
\begin{tabular}{c@{\hspace{25pt}}c}
\setlength{\unitlength}{1.3pt}
\begin{picture}(60,60)(0,-15)
\put(-30,15){\makebox(60,30){$A_0$}}
\put(1,15){\makebox(60,30){$\diamondsuit(g)$}}
\put(30,15){\makebox(60,30){$A_1$}}
\put(1,45){\makebox(60,30){$\mathsf{ker}(\check{g})$}}
\put(1,-15){\makebox(60,30){$\mathsf{ker}(\hat{g})$}}
\put(-18,33){\makebox(30,30)[r]{\footnotesize{$\mathsf{clo}_{g}$}}}
\put(49,33){\makebox(30,30)[l]{\footnotesize{${\mathsf{lift}}_{g}^{0}$}}}
\put(-18,-3){\makebox(30,30)[r]{\footnotesize{${\mathsf{lift}}_{g}^{1}$}}}
\put(50,-3){\makebox(30,30)[l]{\footnotesize{$\mathsf{int}_{g}$}}}
\put(9,20){\makebox(30,30)[l]{\footnotesize{$\mathsf{ref}_{g}$}}}
\put(40,20){\makebox(30,30)[l]{\footnotesize{$\mathsf{ref}_{g}^{\propto}$}}}
\put(32.5,30){\makebox(30,30)[l]{\footnotesize{${\mathsf{equ}}_{g}$}}}
\put(16,1){\makebox(30,30)[l]{\footnotesize{${\mathsf{equ}}_{g}^{\propto}$}}}
\thicklines
\put(7,30){\vector(1,0){12}}
\put(41,30){\vector(1,0){12}}
\put(30,52){\vector(0,-1){14}}
\put(30,22){\vector(0,-1){14}}
\put(4.5,37){\vector(1,1){16}}
\put(39.5,53){\vector(1,-1){16}}
\put(4.5,23){\vector(1,-1){16}}
\put(39.5,7){\vector(1,1){16}}
\end{picture}
&
\setlength{\unitlength}{1.2pt}
\begin{picture}(100,100)
\put(-8,42){\begin{picture}(16,16)
\put(2,8){\vector(1,0){0}}
\put(0,0){\oval(16,16)[tl]}
\put(0,0){\oval(16,16)[b]}
\put(-16,-7){\makebox(0,0){\footnotesize{$(\mbox{-})^{\bullet_{g}}$}}}
\end{picture}}
\put(36,92){\begin{picture}(28,16)
\put(14,14){\vector(0,-1){0}}
\put(0,16){\oval(28,16)[tr]}
\put(0,16){\oval(16,16)[l]}
\put(5,16){\makebox(0,0){\footnotesize{$(\mbox{-})^{\bullet_{g}}$}}}
\end{picture}}
\put(92,42){\begin{picture}(16,16)
\put(14,8){\vector(-1,0){0}}
\put(16,16){\oval(16,16)[br]}
\put(16,16){\oval(16,16)[t]}
\put(32,26){\makebox(0,0){\footnotesize{$(\mbox{-})^{\circ_{g}}$}}}
\end{picture}}
\put(36,-8){\begin{picture}(28,16)
\put(14,2){\vector(0,1){0}}
\put(28,0){\oval(16,16)[tr]}
\put(25,0){\oval(22,16)[b]}
\put(27,-1){\makebox(0,0){\footnotesize{$(\mbox{-})^{\circ_{g}}$}}}
\end{picture}}
\put(-50,25){\makebox(100,50){$A_0$}}
\put(0,25){\makebox(100,50){$\diamondsuit(g)$}}
\put(50,25){\makebox(100,50){$A_1$}}
\put(0,75){\makebox(100,50){$\mathsf{ker}(\check{g})$}}
\put(0,-25){\makebox(100,50){$\mathsf{ker}(\hat{g})$}}
\put(-5,50){\begin{picture}(50,50)
\put(3,21){\makebox(25,25){\footnotesize{$1_{A_0}$}}}
\put(20,9){\makebox(25,25){\footnotesize{$(\mbox{-})^{\bullet_{g}}$}}}
\put(16,3){\makebox(25,25){\tiny{$(\mathrm{iso})$}}}
\put(9,17){\vector(1,1){24}}
\put(39,35){\vector(-1,-1){24}}
\end{picture}}
\put(55,50){\begin{picture}(50,50)
\put(20,20){\makebox(25,25){\footnotesize{$\check{g}$}}}
\put(19,15){\makebox(25,25){\tiny{$(\mathrm{iso})$}}}
\put(8,8){\makebox(25,25){\footnotesize{$\hat{g}$}}}
\put(17,41){\vector(1,-1){24}}
\put(35,11){\vector(-1,1){24}}
\end{picture}}
\put(28,60){\makebox(25,25){\footnotesize{$\pi_0^{g}$}}}
\put(27,54){\makebox(25,25){\tiny{$(\mathrm{iso})$}}}
\put(48,60){\makebox(25,25){\footnotesize{$\xi_0^{g}$}}}
\put(47,54){\makebox(25,25){\tiny{$(\mathrm{iso})$}}}
\put(28.5,16){\makebox(25,25){\footnotesize{$\xi_1^{g}$}}}
\put(27.5,10){\makebox(25,25){\tiny{$(\mathrm{iso})$}}}
\put(48.5,16){\makebox(25,25){\footnotesize{$\pi_1^{g}$}}}
\put(47.5,10){\makebox(25,25){\tiny{$(\mathrm{iso})$}}}
\put(46,62){\vector(0,1){26}}
\put(54,88){\vector(0,-1){26}}
\put(46,12){\vector(0,1){26}}
\put(54,38){\vector(0,-1){26}}
\put(15,54){\vector(1,0){20}}
\put(35,46){\vector(-1,0){20}}
\put(65,54){\vector(1,0){20}}
\put(85,46){\vector(-1,0){20}}
\put(13,46.5){\makebox(25,25){\footnotesize{$\xi_0^{g}$}}}
\put(12,36){\makebox(25,25){\tiny{$(\mathrm{iso})$}}}
\put(13,28.5){\makebox(25,25){\footnotesize{$\pi_0^{g}$}}}
\put(63,46.5){\makebox(25,25){\footnotesize{$\pi_1^{g}$}}}
\put(62,39.5){\makebox(25,25){\tiny{$(\mathrm{iso})$}}}
\put(63,28.5){\makebox(25,25){\footnotesize{$\xi_1^{g}$}}}
\put(-5,0){\begin{picture}(50,50)
\put(20,19){\makebox(25,25){\footnotesize{$\check{g}$}}}
\put(8,7){\makebox(25,25){\footnotesize{$\hat{g}$}}}
\put(7,2){\makebox(25,25){\tiny{$(\mathrm{iso})$}}}
\put(17,41){\vector(1,-1){24}}
\put(35,11){\vector(-1,1){24}}
\end{picture}}
\put(55,0){\begin{picture}(50,50)
\put(6.5,19.5){\makebox(25,25){\footnotesize{$(\mbox{-})^{\circ_{g}}$}}}
\put(5.5,14.5){\makebox(25,25){\tiny{$(\mathrm{iso})$}}}
\put(21,8){\makebox(25,25){\footnotesize{$1_{A_1}$}}}
\put(9,17){\vector(1,1){24}}
\put(39,35){\vector(-1,-1){24}}
\end{picture}}
\end{picture}
\\ \\ \\
Adjunction (iconic) & Adjunction (detailed)
\\ \\
\fbox{$\begin{array}[b]{r@{\hspace{5pt}}c@{\hspace{5pt}}l}
g & \doteq & \langle \check{g} \dashv \hat{g} \rangle
: A_0 \rightleftharpoons A_1 \\ \\
\mathsf{ref}_{g} & \doteq & \langle \xi_0^{g} \dashv \pi_0^{g} \rangle : A_0 \rightleftharpoons \diamondsuit(g) \\
\mathsf{ref}_{g}^{\propto} & \doteq & \langle \pi_1^{g} \dashv \xi_1^{g} \rangle : \diamondsuit(g) \rightleftharpoons A_1 \\ \hline
\mathsf{equ}_{g} & \doteq & \langle \xi_0^{g} \dashv \pi_0^{g} \rangle : \mathsf{ker}(\check{g}) \rightleftharpoons \diamondsuit(g) \\
\mathsf{equ}_{g}^{\propto} & \doteq & \langle \pi_1^{g} \dashv \xi_1^{g} \rangle : \diamondsuit(g) \rightleftharpoons \mathsf{ker}(\hat{g}) \\ \hline
{\mathsf{lift}}^{0}_{g} & \doteq & \langle \check{g} \dashv \hat{g} \rangle : \mathsf{ker}(\check{g}) \rightleftharpoons A_1 \\
{\mathsf{lift}}^{1}_{g} & \doteq & \langle \check{g} \dashv \hat{g} \rangle : A_0 \rightleftharpoons \mathsf{ker}(\hat{g}) \\ \hline
\mathsf{clo}_{g} & \doteq & \langle 1_{A_0} \dashv (\mbox{-})^{\bullet_{g}} \rangle : A_0 \rightleftharpoons \mathsf{ker}(\check{g}) \\
\mathsf{int}_{g} & \doteq & \langle (\mbox{-})^{\bullet_{g}} \dashv 1_{A_1} \rangle : \mathsf{ker}(\hat{g}) \rightleftharpoons A_1
\end{array}$}
&
\fbox{$\begin{array}[b]{r@{\hspace{5pt}}c@{\hspace{5pt}}l}
\mathsf{clo}_{g} \circ \mathsf{equ}_{g} 
& = & \mathsf{ref}_{g} \\
\mathsf{equ}_{g} \circ \mathsf{ref}_{g}^{\propto} 
& = & {\mathsf{lift}}^{0}_{g} \\ \\
\mathsf{ref}_{g} \circ \mathsf{equ}_{g}^{\propto} 
& = & {\mathsf{lift}}^{1}_{g} \\
\mathsf{equ}_{g}^{\propto} \circ \mathsf{int}_{g} 
& = & \mathsf{ref}_{g}^{\propto} \\ \\
\mathsf{ref}_{g} \circ \mathsf{ref}_{g}^{\propto} & = & g \\
\mathsf{clo}_{g} \circ \mathsf{lift}^0_{g}        & = & g \\
\mathsf{lift}^1_{g} \circ \mathsf{int}_{g}        & = & g
\end{array}$}
\end{tabular}
\end{center}
\caption{The Diamond Diagram}
\label{diamond-diagram}
\end{figure}

\subsubsection{Uniqueness of Factorization\label{subsubsec:uniqueness:of:factorization:lot}}

\begin{lemma} [Diagonalization]
Assume that we are given a pseudo-commutative square
$e \circ s \equiv r \circ m$
\begin{center}
\setlength{\unitlength}{0.85pt}
\begin{picture}(50,50)
\put(-50,25){\makebox(100,50){$A_0$}}
\put(0,25){\makebox(100,50){$B$}}
\put(-50,-25){\makebox(100,50){$C$}}
\put(0,-25){\makebox(100,50){$A_1$}}
\put(0,30){\makebox(50,50){\footnotesize{$e$}}}
\put(0,-31){\makebox(50,50){\footnotesize{$m$}}}
\put(-30,0){\makebox(50,50){\footnotesize{$r$}}}
\put(31,0){\makebox(50,50){\footnotesize{$s$}}}
\put(-5,5){\makebox(50,50){\footnotesize{$d$}}}
\thicklines
\put(12,0){\vector(1,0){26}}
\put(12,50){\vector(1,0){26}}
\put(0,38){\vector(0,-1){26}}
\put(50,38){\vector(0,-1){26}}
\put(38,38){\vector(-1,-1){26}}
\end{picture}
\end{center}
of adjunctions,
with pseudo-reflection $e$ and pseudo-coreflection $m$.
Then there is an adjunction $d : B \rightleftharpoons C$,
unique up to equivalence,
with $e \circ d \equiv r$ and $d \circ m \equiv s$.
\end{lemma}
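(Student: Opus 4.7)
The plan is to mirror the strict Diagonalization Lemma proof above, with equalities uniformly weakened to equivalences and all cancellations justified by the pseudo-(co)reflection identities $\hat{e} \cdot \check{e} \equiv 1_B$ and $\check{m} \cdot \hat{m} \equiv 1_C$ in place of strict identities. First I would resolve the pseudo-commutative square $e \circ s \equiv r \circ m$ into its left and right adjoint components, $\check{e} \cdot \check{s} \equiv \check{r} \cdot \check{m}$ and $\hat{m} \cdot \hat{r} \equiv \hat{s} \cdot \hat{e}$, and collect the defining data: $1_{A_0} \leq \check{e} \cdot \hat{e}$ together with $\hat{e} \cdot \check{e} \equiv 1_B$ for the pseudo-reflection $e$; $\hat{m} \cdot \check{m} \leq 1_{A_1}$ together with $\check{m} \cdot \hat{m} \equiv 1_C$ for the pseudo-coreflection $m$; and the usual adjointness inequalities for $r$ and $s$.

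Next I would define $\check{d} \doteq \check{s} \cdot \hat{m}$ and $\hat{d} \doteq \hat{r} \cdot \check{e}$ and verify the alternative descriptions $\check{d} \equiv \hat{e} \cdot \check{r}$ and $\hat{d} \equiv \check{m} \cdot \hat{s}$ by pre-composing the component equivalences with $\hat{e} \cdot \check{e} \equiv 1_B$ (respectively post-composing with $\check{m} \cdot \hat{m} \equiv 1_C$). Using the two descriptions interchangeably, direct calculation gives $\hat{d} \cdot \check{d} = \check{m} \cdot \hat{s} \cdot \hat{e} \cdot \check{r} \equiv \check{m} \cdot \hat{m} \cdot \hat{r} \cdot \check{r} \equiv \hat{r} \cdot \check{r} \leq 1_C$ and dually $1_B \leq \check{s} \cdot \hat{s} \equiv \check{s} \cdot \hat{m} \cdot \hat{r} \cdot \check{e} = \check{d} \cdot \hat{d}$, so $d = \langle \check{d}, \hat{d} \rangle : B \rightleftharpoons C$ is a $\mathsf{C}$-adjunction.

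For the two pseudo-commutative conditions I would compute $\check{e} \cdot \check{d} = \check{e} \cdot \check{s} \cdot \hat{m} \equiv \check{r} \cdot \check{m} \cdot \hat{m} \equiv \check{r}$ and $\hat{d} \cdot \hat{e} = \check{m} \cdot \hat{s} \cdot \hat{e} \equiv \check{m} \cdot \hat{m} \cdot \hat{r} \equiv \hat{r}$, yielding $e \circ d \equiv r$; a symmetric calculation using $\check{d} \equiv \hat{e} \cdot \check{r}$ and $\hat{d} \equiv \check{m} \cdot \hat{s}$ yields $d \circ m \equiv s$. For uniqueness up to equivalence, suppose $b : B \rightleftharpoons C$ is another adjunction satisfying $e \circ b \equiv r$ and $b \circ m \equiv s$; then $\check{b} \equiv \hat{e} \cdot \check{e} \cdot \check{b} \equiv \hat{e} \cdot \check{r} \equiv \check{d}$ and dually $\hat{b} \equiv \hat{r} \cdot \check{e} \equiv \hat{d}$, so $b \equiv d$.

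The only substantive obstacle is bookkeeping: every strict cancellation in the original strict proof becomes a cancellation only up to $\equiv$, so the chains of equivalences must be threaded through using the fact, recorded in the order-enriched category subsection, that composition preserves equivalence in both arguments. No new categorical machinery is needed beyond the pseudo-(co)reflection definitions, and the overall shape of the argument — existence of $\check{d}$ and $\hat{d}$ by definition, the adjunction inequalities by direct chase, commutativity by substitution, and uniqueness by left/right cancellation against $e$ and $m$ — is preserved intact.
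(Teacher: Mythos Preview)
Your proposal is correct and follows essentially the same approach as the paper's proof: define $\check{d}$ and $\hat{d}$ via the forced expressions, verify the adjunction inequalities and the two pseudo-commutativity conditions by direct substitution using the pseudo-(co)reflection equivalences, and establish uniqueness by cancellation against $e$ or $m$. The only cosmetic differences are which of the two equivalent descriptions of $\check{d}$, $\hat{d}$ you invoke at each step (e.g.\ your uniqueness argument cancels on the left with $\hat{e}\cdot\check{e}\equiv 1_B$ whereas the paper cancels on the right with $\check{m}\cdot\hat{m}\equiv 1_C$), but these are interchangeable.
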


\noindent
{\bfseries Proof:} 
The necessary conditions
$\check{d}
\equiv \check{s} \cdot \hat{m}
\equiv \hat{e} \cdot \check{r}$
and
$\hat{d}
\equiv \hat{r} \cdot \check{e} 
\equiv \check{m} \cdot \hat{s}$
give the definitions.
Choose either option for left and right $\mathsf{C}$-morphism.
Existence follows from these definitions.

In more detail,
the fundamental adjointness property,
the special conditions for (co) reflections
and the above commutative diagram,
resolve into the following identities and inequalities:
$\hat{e} \cdot \check{e} \equiv 1_{B}$,
$1_{A_0} \leq \check{e} \cdot \hat{e}$,
$\hat{s} \cdot \check{s} \leq 1_{A_1}$,
$1_{B} \leq \check{s} \cdot \hat{s}$,
$\hat{r} \cdot \check{r} \leq 1_{C}$,
$1_{A_0} \leq \check{r} \cdot \hat{r}$,
$\hat{m} \cdot \check{m} \leq 1_{A_1}$,
$1_{C} \equiv \check{m} \cdot \hat{m}$,
$\check{e} \cdot \check{s} \equiv
\check{r} \cdot \check{m}$,
and
$\hat{m} \cdot \hat{r} \equiv
\hat{s} \cdot \hat{e}$.
By suitable pre- and post-composition we can prove the identities:
$\check{e} \cdot \check{s} \cdot \hat{m} 
\equiv \check{r}$,
$\check{m} \cdot \hat{s} \cdot \hat{e}
\equiv \hat{r}$,
$\hat{m} \cdot \hat{r} \cdot \check{e} 
\equiv \hat{s}$
and
$\hat{e} \cdot \check{r} \cdot \check{m} 
\equiv \check{s}$,
(and then)
$\check{s} \cdot \hat{m} 
\equiv \hat{e} \cdot \check{r}$ and
$\hat{r} \cdot \check{e} 
\equiv \check{m} \cdot \hat{s}$.

[{\bfseries Existence}]
Define the $\mathsf{C}$-morphisms
\underline{$\check{d}
\equiv 
\check{s} \cdot \hat{m} 
\equiv \hat{e} \cdot \check{r}$}
and
\underline{$\hat{d}
\equiv 
\hat{r} \cdot \check{e} 
\equiv \check{m} \cdot \hat{s}$}.
The properties
$\hat{d} \cdot \check{d}
\equiv \check{m} \cdot \hat{s} \cdot \hat{e} \cdot \check{r}
\equiv \hat{r} \cdot \check{r}
\leq 1_{C}$ and
$\check{d} \cdot \hat{d}
\equiv \check{s} \cdot \hat{m} \cdot \hat{r} \cdot \check{e}
\equiv \check{s} \cdot \hat{s}
\geq 1_{B}$ 
show that
$d = \langle \check{d}, \hat{d} \rangle : B \rightleftharpoons C$
is a $\mathsf{C}$-adjunction.
The properties
$\check{d} \cdot \check{m}
\equiv \hat{e} \cdot \check{r} \cdot \check{m}
\equiv \check{s}$
and
$\hat{m} \cdot \hat{d}
\equiv \hat{m} \cdot \hat{r} \cdot \check{e}
\equiv \hat{s}$
show that
$d$
satisfies the required identity
$d \circ m \equiv s$.
The properties
$\check{e} \cdot \check{d}
\equiv \check{e} \cdot \check{s} \cdot \hat{m}
\equiv \check{r}$
and
$\hat{d} \cdot \hat{e}
\equiv \check{m} \cdot \hat{s} \cdot \hat{e}
\equiv \hat{r}$
show that
$d$
satisfies the required identity
$e \circ d \equiv r$.

[{\bfseries Uniqueness}]
Suppose $b = \langle \check{b}, \hat{b} \rangle
: B \rightleftharpoons C$
is another $\mathsf{C}$-adjunction satisfying the require identities
$e \circ b \equiv r$
and $b \circ m \equiv s$.
These identities resolve to the identities
$\check{e} \cdot \check{b} \equiv \check{r}$,
$\hat{b} \cdot \hat{e} \equiv \hat{r}$,
$\check{b} \cdot \check{m} \equiv \check{s}$,
and
$\hat{m} \cdot \hat{b} \equiv \hat{s}$.
Hence,
$\check{b}
\equiv \check{b} \cdot \check{m} \cdot \hat{m}
\equiv \check{s} \cdot \hat{m}
\equiv \check{d}$,
$\hat{b}
\equiv \hat{b} \cdot \hat{e} \cdot \check{e}
\equiv \hat{r} \cdot \check{e}
\equiv \hat{d}$
and thus
$b \equiv d$.
\rule{2pt}{6pt}

\subsection{Morphisms\label{subsec:morphisms:lot}}

\subsubsection{Kernel Factorization\label{subsubsec:kernel:factorization}}

The \emph{left adjoint kernel adjunction}
\[\mathsf{ker}_{\check{\mathbf{f}}}
= 
\langle \check{\mathbf{f}}_0, (\mbox{-})^{\bullet_{\mathbf{h}}} {\cdot} \hat{\mathbf{f}}_0 \rangle
: \mathsf{ker}(\check{\mathbf{g}}) \rightleftharpoons \mathsf{ker}(\check{\mathbf{h}}).\]
The inequality
$\left( (\mbox{-})^{\bullet_{\mathbf{h}}} {\cdot} \hat{\mathbf{f}}_0 \right) 
\cdot \check{\mathbf{f}}_0
\cdot \check{\mathbf{h}}
= \check{\mathbf{h}} 
\cdot \hat{\mathbf{h}} \cdot \hat{\mathbf{f}}_0 
\cdot \check{\mathbf{f}}_0 \cdot \check{\mathbf{h}}
= \check{\mathbf{h}} 
\cdot \hat{\mathbf{f}}_1 \cdot \hat{\mathbf{g}} 
\cdot \check{\mathbf{g}} \cdot \check{\mathbf{f}}_1
\leq \check{\mathbf{h}} \cdot \hat{\mathbf{f}}_1 \cdot \check{\mathbf{f}}_1
\leq \check{\mathbf{h}}$
and the fact that 
$\check{\mathbf{h}} 
: \mathsf{ker}(\check{\mathbf{h}}) \rightarrow \mathbf{B}_1$ 
is a cartesian $\mathsf{C}$-morphism,
imply the inequality
$\left( (\mbox{-})^{\bullet_{\mathbf{h}}} {\cdot} \hat{\mathbf{f}}_0 \right) 
\cdot \check{\mathbf{f}}_0
\leq \mathrm{id}_{\mathsf{ker}(\check{\mathbf{h}})}$.
This together with the inequality
$\check{\mathbf{f}}_0 \cdot \left( (\mbox{-})^{\bullet_{\mathbf{h}}} {\cdot} \hat{\mathbf{f}}_0 \right)
\geq \check{\mathbf{f}}_0 \cdot \left( \mathrm{id}_{\mathsf{ker}(\check{\mathbf{h}})} {\cdot} \hat{\mathbf{f}}_0 
\right)
= \check{\mathbf{f}}_0 \cdot \hat{\mathbf{f}}_0 
\geq \mathrm{id}_{\mathsf{ker}(\check{\mathbf{g}})}$
prove the fundamental condition for the left adjoint kernel adjunction.
The \emph{right adjoint kernel adjunction}
\[\mathsf{ker}_{\hat{\mathbf{f}}}
=
\langle (\mbox{-})^{\circ_{\mathbf{g}}} {\cdot} \check{\mathbf{f}}_1, \hat{\mathbf{f}}_1 \rangle 
: \mathsf{ker}(\hat{\mathbf{g}}) \rightleftharpoons \mathsf{ker}(\hat{\mathbf{h}})\]
The inequality
$\left( (\mbox{-})^{\circ_{\mathbf{g}}} {\cdot} \check{\mathbf{f}}_1 \right) 
\cdot \hat{\mathbf{f}}_1
\cdot \hat{\mathbf{g}}
= \hat{\mathbf{g}} 
\cdot \check{\mathbf{g}} \cdot \check{\mathbf{f}}_1 
\cdot \hat{\mathbf{f}}_1 \cdot \hat{\mathbf{g}}
= \hat{\mathbf{g}} 
\cdot \check{\mathbf{f}}_0 \cdot \check{\mathbf{h}} 
\cdot \hat{\mathbf{h}} \cdot \hat{\mathbf{f}}_0
\geq \hat{\mathbf{g}} \cdot \check{\mathbf{f}}_0 \cdot \hat{\mathbf{f}}_0
\geq \hat{\mathbf{g}}$
and the fact that 
$\hat{\mathbf{g}} 
: \mathsf{ker}(\hat{\mathbf{g}}) \rightarrow \mathbf{A}_0$ 
is a cartesian $\mathsf{C}$-morphism,
imply the inequality
$\left( (\mbox{-})^{\circ_{\mathbf{g}}} {\cdot} \check{\mathbf{f}}_1 \right) 
\cdot \hat{\mathbf{f}}_1
\geq \mathrm{id}_{\mathsf{ker}(\hat{\mathbf{g}})}$.
This together with the inequality
$\hat{\mathbf{f}}_1 \cdot \left( (\mbox{-})^{\circ_{\mathbf{g}}} {\cdot} \check{\mathbf{f}}_1 \right)
\leq \hat{\mathbf{f}}_1 \cdot \left( \mathrm{id}_{\mathsf{ker}(\hat{\mathbf{g}})} {\cdot} \check{\mathbf{f}}_1 
\right)
= \hat{\mathbf{f}}_1 \cdot \check{\mathbf{f}}_1 
\leq \mathrm{id}_{\mathsf{ker}(\hat{\mathbf{h}})}$
prove the fundamental condition for the right adjoint kernel adjunction.

We want to prove the eight adjunction identities in Table~\ref{adjunction-identities}.
\begin{itemize}
\item
To prove the identity
$\mathsf{clo}_{\mathbf{g}} \circ \mathsf{ker}_{\check{\mathbf{f}}}
= \mathbf{f}_0 \circ \mathsf{clo}_{\mathbf{h}}$
note that
$\mathrm{id}_{A_0} \cdot \check{\mathbf{f}}_0 = \check{\mathbf{f}}_0 \cdot \mathrm{id}_{B_0}$
and
$\left( (\mbox{-})^{\bullet_{\mathbf{h}}} {\cdot} \hat{\mathbf{f}}_0 \right) 
\cdot (\mbox{-})^{\bullet_{\mathbf{g}}}
= (\mbox{-})^{\bullet_{\mathbf{h}}} {\cdot} \hat{\mathbf{f}}_0$.
\item
To prove the identity
$\mathsf{ker}_{\check{\mathbf{f}}} \circ {\mathsf{lift}}_{\mathbf{h}}^{\propto}
= {\mathsf{lift}}_{\mathbf{g}}^{\propto} \circ \mathbf{f}_1$
note that
$\check{\mathbf{f}}_0 \cdot \check{\mathbf{h}} = \check{\mathbf{g}} \cdot \check{\mathbf{f}}_1$
and
$\hat{\mathbf{h}}
\cdot \left( (\mbox{-})^{\bullet_{\mathbf{h}}} {\cdot} \hat{\mathbf{f}}_0 \right) 
= \hat{\mathbf{f}}_1 \cdot \hat{\mathbf{g}}$.
\item
To prove the identity
$\mathsf{lift}_{\mathbf{g}} \circ \mathsf{ker}_{\hat{\mathbf{f}}}
= \mathbf{f}_0 \circ \mathsf{lift}_{\mathbf{h}}$
note that
$\check{\mathbf{f}}_0 \cdot \check{\mathbf{h}}
= \check{\mathbf{g}} \cdot \left( (\mbox{-})^{\circ_{\mathbf{g}}} {\cdot} \check{\mathbf{f}}_1 \right)$
and
$\hat{\mathbf{h}} \cdot \hat{\mathbf{f}}_0
= \hat{\mathbf{f}}_1 \cdot \hat{\mathbf{g}}$.
\item
To prove the identity
$\mathsf{int}_{\mathbf{g}} \circ \mathbf{f}_1
= \mathsf{ker}_{\hat{\mathbf{f}}} \circ \mathsf{int}_{\mathbf{h}}$
note that
$\hat{\mathbf{f}}_1 \cdot \mathrm{id}_{A_1}
= \mathrm{id}_{B_1} \cdot \hat{\mathbf{f}}_1$
and
$(\mbox{-})^{\circ_{\mathbf{g}}} \cdot \check{\mathbf{f}}_1
= \left( (\mbox{-})^{\circ_{\mathbf{g}}} {\cdot} \check{\mathbf{f}}_1 \right)
\cdot (\mbox{-})^{\circ_{\mathbf{h}}}$.
\end{itemize}

\begin{table}
\begin{center}
\begin{tabular}{|c|c|} \cline{1-1}
\multicolumn{1}{|l|}{Given:} 
& \multicolumn{1}{l}{}
\\ \cline{1-1}
$\begin{array}[t]{r@{\hspace{10pt}=\hspace{10pt}}l}
\mathbf{g} \circ \mathbf{f}_1 
& \mathbf{f}_0 \circ \mathbf{h}
\end{array}$
& \multicolumn{1}{l}{}
\\ \cline{1-1}
\multicolumn{1}{l}{} & \multicolumn{1}{l}{} \\ \hline
\multicolumn{1}{|l|}{Inner:} & \multicolumn{1}{|l|}{Outer:} \\ \hline
$\begin{array}[t]{r@{\hspace{10pt}=\hspace{10pt}}l}
\mathsf{ref}_{\mathbf{g}} \circ \diamondsuit_{\mathbf{f}}
& \mathbf{f}_0 \circ \mathsf{ref}_{\mathbf{h}}
\\
\diamondsuit_{\mathbf{f}} \circ \mathsf{ref}_{\mathbf{h}}^{\propto}
& \mathsf{ref}_{\mathbf{g}}^{\propto} \circ \mathbf{f}_1
\\
{\mathsf{equ}}_{\mathbf{g}} \circ \diamondsuit_{\mathbf{f}}
& \mathsf{ker}_{\check{\mathbf{f}}} \circ {\mathsf{equ}}_{\mathbf{h}}
\\
\diamondsuit_{\mathbf{f}} \circ {\mathsf{equ}}_{\mathbf{h}}^{\propto}
& {\mathsf{equ}}_{\mathbf{g}}^{\propto} \circ \mathsf{ker}_{\hat{\mathbf{f}}}
\end{array}$
&
$\begin{array}[t]{r@{\hspace{10pt}=\hspace{10pt}}l}
\mathsf{clo}_{\mathbf{g}} \circ \mathsf{ker}_{\check{\mathbf{f}}}
& \mathbf{f}_0 \circ \mathsf{clo}_{\mathbf{h}}
\\
\mathsf{ker}_{\hat{\mathbf{f}}} \circ \mathsf{int}_{\mathbf{h}}
& \mathsf{int}_{\mathbf{g}} \circ \mathbf{f}_1
\\
\mathsf{ker}_{\check{\mathbf{f}}} \circ {\mathsf{lift}}_{\mathbf{h}}^{\propto}
& {\mathsf{lift}}_{\mathbf{g}}^{\propto} \circ \mathbf{f}_1
\\
\mathsf{lift}_{\mathbf{g}} \circ \mathsf{ker}_{\hat{\mathbf{f}}}
& \mathbf{f}_0 \circ \mathsf{lift}_{\mathbf{h}}
\end{array}$
\\ \hline
\end{tabular}
\end{center}
\caption{Adjunction Identities}
\label{adjunction-identities}
\end{table}

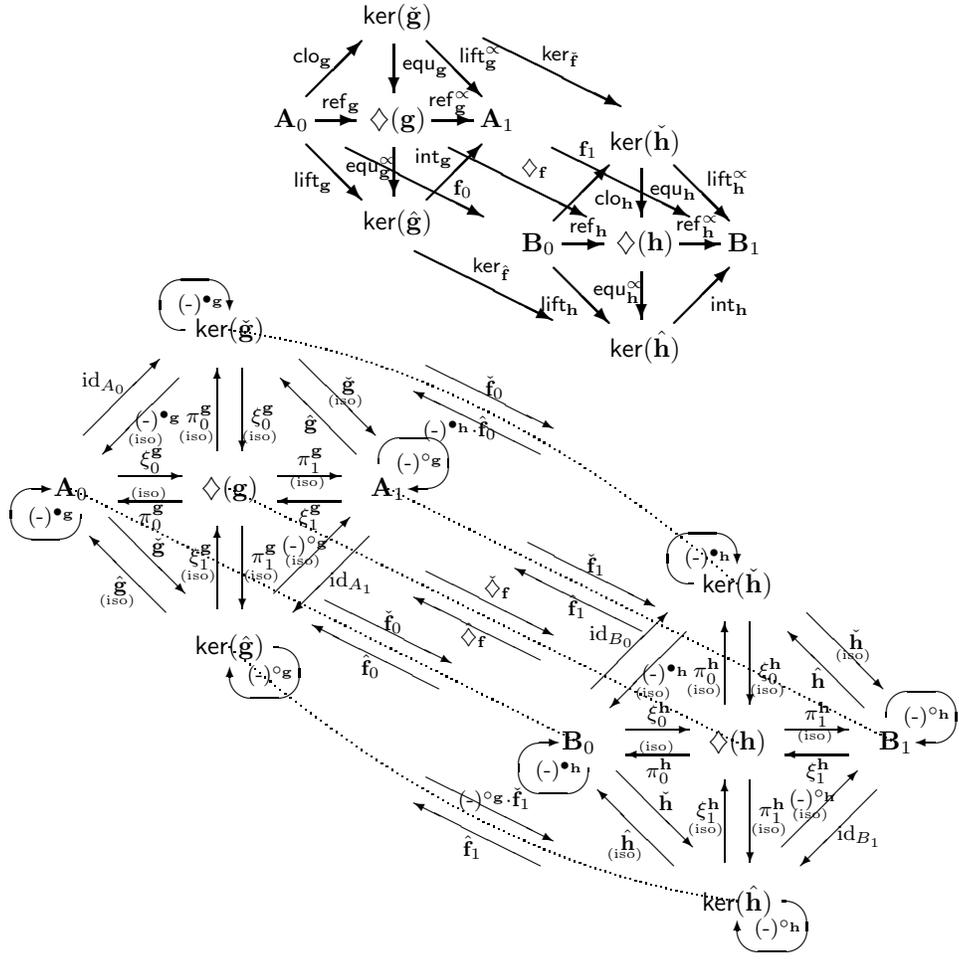
\begin{figure}
\begin{center}

\begin{tabular}{l}
\\ \\ \\ \\ \\
\setlength{\unitlength}{1.3pt}
\begin{picture}(160,130)(-50,-50)
\thicklines
\put(70,77){\begin{picture}(40,20)
\put(10,2.25){\makebox(25,25){\footnotesize{$\mathsf{ker}_{\check{\mathbf{f}}}$}}}
\put(0,20){\vector(2,-1){40}}
\end{picture}}
\put(90,45){\begin{picture}(40,20)
\put(-2,7){\makebox(25,25){\footnotesize{$\mathbf{f}_1$}}}
\put(0,20){\vector(2,-1){40}}
\end{picture}}
\put(60,45){\begin{picture}(40,20)
\put(12.5,1.5){\makebox(25,25){\footnotesize{$\diamondsuit_{\mathbf{f}}$}}}
\put(0,20){\vector(2,-1){40}}
\end{picture}}
\put(30,45){\begin{picture}(40,20)
\put(21.5,-4.5){\makebox(25,25){\footnotesize{$\mathbf{f}_0$}}}
\put(0,20){\vector(2,-1){40}}
\end{picture}}
\put(50,15){\begin{picture}(40,20)
\put(10,2.25){\makebox(25,25){\footnotesize{$\mathsf{ker}_{\hat{\mathbf{f}}}$}}}
\put(0,20){\vector(2,-1){40}}
\end{picture}}
\put(14,43){\begin{picture}(60,60)
\put(-30,15){\makebox(60,30){$\mathbf{A}_0$}}
\put(1,15){\makebox(60,30){$\diamondsuit(\mathbf{g})$}}
\put(30,15){\makebox(60,30){$\mathbf{A}_1$}}
\put(1,45){\makebox(60,30){$\mathsf{ker}(\check{\mathbf{g}})$}}
\put(1,-15){\makebox(60,30){$\mathsf{ker}(\hat{\mathbf{g}})$}}
\put(-18,33){\makebox(30,30)[r]{\footnotesize{$\mathsf{clo}_{\mathbf{g}}$}}}
\put(49,33){\makebox(30,30)[l]{\footnotesize{${\mathsf{lift}}_{\mathbf{g}}^{\propto}$}}}
\put(-18,-3){\makebox(30,30)[r]{\footnotesize{${\mathsf{lift}}_{\mathbf{g}}$}}}
\put(36,4){\makebox(30,30)[l]{\footnotesize{$\mathsf{int}_{\mathbf{g}}$}}}
\put(9,20){\makebox(30,30)[l]{\footnotesize{$\mathsf{ref}_{\mathbf{g}}$}}}
\put(40,20){\makebox(30,30)[l]{\footnotesize{$\mathsf{ref}_{\mathbf{g}}^{\propto}$}}}
\put(32.5,30){\makebox(30,30)[l]{\footnotesize{${\mathsf{equ}}_{\mathbf{g}}$}}}
\put(16,1){\makebox(30,30)[l]{\footnotesize{${\mathsf{equ}}_{\mathbf{g}}^{\propto}$}}}
\thicklines
\put(7,30){\vector(1,0){12}}
\put(41,30){\vector(1,0){12}}
\put(30,52){\vector(0,-1){14}}
\put(30,22){\vector(0,-1){14}}
\put(4.5,37){\vector(1,1){16}}
\put(39.5,53){\vector(1,-1){16}}
\put(4.5,23){\vector(1,-1){16}}
\put(39.5,7){\vector(1,1){16}}
\end{picture}}
\put(86,7){\begin{picture}(60,60)
\put(-30,15){\makebox(60,30){$\mathbf{B}_0$}}
\put(1,15){\makebox(60,30){$\diamondsuit(\mathbf{h})$}}
\put(30,15){\makebox(60,30){$\mathbf{B}_1$}}
\put(1,45){\makebox(60,30){$\mathsf{ker}(\check{\mathbf{h}})$}}
\put(1,-15){\makebox(60,30){$\mathsf{ker}(\hat{\mathbf{h}})$}}
\put(-2,28){\makebox(30,30)[r]{\footnotesize{$\mathsf{clo}_{\mathbf{h}}$}}}
\put(49,33){\makebox(30,30)[l]{\footnotesize{${\mathsf{lift}}_{\mathbf{h}}^{\propto}$}}}
\put(-18,-3){\makebox(30,30)[r]{\footnotesize{${\mathsf{lift}}_{\mathbf{h}}$}}}
\put(50,-3){\makebox(30,30)[l]{\footnotesize{$\mathsf{int}_{\mathbf{h}}$}}}
\put(9,20){\makebox(30,30)[l]{\footnotesize{$\mathsf{ref}_{\mathbf{h}}$}}}
\put(40,20){\makebox(30,30)[l]{\footnotesize{$\mathsf{ref}_{\mathbf{h}}^{\propto}$}}}
\put(32.5,30){\makebox(30,30)[l]{\footnotesize{${\mathsf{equ}}_{\mathbf{h}}$}}}
\put(16,1){\makebox(30,30)[l]{\footnotesize{${\mathsf{equ}}_{\mathbf{h}}^{\propto}$}}}
\thicklines
\put(7,30){\vector(1,0){12}}
\put(41,30){\vector(1,0){12}}
\put(30,52){\vector(0,-1){14}}
\put(30,22){\vector(0,-1){14}}
\put(4.5,37){\vector(1,1){16}}
\put(39.5,53){\vector(1,-1){16}}
\put(4.5,23){\vector(1,-1){16}}
\put(39.5,7){\vector(1,1){16}}
\end{picture}}
\end{picture}

\\

\setlength{\unitlength}{1.2pt}
\begin{picture}(260,130)(0,-20)

\put(0,80){\begin{picture}(100,100)
\put(-8,42){\begin{picture}(16,16)
\put(2,8){\vector(1,0){0}}
\put(0,0){\oval(22,16)[tl]}
\put(0,0){\oval(22,16)[b]}
\put(2,-1){\makebox(0,0){\footnotesize{$(\mbox{-})^{\bullet_{\mathbf{g}}}$}}}
\end{picture}}
\put(36,92){\begin{picture}(28,16)
\put(14,14){\vector(0,-1){0}}
\put(0,16){\oval(28,16)[tr]}
\put(0,16){\oval(16,16)[l]}
\put(5,16){\makebox(0,0){\footnotesize{$(\mbox{-})^{\bullet_{\mathbf{g}}}$}}}
\end{picture}}
\put(92,42){\begin{picture}(28,16)
\put(14,8){\vector(-1,0){0}}
\put(16,16){\oval(22,16)[br]}
\put(16,16){\oval(22,16)[t]}
\put(18,16){\makebox(0,0){\footnotesize{$(\mbox{-})^{\circ_{\mathbf{g}}}$}}}
\end{picture}}
\put(36,-8){\begin{picture}(28,16)
\put(14,2){\vector(0,1){0}}
\put(28,0){\oval(16,16)[tr]}
\put(25,0){\oval(22,16)[b]}
\put(27,-1){\makebox(0,0){\footnotesize{$(\mbox{-})^{\circ_{\mathbf{g}}}$}}}
\end{picture}}
\put(-50,25){\makebox(100,50){$\mathbf{A}_0$}}
\put(0,25){\makebox(100,50){$\diamondsuit(\mathbf{g})$}}
\put(50,25){\makebox(100,50){$\mathbf{A}_1$}}
\put(0,75){\makebox(100,50){$\mathsf{ker}(\check{\mathbf{g}})$}}
\put(0,-25){\makebox(100,50){$\mathsf{ker}(\hat{\mathbf{g}})$}}
\put(-5,50){\begin{picture}(50,50)
\put(3,21){\makebox(25,25){\footnotesize{$\mathrm{id}_{A_0}$}}}
\put(20,9){\makebox(25,25){\footnotesize{$(\mbox{-})^{\bullet_{\mathbf{g}}}$}}}
\put(16,3){\makebox(25,25){\tiny{$(\mathrm{iso})$}}}
\put(9,17){\vector(1,1){24}}
\put(39,35){\vector(-1,-1){24}}
\end{picture}}
\put(55,50){\begin{picture}(50,50)
\put(20,20){\makebox(25,25){\footnotesize{$\check{\mathbf{g}}$}}}
\put(19,15){\makebox(25,25){\tiny{$(\mathrm{iso})$}}}
\put(8,8){\makebox(25,25){\footnotesize{$\hat{\mathbf{g}}$}}}
\put(17,41){\vector(1,-1){24}}
\put(35,11){\vector(-1,1){24}}
\end{picture}}
\put(28,60){\makebox(25,25){\footnotesize{$\pi_0^{\mathbf{g}}$}}}
\put(27,54){\makebox(25,25){\tiny{$(\mathrm{iso})$}}}
\put(48,60){\makebox(25,25){\footnotesize{$\xi_0^{\mathbf{g}}$}}}
\put(47,54){\makebox(25,25){\tiny{$(\mathrm{iso})$}}}
\put(28.5,16){\makebox(25,25){\footnotesize{$\xi_1^{\mathbf{g}}$}}}
\put(27.5,10){\makebox(25,25){\tiny{$(\mathrm{iso})$}}}
\put(48.5,16){\makebox(25,25){\footnotesize{$\pi_1^{\mathbf{g}}$}}}
\put(47.5,10){\makebox(25,25){\tiny{$(\mathrm{iso})$}}}
\put(46,62){\vector(0,1){26}}
\put(54,88){\vector(0,-1){26}}
\put(46,12){\vector(0,1){26}}
\put(54,38){\vector(0,-1){26}}
\put(15,54){\vector(1,0){20}}
\put(35,46){\vector(-1,0){20}}
\put(65,54){\vector(1,0){20}}
\put(85,46){\vector(-1,0){20}}
\put(13,46.5){\makebox(25,25){\footnotesize{$\xi_0^{\mathbf{g}}$}}}
\put(12,36){\makebox(25,25){\tiny{$(\mathrm{iso})$}}}
\put(13,28.5){\makebox(25,25){\footnotesize{$\pi_0^{\mathbf{g}}$}}}
\put(63,46.5){\makebox(25,25){\footnotesize{$\pi_1^{\mathbf{g}}$}}}
\put(62,39.5){\makebox(25,25){\tiny{$(\mathrm{iso})$}}}
\put(63,28.5){\makebox(25,25){\footnotesize{$\xi_1^{\mathbf{g}}$}}}
\put(-5,0){\begin{picture}(50,50)
\put(20,19){\makebox(25,25){\footnotesize{$\check{\mathbf{g}}$}}}
\put(8,7){\makebox(25,25){\footnotesize{$\hat{\mathbf{g}}$}}}
\put(7,2){\makebox(25,25){\tiny{$(\mathrm{iso})$}}}
\put(17,41){\vector(1,-1){24}}
\put(35,11){\vector(-1,1){24}}
\end{picture}}
\put(55,0){\begin{picture}(50,50)
\put(6.5,19.5){\makebox(25,25){\footnotesize{$(\mbox{-})^{\circ_{\mathbf{g}}}$}}}
\put(5.5,14.5){\makebox(25,25){\tiny{$(\mathrm{iso})$}}}
\put(21,8){\makebox(25,25){\footnotesize{$\mathrm{id}_{A_1}$}}}
\put(9,17){\vector(1,1){24}}
\put(39,35){\vector(-1,-1){24}}
\end{picture}}
\end{picture}}

\put(160,0){\begin{picture}(100,100)
\put(-8,42){\begin{picture}(16,16)
\put(2,8){\vector(1,0){0}}
\put(0,0){\oval(22,16)[tl]}
\put(0,0){\oval(22,16)[b]}
\put(2,-1){\makebox(0,0){\footnotesize{$(\mbox{-})^{\bullet_{\mathbf{h}}}$}}}
\end{picture}}
\put(36,92){\begin{picture}(28,16)
\put(14,14){\vector(0,-1){0}}
\put(0,16){\oval(28,16)[tr]}
\put(0,16){\oval(16,16)[l]}
\put(5,16){\makebox(0,0){\footnotesize{$(\mbox{-})^{\bullet_{\mathbf{h}}}$}}}
\end{picture}}
\put(92,42){\begin{picture}(28,16)
\put(14,8){\vector(-1,0){0}}
\put(16,16){\oval(22,16)[br]}
\put(16,16){\oval(22,16)[t]}
\put(18,16){\makebox(0,0){\footnotesize{$(\mbox{-})^{\circ_{\mathbf{h}}}$}}}
\end{picture}}
\put(36,-8){\begin{picture}(28,16)
\put(14,2){\vector(0,1){0}}
\put(28,0){\oval(16,16)[tr]}
\put(25,0){\oval(22,16)[b]}
\put(27,-1){\makebox(0,0){\footnotesize{$(\mbox{-})^{\circ_{\mathbf{h}}}$}}}
\end{picture}}
\put(-50,25){\makebox(100,50){$\mathbf{B}_0$}}
\put(0,25){\makebox(100,50){$\diamondsuit(\mathbf{h})$}}
\put(50,25){\makebox(100,50){$\mathbf{B}_1$}}
\put(0,75){\makebox(100,50){$\mathsf{ker}(\check{\mathbf{h}})$}}
\put(0,-25){\makebox(100,50){$\mathsf{ker}(\hat{\mathbf{h}})$}}
\put(-5,50){\begin{picture}(50,50)
\put(3,21){\makebox(25,25){\footnotesize{$\mathrm{id}_{B_0}$}}}
\put(20,9){\makebox(25,25){\footnotesize{$(\mbox{-})^{\bullet_{\mathbf{h}}}$}}}
\put(16,3){\makebox(25,25){\tiny{$(\mathrm{iso})$}}}
\put(9,17){\vector(1,1){24}}
\put(39,35){\vector(-1,-1){24}}
\end{picture}}
\put(55,50){\begin{picture}(50,50)
\put(20,20){\makebox(25,25){\footnotesize{$\check{\mathbf{h}}$}}}
\put(19,15){\makebox(25,25){\tiny{$(\mathrm{iso})$}}}
\put(8,8){\makebox(25,25){\footnotesize{$\hat{\mathbf{h}}$}}}
\put(17,41){\vector(1,-1){24}}
\put(35,11){\vector(-1,1){24}}
\end{picture}}
\put(28,60){\makebox(25,25){\footnotesize{$\pi_0^{\mathbf{h}}$}}}
\put(27,54){\makebox(25,25){\tiny{$(\mathrm{iso})$}}}
\put(48,60){\makebox(25,25){\footnotesize{$\xi_0^{\mathbf{h}}$}}}
\put(47,54){\makebox(25,25){\tiny{$(\mathrm{iso})$}}}
\put(28.5,16){\makebox(25,25){\footnotesize{$\xi_1^{\mathbf{h}}$}}}
\put(27.5,10){\makebox(25,25){\tiny{$(\mathrm{iso})$}}}
\put(48.5,16){\makebox(25,25){\footnotesize{$\pi_1^{\mathbf{h}}$}}}
\put(47.5,10){\makebox(25,25){\tiny{$(\mathrm{iso})$}}}
\put(46,62){\vector(0,1){26}}
\put(54,88){\vector(0,-1){26}}
\put(46,12){\vector(0,1){26}}
\put(54,38){\vector(0,-1){26}}
\put(15,54){\vector(1,0){20}}
\put(35,46){\vector(-1,0){20}}
\put(65,54){\vector(1,0){20}}
\put(85,46){\vector(-1,0){20}}
\put(13,46.5){\makebox(25,25){\footnotesize{$\xi_0^{\mathbf{h}}$}}}
\put(12,36){\makebox(25,25){\tiny{$(\mathrm{iso})$}}}
\put(13,28.5){\makebox(25,25){\footnotesize{$\pi_0^{\mathbf{h}}$}}}
\put(63,46.5){\makebox(25,25){\footnotesize{$\pi_1^{\mathbf{h}}$}}}
\put(62,39.5){\makebox(25,25){\tiny{$(\mathrm{iso})$}}}
\put(63,28.5){\makebox(25,25){\footnotesize{$\xi_1^{\mathbf{h}}$}}}
\put(-5,0){\begin{picture}(50,50)
\put(20,19){\makebox(25,25){\footnotesize{$\check{\mathbf{h}}$}}}
\put(8,7){\makebox(25,25){\footnotesize{$\hat{\mathbf{h}}$}}}
\put(7,2){\makebox(25,25){\tiny{$(\mathrm{iso})$}}}
\put(17,41){\vector(1,-1){24}}
\put(35,11){\vector(-1,1){24}}
\end{picture}}
\put(55,0){\begin{picture}(50,50)
\put(6.5,19.5){\makebox(25,25){\footnotesize{$(\mbox{-})^{\circ_{\mathbf{h}}}$}}}
\put(5.5,14.5){\makebox(25,25){\tiny{$(\mathrm{iso})$}}}
\put(21,8){\makebox(25,25){\footnotesize{$\mathrm{id}_{B_1}$}}}
\put(9,17){\vector(1,1){24}}
\put(39,35){\vector(-1,-1){24}}
\end{picture}}
\end{picture}}

\qbezier[100](50,180)(130,170)(210,100)
\qbezier[100](50,130)(130,90)(210,50)
\qbezier[100](50,80)(130,10)(210,0)
\qbezier[100](0,130)(80,90)(160,50)
\qbezier[100](100,130)(180,90)(260,50)

\put(110,140){\begin{picture}(40,30)
\put(2,29){\vector(2,-1){40}}
\put(38,1){\vector(-2,1){40}}
\put(11,9.5){\makebox(25,25){\footnotesize{$\check{\mathbf{f}}_0$}}}
\put(0,-3.5){\makebox(25,25){\footnotesize{$(\mbox{-})^{\bullet_{\mathbf{h}}} 
{\cdot} \hat{\mathbf{f}}_0$}}}
\end{picture}}

\put(142,84){\begin{picture}(40,30)
\put(2,29){\vector(2,-1){40}}
\put(38,1){\vector(-2,1){40}}
\put(11,9.5){\makebox(25,25){\footnotesize{$\check{\mathbf{f}}_1$}}}
\put(5,-3.5){\makebox(25,25){\footnotesize{$\hat{\mathbf{f}}_1$}}}
\end{picture}}

\put(110,75){\begin{picture}(40,30)
\put(2,29){\vector(2,-1){40}}
\put(38,1){\vector(-2,1){40}}
\put(12,11.5){\makebox(25,25){\footnotesize{$\check{\diamondsuit}_{\mathbf{f}}$}}}
\put(4.5,-3.5){\makebox(25,25){\footnotesize{$\hat{\diamondsuit}_{\mathbf{f}}$}}}
\end{picture}}

\put(78,66){\begin{picture}(40,30)
\put(2,29){\vector(2,-1){40}}
\put(38,1){\vector(-2,1){40}}
\put(11,9.5){\makebox(25,25){\footnotesize{$\check{\mathbf{f}}_0$}}}
\put(4,-4.5){\makebox(25,25){\footnotesize{$\hat{\mathbf{f}}_0$}}}
\end{picture}}

\put(110,10){\begin{picture}(40,30)
\put(2,29){\vector(2,-1){40}}
\put(38,1){\vector(-2,1){40}}
\put(11,9.5){\makebox(25,25){\footnotesize{$(\mbox{-})^{\circ_{\mathbf{g}}} {\cdot} \check{\mathbf{f}}_1$}}}
\put(4,-5.5){\makebox(25,25){\footnotesize{$\hat{\mathbf{f}}_1$}}}
\end{picture}}

\end{picture}

\\

\end{tabular}
\end{center}
\caption{The Diamond Diagram}
\label{diamond-diagram-morphisms}
\end{figure}

\paragraph{Typing.}

\begin{center}
\begin{tabular}{c@{\hspace{25pt}}c}
\fbox{$\begin{array}[b]{r@{\hspace{5pt}}c@{\hspace{5pt}}l}
\mathbf{g} & \doteq & \langle \check{\mathbf{g}} \dashv \hat{\mathbf{g}} \rangle
: \mathbf{A}_0 \rightleftharpoons \mathbf{A}_1 \\
\\
\mathsf{ref}_{\mathbf{g}} & \doteq & \langle \xi_0^{\mathbf{g}} \dashv \pi_0^{\mathbf{g}} \rangle : \mathbf{A}_0 \rightleftharpoons \diamondsuit(\mathbf{g}) \\
\mathsf{ref}_{\mathbf{g}}^{\propto} & \doteq & \langle \pi_1^{\mathbf{g}} \dashv \xi_1^{\mathbf{g}} \rangle : \diamondsuit(\mathbf{g}) \rightleftharpoons \mathbf{A}_1 \\
\hline
\mathsf{equ}_{\mathbf{g}} & \doteq & \langle \xi_0^{\mathbf{g}} \dashv \pi_0^{\mathbf{g}} \rangle : \mathsf{ker}(\check{\mathbf{g}}) \rightleftharpoons \diamondsuit(\mathbf{g}) \\
\mathsf{equ}_{\mathbf{g}}^{\propto} & \doteq & \langle \pi_1^{\mathbf{g}} \dashv \xi_1^{\mathbf{g}} \rangle : \diamondsuit(\mathbf{g}) \rightleftharpoons \mathsf{ker}(\hat{\mathbf{g}}) \\
\hline
{\mathsf{lift}}^{\propto}_{\mathbf{g}} & \doteq & \langle \check{\mathbf{g}} \dashv \hat{\mathbf{g}} \rangle : \mathsf{ker}(\check{\mathbf{g}}) \rightleftharpoons \mathbf{A}_1 \\
{\mathsf{lift}}_{\mathbf{g}} & \doteq & \langle \check{\mathbf{g}} \dashv \hat{\mathbf{g}} \rangle : \mathbf{A}_0 \rightleftharpoons \mathsf{ker}(\hat{\mathbf{g}}) \\
\hline
\mathsf{clo}_{\mathbf{g}} & \doteq & \langle \mathrm{id}_{A} \dashv {\mathbf{g}}^{\bullet} \rangle : \mathbf{A}_0 \rightleftharpoons \mathsf{ker}(\check{\mathbf{g}}) \\
\mathsf{int}_{\mathbf{g}} & \doteq & \langle {\mathbf{g}}^{\circ} \dashv \mathrm{id}_{B} \rangle : \mathsf{ker}(\hat{\mathbf{g}}) \rightleftharpoons \mathbf{A}_1
\end{array}$}
&
\fbox{$\begin{array}[b]{r@{\hspace{5pt}}c@{\hspace{5pt}}l}
\mathsf{clo}_{\mathbf{g}} \circ \mathsf{equ}_{\mathbf{g}} & = & \mathsf{ref}_{\mathbf{g}} \\
\mathsf{equ}_{\mathbf{g}} \circ \mathsf{ref}_{\mathbf{g}}^{\propto} & = & {\mathsf{lift}}^{\propto}_{\mathbf{g}} \\
\\
\mathsf{ref}_{\mathbf{g}} \circ \mathsf{equ}_{\mathbf{g}}^{\propto} & = & {\mathsf{lift}}_{\mathbf{g}} \\
\mathsf{equ}_{\mathbf{g}}^{\propto} \circ \mathsf{int}_{\mathbf{g}} & = & \mathsf{ref}_{\mathbf{g}}^{\propto} \\
\\
\mathsf{ref}_{\mathbf{g}} \circ \mathsf{ref}_{\mathbf{g}}^{\propto} & = & \mathbf{g} \\
\mathsf{clo}_{\mathbf{g}} \circ \mathsf{lift}^{\propto}_{\mathbf{g}} & = & \mathbf{g} \\
\mathsf{lift}_{\mathbf{g}} \circ \mathsf{int}_{\mathbf{g}} & = & \mathbf{g}
\end{array}$}
\end{tabular}
\end{center}


\section{Outline}

A previous paper discussed the factorization of truth in the specific 
context of order adjunctions in a topos $\mathcal{B}$ and the underlying fibration
${|\mbox{-}|}_{\mathcal{B}} : \mathsf{Ord}(\mathcal{B}) \rightarrow \mathcal{B}$.
This paper extends that discussion to a abstract theory of truth and its factorization. 
More specifically, it lays down axioms necessary for the factorization of truth, by categorically characterizing $\mathsf{Ord}(\mathcal{B})$ and the underlying fibration.

\begin{enumerate}
\item
Define a pseudo-factoriation system in an order-enriched category.
\begin{itemize}
\item include strict factorization within pseudo-factorization
\item discuss factorization duality; that is, how involutions and factorization systems interact
\end{itemize}
\item Define an (abstract) conceptual structures (CS) category.
\begin{itemize}
\item define polar or CS factorization here
\end{itemize}
\item Define an (abstract) lattice of theories (LOT) category.
\begin{itemize}
\item define cartesian or LOT factorization here
\item LOT factorization centers on a ``diamond diagram''
\end{itemize}
\end{enumerate}

\cite{kent:dcs}

\newpage

\end{document}